\newtheorem{proposition}{Proposition}
\newtheorem{proposition?}{Proposition?}
\newtheorem{theorem}{Theorem}
\newtheorem{lemma}{Lemma}
\newtheorem{corollary}{Corollary}
\newtheorem{conjecture}{Conjecture}
\theoremstyle{definition}
\newtheorem{definition}{Definition}
\newcommand{\ket}[1]{|#1\rangle} 
\newcommand{\bra}[1]{\langle#1|} 
\newcommand{\kb}[2]{|#1\rangle\langle#2|} 
\newcommand{\no}[1]{\left\|#1\right\|} 
\newcommand{\tr}[1]{\textrm{tr}\left[#1\right]} 
\newcommand{\id}{\mathbbm{1}} 
\newcommand{\meo}{\Omega} 
\newcommand{\var}{\textrm{Var}} 
\newcommand{\ve}{\mathbf{e}} 
\newcommand{\F}{\mathsf{F}}
\newcommand{\G}{\mathsf{G}}
\renewcommand{\P}{\mathsf{P}}
\newcommand{\M}{\mathsf{M}}
\begin{document}

\title{Optimal Fermionic Joint Measurements for Estimating Non-Commuting Majorana Observables}
\author{Daniel McNulty}
\affiliation{Dipartimento di Fisica, Università di Bari, 70126 Bari, Italy}
\affiliation{Center for Theoretical Physics, Polish Academy of Sciences, Al. Lotników 32/46, 02-668 Warszawa, Poland}
\orcid{0000-0002-3276-0587}
\author{Susane Calegari}
\affiliation{Center for Theoretical Physics, Polish Academy of Sciences, Al. Lotników 32/46, 02-668 Warszawa, Poland}
\author{Michał Oszmaniec}
\affiliation{Center for Theoretical Physics, Polish Academy of Sciences, Al. Lotników 32/46, 02-668 Warszawa, Poland}
\affiliation{NASK National Research Institute, ul. Kolska 12, 01-045 Warszawa, Poland}
\orcid{0000-0002-4946-6835}

\begin{abstract}
An important class of fermionic observables, relevant in tasks such as fermionic partial tomography and estimating energy levels of chemical Hamiltonians, are the binary measurements obtained from the product of anti-commuting Majorana operators. In this work, we investigate efficient estimation strategies of these observables based on a joint measurement which, after classical post-processing, yields all sufficiently unsharp (noisy) Majorana observables of even-degree. By exploiting the symmetry properties of the Majorana observables, as described by the braid group, we show that the incompatibility robustness, i.e., the minimal classical noise necessary for joint measurability, relates to the spectral properties of the Sachdev-Ye-Kitaev (SYK) model. In particular, we show that for an $n$ mode fermionic system, the incompatibility robustness of all degree--$2k$ Majorana observables satisfies $\Theta(n^{-k/2})$ for $k\leq 5$. Furthermore, we present a joint measurement scheme achieving the asymptotically optimal noise, implemented by a small number of fermionic Gaussian unitaries and sampling from the set of all Majorana monomials. Our joint measurement, which can be performed via a randomization over projective measurements, provides rigorous performance guarantees for estimating fermionic observables comparable with fermionic classical shadows.
\end{abstract}

\maketitle
 
 \section{Introduction} 

A fundamental non-classical feature of quantum mechanics, and a key ingredient in practical applications, is the inability to jointly measure non-commuting (Hermitian) observables \cite{busch,heinosaari08}. 
In the general setting of positive operator-valued measures (POVMs), incompatibility (or lack of joint measurability) and non-commutativity are related but inequivalent notions. If one adds a sufficient amount of (e.g., depolarising) noise to the set of observables, at some threshold, known as the incompatibility robustness \cite{heinosaari15}, the non-commuting measurements become compatible and can be implemented by a parent POVM and classical post-processing. 
This leads naturally to a resource theoretic formulation of incompatibility \cite{heinosaari15,chitambar19}, with its characterization and quantification connected to both foundational topics \cite{wolf09,kiukas19,xu19}, and practical tasks in quantum information theory \cite{quintino14,uola15,skrzypczyk19}.

Incompatibility is also relevant in the performance of estimation tasks on present-day quantum computers \cite{kandala17}. 
For example, the inability to simultaneously measure non-commuting observables presents a significant hurdle for quantum variational algorithms \cite{bharti22,preskill18,mcclean16,garcia21}, which require repeated estimations of expectation values of many-body Hamiltonians. 
To reduce the cost of performing individual measurements, and to improve scalability, several strategies have been developed. 
One method involves finding minimal groupings of the measurements into pairwise commuting subclasses that can be jointly measured \cite{jena19,gokhale19,yen20,verteletskyi20,crawford21,izmaylov20,zhao20,bonet20}. 
Another approach, known as classical shadows, implements a randomized measurement strategy to construct a classical representation of the unknown state \cite{huang20,hadfield20,gresch23,koh22,chen21}. The classical shadow can be used to simultaneously estimate multiple properties of the system such as expectations of arbitrary observables up to some required precision. Originally formulated for systems of qubits, classical shadows have since been extended to the fermion setting via several methods \cite{Zhao21,wan22,low22,ogorman22}.
Alternatively, one can navigate the incompatibility issue directly by applying a parent measurement that simultaneously measures unsharp versions of the non-commuting observables \cite{mcnulty22}. 
This approach has been used to simultaneously estimate expectation values of $n$--qubit Pauli strings and multi-qubit Hamiltonians, and provides efficiency improvements over current classical shadow techniques in noisy scenarios \cite{koh22,chen21}. 

In this work, we extend the joint measurability framework to the fermionic setting and characterize the measurement incompatibility of assemblages of Majorana observables. In particular, we consider strategies which simultaneously measure unsharp (noisy) versions of the set of $\binom{2n}{2k}$ degree--$2k$ Majorana observables of an $n$ mode fermionic system. The joint measurement, together with classical post-processing of the outcomes, yields single shot estimators for tasks such as partial state tomography, which upon repetition can approximate all $k$--body fermionic marginals to a required precision \cite{bonet20,Zhao21}. Equipped with this partial knowledge of the state, one can determine properties such as multipole moments \cite{gidofalvi07}, derivatives of energy \cite{obrien19,overy14}, as well as apply error mitigation techniques \cite{mcclean17,takeshita20} and fidelity certification \cite{gluza18}. The strategy also provides single-shot estimators for the energy levels of electronic structure Hamiltonians \cite{mcclean16}.

The optimality of the joint measurability approach is determined by the incompatibility robustness of the set of observables. While calculating this quantity usually requires semidefinite programming (SDP) techniques that quickly become computationally infeasible for large collections of observables \cite{designolle19a,cavalcanti17,guhne23}, symmetries between the observables can simplify the problem considerably \cite{nguyen20}. We exploit this simplification by showing that the set of degree--$k$ Majorana measurements are uniformly and rigidly symmetric with respect to the braid group \cite{bravyi06}, represented in fermionic Fock space. This reduces the problem of finding the optimal robustness to finding the maximal eigenvalue of a class of operators relevant in the Sachdev-Ye-Kitaev (SYK) model, which is an important model of strongly interacting quantum systems inspired by high energy physics \cite{sachdev93,kitaev15,feng19}. In the simplest case of $\binom{2n}{2}$ quadratic Majoranas, we derive an analytical upper bound for the robustness which is tight if and only if a real $2n\times2n$ skew-Hadamard matrix exists. More generally, for assemblages of degree--$2k$ Majorana observables with $k\leq 5$, we apply a bound derived by Hastings and O'Donnell on the spectrum of the SYK model \cite{hastings21}, to show that the incompatibility robustness satisfies $\mathcal{O}(n^{-k/2})$.

We also provide a joint measurement scheme for each $k=\mathcal{O}(1)$, which simultaneously measures all degree--$2k$ Majorana observables with sharpness $\Theta (n^{-k/2})$, and hence is asymptotically optimal for $k\leq 5$. Importantly, the joint measurement can be implemented on a quantum computer by the randomization of projective measurements \cite{oszmaniec17}. In a circuit implementation, the joint measurement corresponds to transforming the state by conjugating with a randomly sampled Majorana monomial, followed by a fermionic Gaussian unitary, and finally a fermionic computational basis measurement (see Fig. \ref{fig:general_idea}). We show that applying this measurement strategy to simultaneously estimate non-commuting fermionic observables yields the same sample complexity scalings as fermionic shadows but requires implementing significantly fewer fermionic Gaussian unitaries \cite{Zhao21,wan22}.

This work is accompanied by a complementary paper \cite{majsak24} that focuses on the practical problem of implementing the joint measurement strategy for Majorana pairs and quadruples, as well as the problem of estimating expectation values of molecular quantum chemistry Hamiltonians. In the current manuscript, we focus mainly on the structural and mathematical aspects of the joint measurability properties of Majoranas, establishing connections to graph and design theory, as well as to the SYK model. Some of the structures that arise include perfect matchings of Turán graphs, tournament matrices \cite{Ito17}, Hadamard and skew-Hadamard matrices \cite{koukouvinos08}, as well as flat orthogonal matrices \cite{jaming15}.

The paper is structured as follows. We begin in Sec. \ref{sec:majoranas} with some preliminaries and introduce the relevant notation. In Sec. \ref{sec:jm} we describe our general fermionic joint measurement strategy. Sec. \ref{sec:nearly_optimal} provides lower bounds on the sharpness of the degree--$2k$ observables that constitute the marginals of the parent POVM. In Sec. \ref{sec:ir_syk} we evaluate the incompatibility robustness of degree--$2k$ Majorana observables via a connection to the SYK model. In Sec.  \ref{sec:est} we employ the joint measurement strategy to estimate fermionic observables for partial state tomography and energy estimations of Hamiltonians, and in Sec. \ref{sec:comparison} we provide comparisons to other approaches. In particular, we highlight some connections to fermionic classical shadows, as well as a measurement strategy based on fermion-to-qubit transformations. Finally, Sec. \ref{sec:conclusion} contains some concluding remarks and open questions.

\section{Preliminaries}\label{sec:majoranas}

Let $\mathcal H$ be a finite-dimensional Hilbert space of $\dim \mathcal H=d<\infty$. A measurement is described by a POVM $\M$, with a finite outcome set $\Omega$, and consists of positive semidefinite matrices (effects) $\M(e)\geq 0$, for which $\sum_{e\in \Omega} \M(e)=\id$, where $\id$ is the identity of $\mathcal H$. For a quantum state $\rho$, the outcome probability distribution of $\M$ is given by $p(e|\rho)=\tr{\M(e)\rho}$.

A finite collection of measurements $\M_1,\ldots, \M_m$ is \emph{jointly measurable} (compatible) whenever their statistics can be reproduced by classical post-processing of the statistics from a single (parent) POVM. In particular, there exists a measurement $\G$ with outcome set $\Omega_{\G}$ such that for each measurement $\M_j$, its effects $\M_j(e_j)$ with outcomes $e_j\in\Omega_j$, can be obtained from $\G$ via the classical post-processing (stochastic transformation),
\begin{equation}\label{eq:postprocessing}
\M_{j}(e_j)=\sum_{\lambda\in\Omega_{\G}} D(e_j|j,\lambda)\G(\lambda)\,,
\end{equation}
where $0\leq D(e_j|j,\lambda)\leq 1$ and $\sum_{e_j\in\Omega_j}D(e_j|j,\lambda)=1$, for all $j=1,\ldots,m$. Equivalently, a set of POVMs is said to be jointly measurable if there exists a parent whose marginals yield all effects of the individual POVMs. A set of POVMs with no joint measurement is called incompatible \cite{busch}.

For an $n$ mode fermionic system, we label the $2n$ Majorana operators,
\begin{equation}\label{eq:majoranas}
\gamma_{2j-1}=a_j^{\dagger}+a_j\,,\qquad\gamma_{2j}=i(a_j^{\dagger}-a_j)\,,
\end{equation}
where $a_j$ and $a_j^{\dagger}$ are the annihilation and creation operators of mode $j\in[n]:=\{1,2,\ldots,n\}$. The operators anti-commute and are Hermitian, i.e., $\gamma_j\gamma_{j'}+\gamma_{j'}\gamma_j=2\delta_{j{j'}}\id$ and $\gamma^{\dagger}_j=\gamma_j$, for all $j,j' \in[2n]$.

In this work, we investigate the joint measurability properties of degree--$2k$ Majorana observables,
\begin{equation}
\gamma_S:=i^{\binom{2k}{2}}\gamma_{j_1}\gamma_{j_2}\cdots\gamma_{j_{2k}}\,,
\end{equation} 
where $S=\{j_1,\ldots,j_{2k}\}\subseteq [2n]$ is a $2k$-element subset, with its elements listed in ascending order $1\leq j_1<\cdots<j_{2k}\leq 2n$. The phase associated with the monomial ensures Hermiticity. If $S=\varnothing$, i.e., $|S|=0$, then $\gamma_{\varnothing}=\id$. Any pair of observables $\gamma_{S}$ and $\gamma_{S'}$, with $S,S'\subseteq \{1,2\ldots,2n\}$, either commute or anti-commute in accordance to,
\begin{equation}\label{eq:commutativity}
\gamma_S\gamma_{S'}=(-1)^{|S|\cdot |S'|-|S\cap S'|}\gamma_{S'}\gamma_S\,.
\end{equation}
We label the set of degree--$2k$ Majorana observables by the index set
\begin{equation}
\mathcal{S}_{2k}=\binom{[2n]}{2k}=\{S\subset [2n]\,:\,|S|=2k\}\,,
\end{equation}
where each $2k$--element string $S\in\mathcal{S}_{2k}$ is ordered ascendingly.

Our aim is to construct a parent POVM which, after classical post-processing, yields the unsharp (noisy) degree--$2k$ Majorana measurements,
\begin{equation}\label{eq:unsharp}
\M_S(e_S)=\tfrac{1}{2}(\id+ e_S\eta_S\gamma_S)\,,
\end{equation}
for every $S\in\mathcal{S}_{2k}$, where $e_S\in\{\pm 1\}$ labels the outcome, and $\eta_S$ the sharpness, which satisfies $0\leq\eta_S\leq1$. Here, $\M_S$ is obtained by deforming the projective measurement $\P_S(\pm)=\frac{1}{2}(\id\pm \gamma_S)$ with the depolarising channel, i.e., $\M_S=\eta_S\P_S+(1-\eta_S)\id/2$. Note that $\P_S$ takes this form due to the fact that $\gamma_S^2 = \id$ by the canonical anticommutation relations. For a given parent POVM $\G$, we consider $\eta_S$ to be the maximal sharpness for which $\M_S$ is jointly measured by $\G$. Furthermore, we denote
\begin{equation}\label{eq:min_sharpness}
\eta_{2k}:=\min\{\eta_S\,|\,S\in\mathcal{S}_{2k}\}\,,
\end{equation}
as the (maximal) sharpness for which \emph{all} degree--$2k$ Majorana observables can be measured simultaneously by $\G$.

From the outcome of a joint measurement, it is simple to produce a single shot estimator of $\tr{\gamma_S\rho}$, for every degree--$2k$ Majorana observable. The general strategy, originally described for systems of qubits \cite{mcnulty22}, is to estimate $\tr{\gamma_S\rho}$ by sampling from outputs $e_S$ of the noisy Majorana measurement $\M_S$ of Eq. (\ref{eq:unsharp}). Importantly, for any input state $\rho$, the expectation value of $e_S$ equals $\eta_S\tr{\gamma_S\rho}$, so it is natural to set the unbiased estimator as $\hat{\gamma}_S=\eta_S^{-1}e_S$. We apply this strategy to simultaneously estimate non-commuting fermionic observables, and provide rigorous bounds on the number of copies of the state $\rho$ (i.e., the sample complexity) to approximate the expectations up to a required precision.

We will describe the parent measurement in both the Schr\"{o}dinger and Heisenberg pictures: In the Schr\"{o}dinger picture, the measurement is encoded in a unitary evolution acting on the quantum state (Fig. \ref{fig:general_idea}); in the Heisenberg picture, the transformation acts on the computational basis measurement while the state remains fixed. Both viewpoints are equivalent since implementing a POVM $\M$ on the transformed state $U\rho U^{\dagger}$ is equivalent to implementing $U^{\dagger}\M U$ on the state $\rho$ by the duality $\tr{\M U \rho U^{\dagger}}=\tr{U^{\dagger}\M U\rho}$.

To construct and implement the parent measurement we rely on fermionic Gaussian (or linear-optical) unitaries \cite{knill01,bravyi02,jozsa08}. These are a class of unitary operators which satisfy
\begin{equation}\label{eq:unitary}
\gamma_{j}^O:=U_O^{\dagger} \gamma_{j}U_O=\sum_{j'=1}^{2n}O_{jj'}\gamma_{j'}\,,
\end{equation}
for each $j\in[2n]$, where $O\in O(2n)$ is a real orthogonal $2n\times 2n$ matrix \cite[Theorem 3]{jozsa08}. In particular, they transform the $2n$ Majorana operators into a rotated set of operators which satisfy the same anti-commutation relations. 

A fermionic Gaussian unitary is generated from a quadratic Hamiltonian \cite{jozsa08},  
\begin{equation}
U_O = e^{iH}, \qquad 
H = -\frac{i}{4}\sum_{j,j'=1}^{2n}\alpha_{jj'}\gamma_j\gamma_{j'},
\end{equation}
where $\alpha \in \mathbb{R}^{2n\times 2n}$ is a real antisymmetric matrix and $O=e^{-\alpha} \in O(2n)$.

Under the action of $U_O$, an observable $\gamma_{S}$, with $S\in\mathcal{S}_{2k}$, transforms to
\begin{equation}\label{eq:unitary_prod}
\gamma_S^O:=U_O^{\dagger} \gamma_{S}U_O=\sum_{S'\in{\mathcal{S}_{2k}}}\text{det}(O_{S,S'})\gamma_{S'}\,,
\end{equation}
where $O_{S,S'}$ denotes a $2k \times 2k$ submatrix of $O$ with rows indexed by $S$ and columns by $S'$ \cite[Corollary 1]{mocherla23}. The class of fermionic Gaussian unitaries play an important role in various quantum computing applications \cite{valiant01,terhal02}, e.g., for performing quantum chemistry and many-body simulations \cite{kivlichan18,google20}, fermion sampling \cite{oszmaniec22,hangleiter23}, and classical shadows \cite{Zhao21,wan22,low22,ogorman22}.

It is often convenient to represent a fermionic system as a system of qubits via a fermion-to-qubit mapping. A transformation of this type maps the $2n$ Majorana operators of Eq. (\ref{eq:majoranas}) to multi-qubit Pauli operators while preserving their anti-commutativity relations. One such mapping is the Jordan-Wigner (JW) transformation \cite{jordan93}, in which the Majorana operators of mode $j\in [n]$ are represented as the $n$-qubit Pauli operators,
\begin{align}\label{eq:jw1}
\gamma_{2j-1}&=Z^{\otimes j-1}\otimes X\otimes\id^{\otimes n-j}\,,\\
\label{eq:jw2}
\gamma_{2j}&=Z^{\otimes j-1}\otimes Y\otimes\id^{\otimes n-j}\,,
\end{align}
where $X,Y$ and $Z$ are the qubit Pauli operators, and $\id$ the $2\times 2$ identity.

A fermionic Gaussian unitary in the JW representation is described by a matchgate circuit \cite{valiant01}. Matchgates are the class of two-qubit gates generated by the rotations $\exp(i\theta X\otimes X)$, $\exp(i\theta Z\otimes \id)$ and $\exp(i\theta \id\otimes Z)$. A matchgate circuit (arranged in a 1D architecture) is generated from the set of all nearest-neighbor matchgates, together with the Pauli operator $X$ on the $n$--th qubit.

To describe the behavior of the joint measurement as the system size increases we use the standard notation $\mathcal{O}$, $\Omega$ and $\Theta$ for asymptotic bounds. In particular, $f(n)=\Theta(g(n))$ implies $\exists\, c,C,n_0>0$ such that $c\cdot g(n)\leq f(n)\leq C \cdot g(n)$ for all $n>n_0$, while $\mathcal{O}$ and $\Omega$ are the upper and lower bounds, respectively.

\section{Fermionic joint measurements}\label{sec:jm}

\begin{figure}[!t]
    \centering
	\includegraphics[width=8cm]{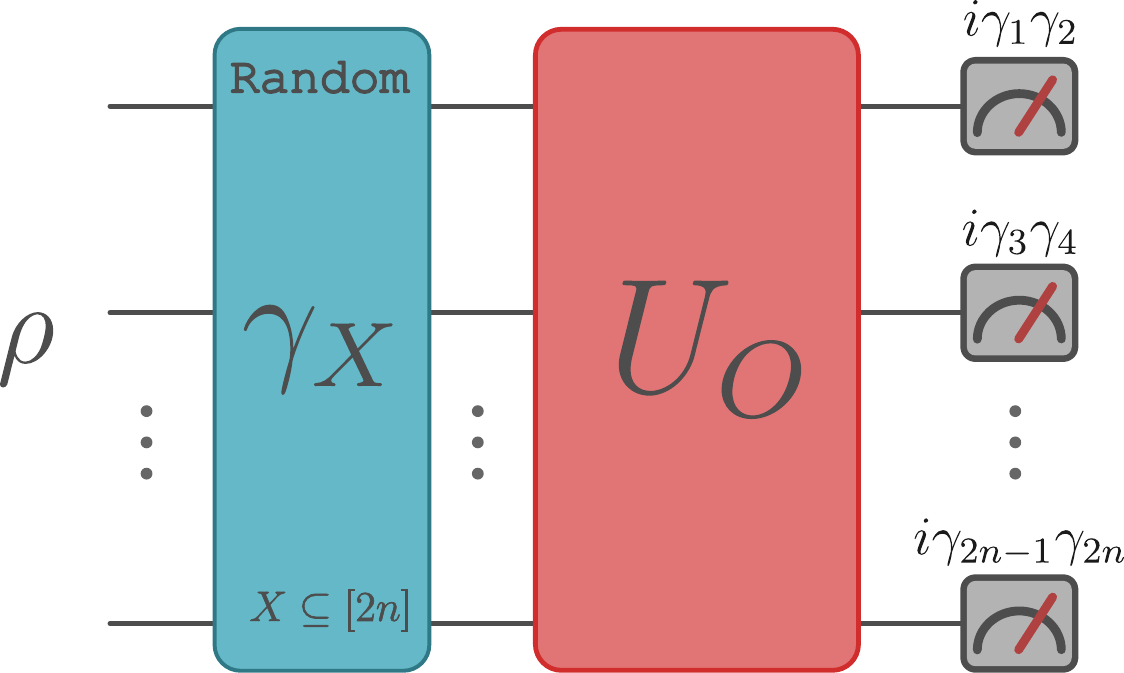}
	\caption{\label{fig:general_idea} Circuit implementation of the joint measurement $\G^O$, defined in Eq. (\ref{eq:jm_gen}), on an $n$-mode fermionic state. The state evolves under the action of a Majorana monomial $\gamma_X$, sampled uniformly at random from the set of all Majorana monomials, followed by a fermionic Gaussian unitary $U_O$. The $n$ measurements $\{\gamma_R\,|\,R\in\mathcal{D}_2\}$ are performed on the $n$ modes, with outcomes $q_{R}\in\{\pm 1\}$. In Sec. \ref{sec:nearly_optimal} we specify a choice of $U_O$ which optimizes the accuracy of the joint measurement.
In the qubit setting---under the Jordan-Wigner transformation---the same circuit is implemented by performing a random Pauli operator $P_i\in\{\id,X,Y,Z\}$ on each of the $n$ qubits, followed by a matchgate circuit and the computational basis measurement.
	}
\end{figure} 

We now introduce a general fermionic parent POVM which, after classical post-processing, yields all unsharp Majorana observables of even degree. First, we consider the joint measurability of the set of $2n$ degree--$1$ Majorana operators, previously studied in connection with the Clifford algebra \cite{kunjwal14,bluhm18}.

\subsection{Single Majoranas}\label{sec:single}

Consider the set of $2n$ Majorana measurements $\M_{j}(\pm)=\tfrac{1}{2}(\id\pm\eta_{j}\gamma_{j})$, with sharpness $\eta_{j}\in[0,1]$ for $j\in[2n]$. Joint measurability has previously been considered in both the uniform \cite{kunjwal14} and biased cases \cite{bluhm18}. Here, the uniform case satisfies $\eta_{j}=\eta_{j'}$ for all $j,j'\in[2n]$, while a biased measurement drops this constraint. By exploiting connections between joint measurability and the inclusion of free spectrahedra \cite{bluhm18}, it was shown that $\{\M_{j}\,|\,j\in[2n]\}$ are jointly measurable if and only if $\sum_{j=1}^{2n} \eta_{j}^2\leq 1$. Note that the criterion generalizes the joint measurability condition for qubit Pauli observables \cite{busch86,brougham07}, and applies for any set of pairwise anti-commuting unitary matrices which are Hermitian and idempotent (of which there are at most $2n+1$ in the Hilbert space of dimension $2^n$).

In the uniform case, an optimal joint measurement was presented in \cite{kunjwal14}. This generalizes to the biased parent POVM,
\begin{equation}\label{eq:jm_single}
\G(e_1,\ldots,e_{2n})=\frac{1}{2^{2n}}\left(\id+\sum_{j=1}^{2n} e_{j}\eta_{j}\gamma_{j}\right)\,,
\end{equation}
with outcome string $e_1,\ldots,e_{2n}\in\{\pm1\}$. By taking the marginals, for each $j\in[2n]$, we recover $\M_{j}(e_{j})$.

We now show that the optimal joint measurement, with $\sum_{j=1}^{2n} \eta_{j}^2=1$, is projectively simulable, i.e., it can be implemented by a randomization over projective measurements \cite{oszmaniec17}. First, a single Majorana operator, say $\gamma_1$, is rotated to a linear combination $\gamma_{1}^O=\sum_{j=1}^{2n} \eta_{j}\gamma_{j}$, by an orthogonal transformation $O\in O(2n)$, as described in Eq. (\ref{eq:unitary}). Then, by conjugating with a randomly sampled monomial $\gamma^{\dagger}_X$, $X\subseteq[2n]$, the operator is mapped to 
\begin{equation}
\gamma_1^O\mapsto \gamma^{\dagger}_X\cdot\gamma_{1}^O\cdot\gamma_X=\sum_{j=1}^{2n}x_{j}\eta_{j}\gamma_{j}\,,
\end{equation}
where $x_{j}=(-1)^{|X|-|X\cap \{j\}|}$ (cf. Eq. (\ref{eq:commutativity})). 
In particular, if $X$ is even,
\begin{equation}\label{eq:randomisation}
x_j=\begin{cases}
-1 &\mbox{if} \,\,\, j\in X \\
1 &\mbox{otherwise} \,,
 \end{cases}
\end{equation}
and if $X$ is odd, the signs in Eq. (\ref{eq:randomisation}) are reversed. It follows that, given any ${\bf x}=(x_1,\ldots, x_{2n})\in\{\pm 1\}^{2n}$, $\exists !\, X	\subseteq [2n]$, such that 
\begin{equation}
\gamma^{\dagger}_X\cdot (\gamma_1,\ldots,\gamma_{2n})\cdot \gamma_X=(x_1\gamma_1,\ldots,x_{2n}\gamma_{2n})\,.
\end{equation}
Consequently, implementing the projective measurement 
\begin{equation}\label{eq:degree-1-pvm}
\P^{O,X}(q)=\tfrac{1}{2}(\id+q \gamma^{\dagger}_X\gamma^O_{1}\gamma_X)\,,
\end{equation}
with $X\in 2^{[2n]}$ sampled uniformly at random from the power set $2^{[2n]}$, i.e., the set of all subsets of $[2n]$, is equivalent to performing the parent $\G$. The outcome $(q,X)$ from (\ref{eq:degree-1-pvm}) yields the string of $2n$ outcomes $e_j=q\cdot x_j$ for all $j\in[2n]$ in Eq. (\ref{eq:jm_single}).

\subsection{Degree--$2k$ Majoranas}\label{subsec:degree_k}

In the fermionic setting, we denote the computation basis measurements by the index set
\begin{equation}\label{eq:diag}
\mathcal{D}_{2}=\{\,\{2j-1,2j\}\,\,|\,j\in[n]\}\subset \mathcal{S}_2\,,
\end{equation}
and the corresponding POVMs,
\begin{equation}\label{eq:diag_k}
\P_R(q_R)=\tfrac{1}{2}(\id+q_R\gamma_R)\,,\quad R\in \mathcal{D}_2\,,
\end{equation}
with outcomes $q_R\in\{\pm 1\}$. Under the Jordan-Wigner transformation, $\{\P_R\,|\, R\in\mathcal{D}_2\}$ map to the standard computational basis measurement on $n$ qubits (cf. Eqs. (\ref{eq:jw1}) and (\ref{eq:jw2})). More generally, the set of degree--$2k$ Majorana observables that commute with the computational basis measurements are denoted by the index set
\begin{equation}
\mathcal{D}_{2k}=\{\cup_{i=1}^k R_i\,|\, R_i\in\mathcal{D}_{2}\,,R_i\neq R_{i'}\}\subset \mathcal{S}_{2k}\,.
\end{equation}

To simultaneously measure all unsharp versions of the degree--$2k$ Majorana monomials $\{\gamma_S\,|\,S\in\mathcal{S}_{2k}\}$, we first take an arbitrary $R\in\mathcal{D}_{2k}$, and apply a fermionic Gaussian unitary $U^{\dagger}_O$ to $\P_R$, as described in Eq. (\ref{eq:unitary_prod}), such that
\begin{equation}\label{eq:rotated_comp_basis}
\P^{O}_R(\pm):=U_O^{\dagger}\P_R(\pm)U_O=\tfrac{1}{2}(\id\pm\gamma_R^{O})\,,
\end{equation}
where
\begin{equation}\label{eq:rotated_prod}
\gamma^O_R:=\sum_{S\in\mathcal{S}_{2k}}\det(O_{R,S})\gamma_{S}\,.
\end{equation}
Then, conjugating with a Majorana monomial $\gamma^{\dagger}_X$, $X\subseteq [2n]$, transforms $\P^{O}_R$ to
\begin{eqnarray*}\label{eq:random_projective}
\P^{O,X}_R(\pm)&:=&\gamma^{\dagger}_X\cdot \P^{O}_R(\pm)\cdot \gamma_X=\tfrac{1}{2}(\id\pm\gamma_R^{O,X})\,,
\end{eqnarray*}
where
\begin{equation}\label{eq:random_rotated_prod}
\gamma_R^{O,X}:=\sum_{S\in\mathcal{S}_{2k}}x_S\det(O_{R,S})\gamma_{S}\,,
\end{equation}
and $x_S=(-1)^{|X|\cdot|S|-|S\cap X|}$. Finally, sampling $X\subseteq [2n]$ uniformly at random from the power set $2^{[2n]}$, i.e., the set of all subsets of $[2n]$, defines a POVM,
\begin{equation}\label{eq:jm}
\begin{aligned}
\G^O_{R}(q_R,X) &:= 2^{-2n}\P^{O,X}_R(q_R)\\
&= \frac{1}{2^{2n+1}}\Bigl(\id
   + q_R \sum_{\mathclap{S\in\mathcal{S}_{2k}}} x_S \det(O_{R,S})\,\gamma_{S}
   \Bigr)
\end{aligned}
\end{equation}
with outcomes $q_R\in\{\pm1\}$ and $X\in 2^{[2n]}$. 

More generally we can define a POVM that is not limited to just one choice of $R\in\mathcal{D}_{2k}$. In particular, by taking the product of $n$ POVMs $\{\G^O_{R}\,|\,R\in \mathcal{D}_2\}$, as defined in Eq. (\ref{eq:jm}) with $k=1$, we can construct a POVM $\G^O$, which simultaneously measures $\G^O_{R}$, for all $R\in \mathcal{D}_{2k}$ and $k=1,2,\ldots,n$. This holds due to the commutativity of the observables $\G^O_{R}$, which are obtained from the commuting set $\{\gamma_R \,|\, R \in \mathcal{D}_2\}$ via conjugation with fixed unitaries $U_O$ and $\gamma_X$. Thus, the  product in Eq.~(\ref{eq:jm_gen}) defines a valid joint POVM.

We summarize the above construction in the following proposition.

\begin{proposition}\label{prop:jm}
Suppose an $n$ mode fermionic state $\rho$ undergoes a unitary evolution described by the circuit in Fig. \ref{fig:general_idea}, followed by a computational basis measurement $\{\P_R\,|\,R\in\mathcal{D}_2\}$ on the $n$ modes, with outcome ${\bf q}\in\{\pm 1\}^{n}$. In the Heisenberg picture, this defines a POVM
\begin{equation}\label{eq:jm_gen}
\G^O({\bf q},X)=(2^{2n})^{n-1}\prod_{R\in\mathcal{D}_2} \G^O_{R}(q_{R},X)\,,
\end{equation}
measured on the initial state $\rho$, with $\G^O_{R}$ defined in Eq. (\ref{eq:jm}), and $X\in 2^{[2n]}$ a randomly sampled element from the power set of $[2n]$. Furthermore, $\G^O$ is projectively simulable.
\end{proposition}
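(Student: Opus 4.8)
The plan is to verify two things: first, that $\G^O$ as defined in Eq.~(\ref{eq:jm_gen}) is a genuine POVM, and second, that it admits a description as a randomization over projective measurements. The construction in the preceding subsection already does most of the work, so the argument is essentially an assembly of the pieces together with a normalization bookkeeping check.

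First I would establish that the family $\{\G^O_R(q_R,X)\}$ for $R\in\mathcal{D}_2$ (with $k=1$) consists, for each fixed $X$, of commuting effects. This follows because each $\G^O_R$ is obtained from the projective measurement $\P_R(q_R)=\tfrac12(\id+q_R\gamma_R)$ by conjugating with the \emph{fixed} unitary $U_O$ and then the \emph{fixed} monomial $\gamma_X$: conjugation by a fixed unitary preserves commutation relations, and the original set $\{\gamma_R\mid R\in\mathcal{D}_2\}$ is mutually commuting (the sets $R=\{2j-1,2j\}$ are pairwise disjoint, so Eq.~(\ref{eq:commutativity}) gives $|R|\cdot|R'|-|R\cap R'|=4$, which is even). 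Hence for each fixed $X$ the product $\prod_{R\in\mathcal{D}_2}\tfrac{1}{2}(\id\pm\gamma_R^{O,X})$ over the $n$ commuting two-outcome projective measurements is itself a projective measurement with $2^n$ outcomes ${\bf q}\in\{\pm1\}^n$. Multiplying by $2^{-2n}$ (for the uniform sampling of $X$) and then by the compensating factor $(2^{2n})^{n-1}$ — which cancels the $n-1$ extra copies of $2^{-2n}$ introduced by writing the joint projector as a product of $n$ individually normalized $\G^O_R$'s — yields effects that sum to $\id$: summing over ${\bf q}$ collapses the projective product to $\id$, and summing over $X\in2^{[2n]}$ contributes the factor $2^{2n}$ that matches the normalization. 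This is the routine calculation I would not grind through in detail; the only subtlety is tracking the powers of $2^{2n}$ correctly, and the statement of Eq.~(\ref{eq:jm_gen}) already fixes them.

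For projective simulability, I would invoke the observation made just above Proposition~\ref{prop:jm}: for each fixed orthogonal $O$ and each fixed subset $X\subseteq[2n]$, the map $\rho\mapsto \gamma_X^\dagger U_O\,\rho\,U_O^\dagger\gamma_X$ is a unitary conjugation, and applying the sharp computational-basis measurement $\{\P_R\mid R\in\mathcal{D}_2\}$ afterwards is a projective measurement on $\rho$. Sampling $X$ uniformly from $2^{[2n]}$ (a probability distribution with $2^{2n}$ atoms each of weight $2^{-2n}$) realizes exactly the convex combination that defines $\G^O$; this is the circuit in Fig.~\ref{fig:general_idea} read in the Schr\"odinger picture, and it manifestly has the form ``sample classical randomness, then perform a projective measurement,'' which is the definition of projective simulability of \cite{oszmaniec17}. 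I would also note the equivalence of pictures already recorded in the text, $\tr{\M U\rho U^\dagger}=\tr{U^\dagger\M U\rho}$, to move freely between ``unitary on the state'' and ``unitary on the measurement.''

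The only genuine obstacle — and it is a mild one — is confirming that the compensating normalization constant $(2^{2n})^{n-1}$ is exactly right, i.e.\ that $\sum_{{\bf q},X}\G^O({\bf q},X)=\id$; this is a finite computation using $\sum_{q_R}\tfrac12(\id\pm\gamma_R^{O,X})=\id$ applied $n$ times and $|2^{[2n]}|=2^{2n}$. Everything else is structural and follows from results already stated: Eq.~(\ref{eq:unitary_prod}) for how $\gamma_R$ transforms under $U_O$, Eq.~(\ref{eq:commutativity}) for the sign $x_S$, and the general principle that a product of commuting projective measurements is projective.
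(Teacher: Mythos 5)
Your proposal is correct and follows essentially the same route as the paper: the validity of $\G^O$ is established from the commutativity of $\{\gamma_R\mid R\in\mathcal{D}_2\}$ being preserved under conjugation by the fixed unitaries $U_O$ and $\gamma_X$ (so the product of effects is a product of commuting projections, hence a projective measurement for each fixed $X$), together with the normalization bookkeeping over ${\bf q}$ and $X$ that the paper carries out explicitly in Appendix~\ref{A:pp}. The projective-simulability argument — uniform sampling of $X$ followed by a unitarily rotated computational-basis measurement, with the two pictures related by $\tr{\M U\rho U^\dagger}=\tr{U^\dagger\M U\rho}$ — is likewise exactly the paper's reasoning.
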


It is straightforward to see that $\G^O$ is a parent of the POVM $\G^O_{R}$ defined in Eq. (\ref{eq:jm}), for any $R\in\mathcal{D}_{2k}$ (cf. Appendix \ref{A:pp}). Next, we show that it is also a parent POVM of any sufficiently unsharp even degree Majorana observable.

For a degree--$2k$ observable $\gamma_S$, where $S\in\mathcal{S}_{2k}$, we apply the following classical post-processing. First, we choose (at this stage arbitrarily) a set $R=\bigcup_{i=1}^{k}R_i\in\mathcal{D}_{2k}$ with $R_i\in\mathcal{D}_2$, and let $q_R=\prod_{i=1}^{k}q_{R_i}$. The set $R$ indicates which $k$ POVMs $\G^O_{R_i}$ we use to jointly measure $\gamma_S$, i.e, the POVM $G_R^O$ of Eq. (\ref{eq:jm}). We recover the measurement $\M_S(e_S)$ from the parent $\G^O$ via the post-processing of Eq. (\ref{eq:postprocessing}), with
\begin{equation}\label{eq:pp_gen}
D(e_{S}|S,R,X,{\bf q})=
\begin{cases}
1, & \!\!\text{if }\text{sgn}(\det(O_{R,S}))\,x_S q_R \\[-2pt]
   & \qquad\qquad\qquad\quad\,= e_S,\\
0, &\!\! \text{otherwise}.
\end{cases}
\end{equation}
where $x_S=(-1)^{|X|\cdot|S|-|S\cap X|}$. In particular, the outcome $e_S$ is obtained by the mapping,
\begin{equation}\label{eq:pp_map}
({\bf q},X)\mapsto \text{sgn}(\det(O_{R,S}))x_Sq_R\,.
\end{equation}
In Appendix \ref{A:pp} we show the following.
\begin{proposition}\label{prop:jm_pp}
Let $\G^O$ be the parent POVM defined in Eq. (\ref{eq:jm_gen}). For every $S\in\mathcal{S}_{2k}$, and an arbitrary $R\in\mathcal{D}_{2k}$, the post-processing defined in Eq. (\ref{eq:pp_gen}) recovers the degree--$2k$ measurement $\M_S$ of Eq. (\ref{eq:unsharp}), with sharpness
\begin{equation}\label{eq:minors}
\eta_{R,S}=|\det(O_{R,S})|\,.
\end{equation}
\end{proposition}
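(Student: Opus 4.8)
The plan is to verify directly that the stochastic post-processing in Eq.~(\ref{eq:pp_gen}), applied to the parent POVM $\G^O$ of Eq.~(\ref{eq:jm_gen}), reconstitutes the unsharp effect $\M_S(e_S)=\tfrac12(\id+e_S\eta_{R,S}\gamma_S)$ with $\eta_{R,S}=|\det(O_{R,S})|$. First I would fix $S\in\mathcal{S}_{2k}$ and a set $R=\bigcup_{i=1}^k R_i\in\mathcal{D}_{2k}$, and recall from the construction preceding the statement that $\G^O_R(q_R,X)=2^{-2n-1}\bigl(\id+q_R\sum_{S'\in\mathcal{S}_{2k}}x_{S'}\det(O_{R,S'})\gamma_{S'}\bigr)$, where $q_R=\prod_i q_{R_i}$ and $x_{S'}=(-1)^{|X|\cdot|S'|-|S'\cap X|}$. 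Since $\G^O$ is a parent of $\G^O_R$ (Appendix \ref{A:pp}), it suffices to show that the indicated post-processing of $\G^O_R$ yields $\M_S$.

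The core computation is to sum $D(e_S|S,R,X,\mathbf q)\,\G^O_R(q_R,X)$ over all outcomes $(q_R,X)$ and check it equals $\M_S(e_S)$. The post-processing in Eq.~(\ref{eq:pp_gen}) selects exactly those $(q_R,X)$ for which $\mathrm{sgn}(\det(O_{R,S}))\,x_S q_R = e_S$. I would split the sum into the identity part and the Majorana part. For the identity part, one checks that for each fixed $X$ exactly one value of $q_R\in\{\pm1\}$ contributes (since $x_S q_R\,\mathrm{sgn}(\det O_{R,S})$ is a bijection in $q_R$), so the identity coefficient is $2^{-2n-1}\cdot 2^{2n}=\tfrac12$ after the $2^{2n}$ normalization factors in Eq.~(\ref{eq:jm_gen}) are accounted for. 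For the Majorana part, inside the selected set one has $q_R = e_S\,\mathrm{sgn}(\det(O_{R,S}))\,x_S$ (using $x_S^2=1$, $q_R^2=1$), so the coefficient of a given $\gamma_{S'}$ becomes proportional to $\sum_X x_{S'}x_S\,e_S\,\mathrm{sgn}(\det(O_{R,S}))\det(O_{R,S'})$. The key orthogonality fact is that $\sum_{X\subseteq[2n]} x_S x_{S'} = 2^{2n}\delta_{S,S'}$: when $S=S'$ every term is $+1$, and when $S\neq S'$ the sign $x_S x_{S'}=(-1)^{|X|\cdot(|S|+|S'|)-|X\cap S|-|X\cap S'|}$, which for $|S|=|S'|=2k$ reduces to $(-1)^{|X\cap(S\triangle S')|}$, sums to zero over the power set because $S\triangle S'$ is nonempty. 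Hence only $S'=S$ survives, giving coefficient $e_S\,\mathrm{sgn}(\det(O_{R,S}))\det(O_{R,S}) = e_S|\det(O_{R,S})|$, which is precisely $e_S\eta_{R,S}$.

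The only subtlety worth flagging is the bookkeeping of the several factors of $2^{2n}$: one from the uniform sampling of $X\in 2^{[2n]}$ built into $\G^O_R$, and the $(2^{2n})^{n-1}$ prefactor in Eq.~(\ref{eq:jm_gen}) that makes the product of the $n$ commuting POVMs $\{\G^O_{R}\mid R\in\mathcal{D}_2\}$ into a properly normalized joint POVM whose appropriate marginal is $\G^O_R$. Once it is established (as in Appendix \ref{A:pp}) that $\G^O$ marginalizes to $\G^O_R$ for each $R\in\mathcal{D}_{2k}$, the remaining work is exactly the orthogonality argument above, and I expect the character-sum identity $\sum_X x_S x_{S'}=2^{2n}\delta_{S,S'}$ to be the one genuinely load-bearing step; everything else is sign-tracking. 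A final remark: the degenerate case $\det(O_{R,S})=0$ gives $\eta_{R,S}=0$, i.e.\ $\M_S=\id/2$, which is trivially recovered, and the sign convention $\mathrm{sgn}(0)$ is immaterial there.
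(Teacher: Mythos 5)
Your proposal is correct and follows essentially the same route as the paper's Appendix \ref{A:pp}: reduce to the marginal $\G^O_R$, then perform the constrained sum over outcomes, where the character-sum identity $\sum_X x_S x_{S'}=2^{2n}\delta_{S,S'}$ (which the paper leaves implicit in the collapse of the constrained sum) kills all cross terms and the sign $\tau=\mathrm{sgn}(\det(O_{R,S}))$ converts $\det(O_{R,S})$ into $|\det(O_{R,S})|$. Your bookkeeping of the $2^{2n}$ normalization factors and the degenerate case $\det(O_{R,S})=0$ matches the paper's treatment.
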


It follows that, for a specific $O\in O(2n)$, the choice of $R\in\mathcal{D}_{2k}$ affects how sharply the observable $\gamma_S$ is jointly measured.

\section{Asymptotically optimal joint measurements}\label{sec:nearly_optimal}

Our task is to find a suitable $O\in O(2n)$ for the parent $\G^O$ such that its relevant minors, i.e., Eq. (\ref{eq:minors}), are not too small. In particular, we wish to find $O$ such that, for every $S\in\mathcal{S}_{2k}$, the sharpness
\begin{equation}\label{eq:optimal_eta_S}
\eta_{S}=\max \{\eta_{R,S}\,|\,R\in\mathcal{D}_{2k}\}\,,
\end{equation}
is sufficiently large.

In a naive strategy, consider $O=H$, where $H$ is a $2n\times 2n$ real Hadamard matrix with elements $|h_{ij}|=1/\sqrt{2n}$ for all $i,j\in[2n]$. 
Suppose that we wish to jointly measure all degree--$2$ Majorana observables $\{\gamma_S\,|\,S\in\mathcal{S}_2\}$. The sharpness $\eta_{R,S}=|\det(O_{R,S})|$ associated with each observable is obtained from a $2\times 2$ minor of $O$. For each $S$ there are $n$ possible minors, depending on the choice of $R\in\mathcal{D}_2$. Since every $2\times 2$ minor of a $2n\times 2n$ Hadamard is either $0,$ or $\pm n^{-1}$, it follows that $\eta_{R,S}\in\{0,n^{-1}\}$.  By selecting the optimal $R\in\mathcal{D}_2$, i.e., favouring non-zero valued minors, we obtain a parent POVM that measures $\gamma_S$ with sharpness $\eta_S=n^{-1}$ for those observables where such a minor exists. Another strategy, for example, is to sample a Haar-distributed random orthogonal matrix $O$ and find the maximum minors.

Instead, to obtain sharper observables, our joint measurement $\G$ is implemented by a randomization over a set of $N$ POVMs, $\G^{(r)}$, $r=1,\ldots,N$. Each POVM is of the form described in Eq. (\ref{eq:jm_gen}) and defined by a matrix $O^{(r)}\in O(2n)$ which has a modified block diagonal structure. For example, the degree--$2$ parent is implemented by sampling, uniformly at random, one of two POVMs, $\G^{(1)}$ and $\G^{(2)}$. For a given $S\in\mathcal{S}_2$, at least one of the two POVMs measures $\gamma_S$ with sharpness $\eta_S=\Omega(n^{-1/2})$. If, in the worst case scenario, the sampled POVM contains no useful information about $\gamma_S$ (i.e., the corresponding sharpness parameter is $\eta_S=0$), the relevant marginal is the trivial observable $\M_S(e_S)=\id/2$ whose outcome can be drawn at random from a coin flip. After taking the randomization over both POVMs into account, the sharpness $\eta_S=\Omega(n^{-1/2})$ is reduced by a factor of $1/2$. More generally, provided at least one of the $N$ POVMs approximates the observable with sufficient sharpness, the randomization will reduce $\eta_S$ by a factor of at most $1/N$.

In the following theorem, we place a lower bound on the quantity $\eta_{2k}$ defined in Eq. (\ref{eq:min_sharpness}),  where $k$ is a constant independent of the system size $n$, i.e., $k=\mathcal{O}(1)$.
\begin{theorem}\label{thm:jm}
Assume $k=\mathcal{O}(1)$ and $N>4k$. Let $\G$ be a POVM implemented by a uniform randomisation over $N$ POVMs $\G^{(1)},\ldots,\G^{(N)}$, where $\G^{(r)}:= \G^{(O^{(r)})}$ is defined in Eq. \eqref{eq:jm_gen}. Then there exists a set of orthogonal matrices $O^{(1)},\ldots,O^{(N)}$ for which $\G$ jointly measures all degree--$2k$ observables $\{\gamma_S\,|\,S\in \mathcal{S}_{2k}\}$ of an $n$ mode fermionic system with sharpness
\begin{equation}\label{eq:eta_2k_lb}
\eta_{2k}=\Omega(n^{-k/2})\,.
\end{equation}
Furthermore, when $k = 1$, the result holds with only $N = 2$ orthogonal matrices.
\end{theorem}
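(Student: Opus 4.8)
The plan is to reduce the claim to a statement about existence of orthogonal matrices whose relevant $2k\times 2k$ minors — precisely the sharpness parameters $\eta_{R,S}=|\det(O_{R,S})|$ of Proposition \ref{prop:jm_pp} — are simultaneously not too small, i.e. of order $n^{-k/2}$, once we are allowed to choose the best $R\in\mathcal D_{2k}$ for each $S$ and to randomize over $N$ matrices. The key observation is that $\det(O_{R,S})$ for $R\in\mathcal D_{2k}$ is a product of contributions from the $k$ disjoint mode-blocks $R_i=\{2j_i-1,2j_i\}$, so if we take $O^{(r)}$ to be block-diagonal (or nearly so) with $2\times 2$ blocks of the form $\tfrac{1}{\sqrt n}\begin{psmallmatrix}\cdot&\cdot\\\cdot&\cdot\end{psmallmatrix}$-type structure, the degree--$2k$ minors factor through the degree--$2$ case. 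Thus the whole problem bootstraps from $k=1$: if for $k=1$ we can guarantee $\eta_2=\Omega(n^{-1/2})$ with $N=2$ matrices, then taking tensor-like block products of these building blocks handles general constant $k$ with $\eta_{2k}=\Omega(n^{-k/2})$, up to the randomization penalty $1/N$ and combinatorial factors depending only on $k$ (hence absorbed into $\Omega(\cdot)$ since $k=\mathcal O(1)$). The constraint $N>4k$ should appear because each degree--$2k$ observable is "covered" by a bounded number of the building-block configurations and a union-bound / pigeonhole over the $\binom{2n}{2k}$ observables forces $N$ to exceed some linear-in-$k$ threshold.

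First I would treat $k=1$ carefully. We need $O^{(1)},O^{(2)}\in O(2n)$ such that for every pair $S=\{a,b\}\subset[2n]$, at least one of the matrices has a $2\times 2$ minor $O^{(r)}_{R,S}$ with $|\det|=\Omega(n^{-1/2})$ for some $R=\{2j-1,2j\}$. A clean construction: let $O^{(1)}$ send $\gamma_1\mapsto n^{-1/2}(\gamma_1+\gamma_3+\cdots)$ type rows so that its $2\times 2$ minors involving row $1$ (or a fixed small set of rows) against column pairs are all $\pm 2/(2n)=\pm n^{-1}$... but that only gives $n^{-1}$. To reach $n^{-1/2}$ one wants the two rows spanning $R$ to be "generic" flat vectors whose $2\times2$ determinant against columns $\{a,b\}$ is a sum of $\sim n$ terms each $\sim 1/n$ with random-like signs, giving typical size $n^{-1/2}$; the two matrices $O^{(1)},O^{(2)}$ are chosen so that for every column pair, the cancellation is avoided in at least one of them. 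Concretely I expect one uses a skew-Hadamard-like or flat-orthogonal construction (the paper advertises flat orthogonal matrices \cite{jaming15}) where all $2\times 2$ minors have a uniform lower bound $\Omega(n^{-1/2})$ except on a set of column pairs that can be "rotated away" by a second matrix — e.g. $O^{(2)}$ obtained from $O^{(1)}$ by a fixed permutation that is a derangement on pairs. Getting a fully explicit pair with a clean uniform bound is the delicate part; alternatively a probabilistic argument (Haar-random $O$, show $\Pr[\exists R:\ |\det O_{R,S}|<c n^{-1/2}]$ is small enough that a union bound over $S$ and a second independent sample kills it) gives existence without explicitness.

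For general constant $k$, I would build $O^{(r)}$ as a block matrix: partition the $2n$ modes into $k$ groups of $2n/k$ modes each, place a scaled copy of the $k=1$ building block (acting on $2n/k$ modes, hence with entries $\sim (2n/k)^{-1/2}=\Theta(n^{-1/2})$) in each group, and let $O^{(r)}$ be the direct sum. Then for $S\in\mathcal S_{2k}$ whose $2k$ indices distribute across the groups in a way compatible with one block per group and two per block, the minor $\det(O_{R,S})$ factorizes as a product of $k$ degree--$2$ minors each $\Omega(n^{-1/2})$, giving $\Omega(n^{-k/2})$; choosing $R\in\mathcal D_{2k}$ appropriately and cycling the group-partition across $O^{(1)},\dots,O^{(N)}$ ensures that every $S$ — regardless of how its indices cluster — is well-covered by at least one $r$, which is exactly where $N>4k$ (enough partition-shifts to separate any $2k$ points into distinct blocks, two per block) enters. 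The randomization then costs a factor $1/N=\Omega(1/k)=\Omega(1)$. The main obstacle, I expect, is the $k=1$ uniform minor lower bound: proving that one can cover all $\binom{2n}{2}$ column pairs with just two orthogonal matrices each having the $\Omega(n^{-1/2})$ minor property — either by an explicit flat/skew-Hadamard construction whose minors are controlled exactly, or by a careful concentration-plus-union-bound argument showing two Haar samples suffice.
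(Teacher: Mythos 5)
There is a genuine gap at the heart of your $k=1$ step, and it propagates to general $k$. Your candidate mechanisms for achieving $\eta_{R,S}=\Omega(n^{-1/2})$ all fail: a $2\times 2$ minor $O_{r_1a}O_{r_2b}-O_{r_1b}O_{r_2a}$ has exactly two terms, not a sum of $\sim n$ terms with random-like signs, so if $O$ is a full $2n\times 2n$ flat (or Hadamard, or Haar-random) matrix with entries of magnitude $\sim n^{-1/2}$, every $2\times2$ minor is $\mathcal{O}(n^{-1})$ --- this is precisely the ``naive Hadamard'' baseline the paper sets up and rejects, and no choice of $R$ or second sample can lift it to $n^{-1/2}$. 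The idea you are missing is the intermediate block scale: the paper takes $2n=\ell(\ell+1)$ and sets $O^{(r)}=P_{\pi}\cdot D\cdot P_{\sigma_r}$ with $D$ block-diagonal over $\ell+1\approx\sqrt{2n}$ blocks, each an $\ell\times\ell$ \emph{lower-flat} orthogonal matrix whose entries are $\Omega(\ell^{-1/2})=\Omega(n^{-1/4})$. The row permutation $P_{\pi}$ comes from a sparsely arranged perfect matching of the Turán graph $T(2n,\ell+1)$, so each row pair $R\in\mathcal{D}_2$ straddles two distinct blocks; for any $S$ whose two indices lie one in each of those blocks, the submatrix $O_{R,S}$ is \emph{monomial} and its determinant is a product of two entries, hence $\Omega(\ell^{-1})=\Omega(n^{-1/2})$. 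The second matrix covers the same-block pairs by a column permutation. It is this monomial-minor mechanism, not generic-determinant anticoncentration, that delivers $n^{-1/2}$ (and, for general $k$, the product of $2k$ entries giving $n^{-k/2}$).

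Your general-$k$ construction ($k$ groups of size $2n/k$, cycled) also does not support $N>4k$ under the natural union-bound argument: with only $k$ blocks, the probability that a $2k$-subset distributes two-per-block under a random (or generically shifted) partition is a constant in $n$, so covering all $\Theta(n^{2k})$ subsets would force $N=\Omega(\log n)$ rounds, and you give no deterministic design that does better. The paper instead keeps $\ell+1\approx\sqrt{2n}$ blocks for every $k$ and asks that the $2k$ indices of $S$ fall in $2k$ \emph{distinct} blocks; the per-round failure probability is then $\mathcal{O}(\ell^{-1})$ (Prop.~\ref{prop:partitions}), so $N$ independent random column permutations fail for some $S$ with probability $\mathcal{O}\bigl(\binom{2n}{2k}\ell^{-N}\bigr)=\mathcal{O}(\ell^{4k-N})$, which vanishes exactly when $N>4k$. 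Your high-level framing (reduction to the minors of Prop.~\ref{prop:jm_pp}, block structure, union bound, $1/N$ randomization penalty) matches the paper, but without the $\sqrt{n}$-sized blocks and the monomiality of the resulting minors your scheme tops out at $\Omega(n^{-k})$ rather than $\Omega(n^{-k/2})$.
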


In Sec. \ref{sec:ir_syk} we will show that the parent from Theorem \ref{thm:jm} is asymptotically optimal, i.e., $\eta_{2k}$ has the same asymptotic scaling as the incompatibility robustness.

The proof of Theorem~\ref{thm:jm} follows by constructing a suitable collection of orthogonal matrices, as summarised in the next two subsections. For degree--2 observables, we explicitly construct two orthogonal matrices $O^{(1)}$ and $O^{(2)}$ that yield a joint measurement achieving the asymptotic lower bound for sharpness of $\Omega(n^{-1/2})$. For degree--$2k$ observables, we provide a randomised construction of $O^{(1)},\ldots,O^{(N)}$, which yields a suitable set of orthogonal matrices (i.e., such that the joint measurement satisfies Eq. \eqref{eq:eta_2k_lb}) with high probability. In the rare event that a randomly sampled set fails, the construction can be repeated until success. Further details and proofs can be found in Appendices \ref{A:jm_quadratic}, \ref{A:arbitrary_n} and \ref{A:jm_quartic}.

\subsection{Degree--$2$ Majoranas}\label{sec:jm_pair}

\begin{figure}[!t]
    \centering
	\includegraphics[width=8cm]{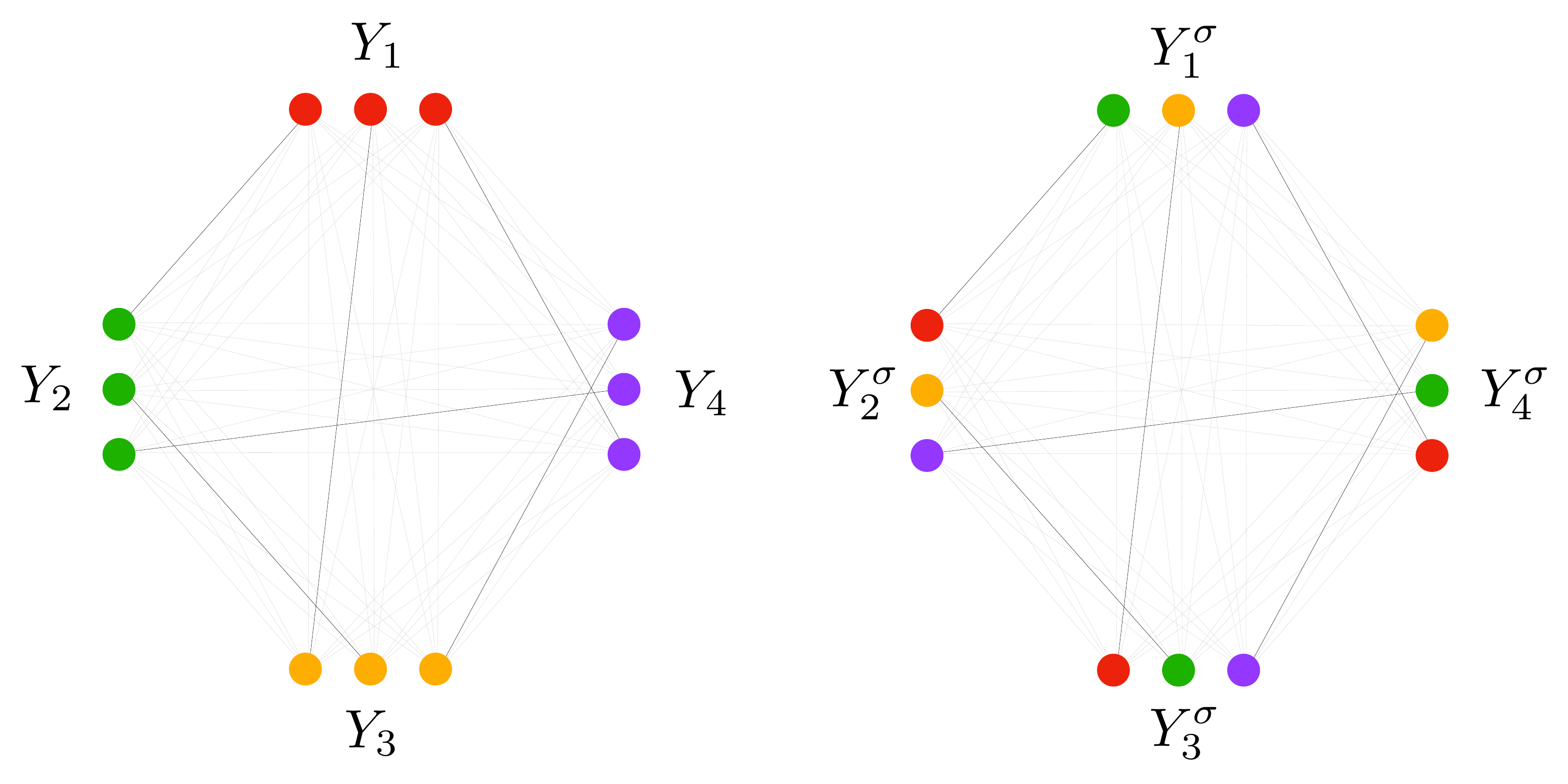}
	\caption{(Left.) A perfect matching $\mathcal{M}$ of the Turán graph $T(12,4)$, with 12 vertices partitioned into 4 (coloured) subsets $Y_1,\ldots, Y_4$, each containing 3 vertices. The sparse arrangement ensures that, given any two partition subsets, only one edge in the perfect matching connects them. (Right.) A second Turán graph is obtained from the first via a permutation $\sigma$ of the vertices, exchanging $v_1\leftrightarrow v_2$ if and only if $\{v_1,v_2\}\in\mathcal {M}$. Consequently, each partition subset $Y_1^{\sigma}\ldots, Y_4^{\sigma}$, has no identically coloured vertices.}
    \label{fig:perfect_matching}
\end{figure}

To specify $O^{(r)}$, we initially restrict the system size to $2n=\ell(\ell+1)$, with $\ell\in\mathbb{Z}^+$. 
We drop this assumption and generalize to an arbitrary $n$ mode system in Appendix \ref{A:arbitrary_n}. To ensure all quadratic Majorana observables are measured simultaneously with the required noise (Thm. \ref{thm:jm}), we randomize over two orthogonal rotations. We label each POVM $\G^{(j)}$, $j=1,2$, and the associated $2n\times 2n$ orthogonal matrices
\begin{equation}\label{eq:matrixQ}
O^{(1)}=P_{\pi} \cdot D\,\,\,\,\text{and}\,\,\,\, O^{(2)}=O^{(1)}\cdot P_{\sigma}\,,
\end{equation}
respectively, where $D=\bigoplus_{i=1}^{\ell+1}F^{(i)}$ is a block diagonal matrix, and $F^{(i)}\in O(\ell)$ is a lower-flat matrix (cf. Appendix \ref{A:jm_quadratic}). Lower-flat orthogonal matrices are variants of \emph{flat} orthogonal matrices, first introduced in \cite{jaming15}. The essential feature of an $\ell\times \ell$ lower-flat orthogonal matrix is that the magnitude of each matrix entry is lower bounded by $c\cdot \ell^{-1/2}$, for some positive constant $c$. One such example is a Hadamard matrix, although lower-flat orthogonal matrices exist for every order \cite{jaming15}. 

The permutation matrices $P_{\pi}$ and $P_{\sigma}$ act by permuting the rows and columns of $D$, respectively, via $\pi,\sigma\in S_{2n}$, where $S_{2n}$ is the symmetric group over $[2n]$. To define the permutations, we make use of perfect matchings of Turán graphs (see, e.g., Fig. \ref{fig:perfect_matching}). A \emph{perfect matching} $\mathcal{M}$ of a graph $G = (V, E)$ is a set of disjoint edges $\{v_1, v_2\} \in E$ such that every vertex $v \in V$ belongs to exactly one edge in $\mathcal{M}$. In particular, if $|V| = 2n$, then a perfect matching consists of exactly $n$ such edges. A \emph{Turán graph} $T(2n, \ell+1)$ is defined by partitioning the vertex set $V$ of size $2n$ into $\ell+1$ subsets of as equal size as possible, and connecting two vertices by an edge if and only if they belong to different subsets. An example of a perfect matching of a Turán graph $T(2n, \ell+1)$ is illustrated in Fig.~\ref{fig:perfect_matching} for the case $n = 6$, $\ell = 3$. For our joint measurement construction to work, we require the perfect matching to be \emph{sparsely arranged}. A sparsely arranged perfect matching $\mathcal{M}$ of a Turán graph is one in which every pair of partition subsets is connected by exactly one edge in $\mathcal{M}$.

Given $\mathcal{M}$, we choose a permutation $\pi \in S_{2n}$ that reorders the vertex labels such that the matching edges are mapped to pairs $\{2j-1,2j\}$, i.e., elements in $\mathcal{D}_2$. In particular, $\pi$ maps the standard pairings of $\mathcal{D}_2$ back to the original matched pairs, namely
\begin{equation}\label{eq:matching_perm}
\mathcal{M}=\{\,\{\pi^{-1}(2j-1),\pi^{-1}(2j)\}\,|\,j=1,\ldots,n\}\,.
\end{equation}

The second permutation $\sigma\in S_{2n}$ is chosen to exchange $v_1\leftrightarrow v_2$ if and only if $\{v_1,v_2\}\in\mathcal {M}$. Such a permutation maps one Turán graph to another by shuffling the elements into different partition subsets. In particular, each partition subset now contains vertices that were previously in different partition subsets (see Fig. \ref{fig:perfect_matching}, right).

To illustrate the structure of $O^{(1)}$ and $O^{(2)}$ in Eq. \eqref{eq:matrixQ}, we provide a pictorial representation for $n=6$ in Fig. \ref{fig:matrix_Q}. To achieve the scaling behaviour in Theorem \ref{thm:jm}, we require that for every $S\in\mathcal{S}_2$, there exists $R\in\mathcal{D}_2$ such that $\G^{(1)}$ or $\G^{(2)}$ jointly measures $\gamma_S$, with $\eta_{R,S}=|\det(O^{(r)}_{R,S})|=\Omega(\ell^{-1})$. Consider $O^{(1)}$ from Fig. \ref{fig:matrix_Q}, which we denote by $O=O^{(1)}$ for simplicity. For a pair of adjacent rows $R=\{2j-1,2j\}$, a $2\times 2$ submatrix $O_{R,S}$ is monomial when $S=\{j_1,j_2\}$ labels two columns containing elements of different colours. For example, if $R=\{1,2\}$, then $O_{R,S}$ is monomial if $j_1\in\{1,2,3\}$ and $j_2\in\{4,5,6\}$. The monomiality of $O_{R,S}$ ensures $\eta_{R,S}=\Omega(\ell^{-1})$, since its non-zero entries are elements from lower-flat matrices $F^{(i)}\in O(\ell)$, i.e., $\min_{u,v\in[\ell]}|f^{(i)}_{uv}|=\Omega(\ell^{-1/2})$ \cite{jaming15}. On inspection, $O$ yields the required sharpness for all $\gamma_S$, $S\in\mathcal{S}_2$, except when $S\subset\{1,2,3\},\{4,5,6\},\ldots,\{10,11,12\}$. In these exceptional cases, all relevant submatrices are singular. However, $O^{(2)}$ is constructed to ensure $\G^{(2)}$ jointly measures these remaining observables.

\begin{figure}[!t]
    \centering
	\includegraphics[width=8cm]{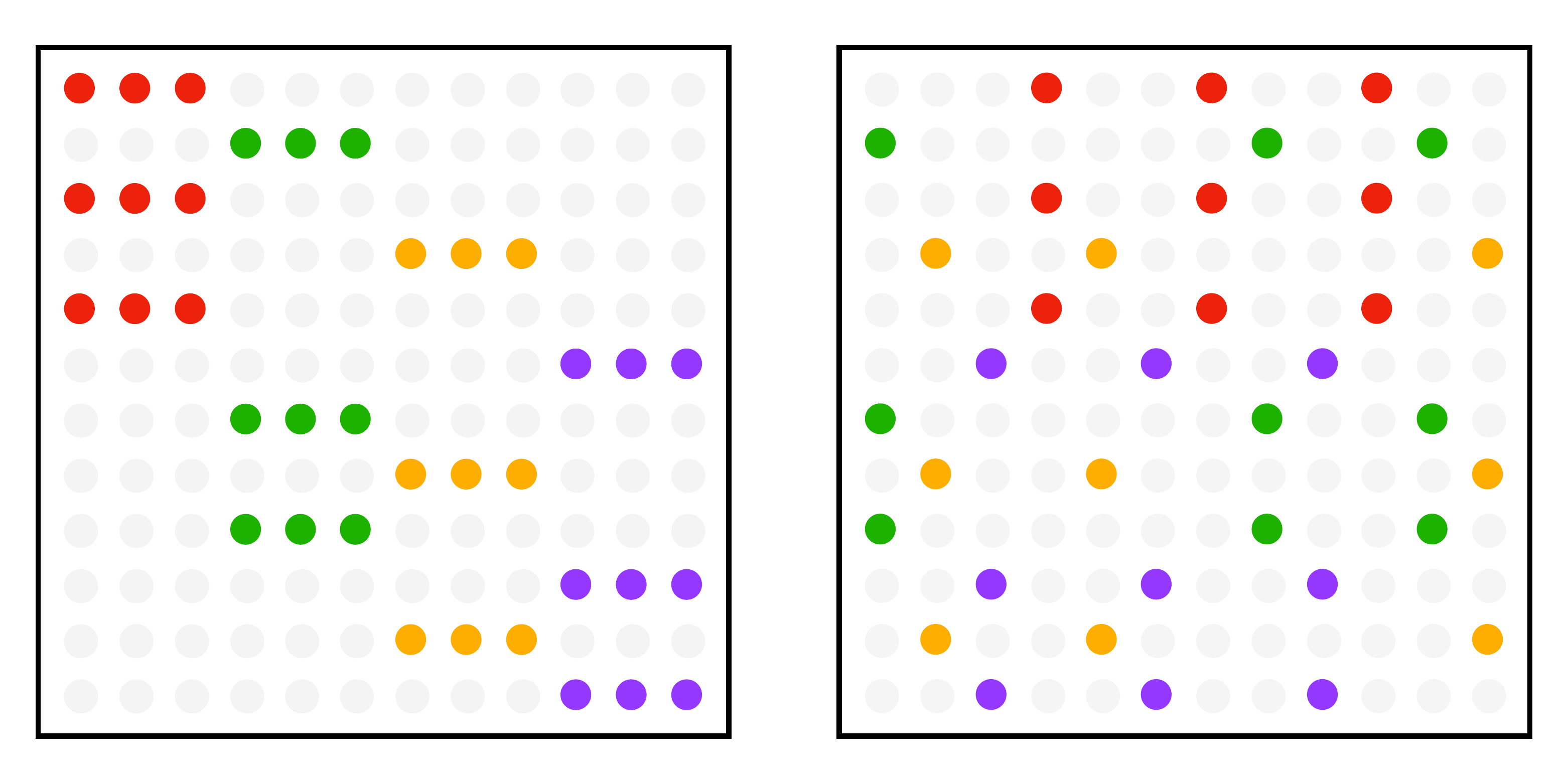}
	\caption{Pictorial representation, for $n=6$, of the orthogonal matrices $O^{(1)}$ (left) and $O^{(2)}$ (right) defining the POVMs $\G^{(1)}$ and $\G^{(2)}$, respectively. The block diagonal matrix $D$ consists of four $3\times3$ lower-flat orthogonal matrices $F^{(i)}$, each assigned a distinct colour, e.g., the elements of $F^{(1)}$ are coloured red. Uncoloured entries represent zeros. The matrix $O^{(1)}$ is obtained by a permutation of the rows of $D$, as defined by a perfect matching of the Turán graph $T(12,4)$. The matrix $O^{(2)}$ is obtained by a permutation of the columns of $O^{(1)}$, as defined by a mapping between the vertices of two Turán graphs (see Fig. \ref{fig:perfect_matching}).}
    \label{fig:matrix_Q}
\end{figure}

To illustrate the proof of Theorem \ref{thm:jm} (given in Appendices \ref{A:jm_quadratic} and \ref{A:arbitrary_n}), it is useful to rewrite the joint measurement of Eq. (\ref{eq:jm_gen}) in terms of a perfect matching $\mathcal{M}$. In particular, $\mathcal{M}$ generates a set of $n$ POVMs $\{\G^O_M\,|\,M\in\mathcal{M}\}$ which constitute the measurement,
\begin{equation}\label{eq:jm_matching}
\widetilde\G^{O}=(2^{2n})^{n-1}\prod_{M\in\mathcal{M}}\G^O_{M}\,,
\end{equation}
where $\G^O_M$ takes the form of Eq. (\ref{eq:jm}), replacing $R\in\mathcal{D}_2$ with an edge $M\in \mathcal{M}$. Applying the rotation $O^{(1)}=P_{\pi}\cdot D$ in Eq. (\ref{eq:jm_gen}) is now equivalent to applying $\widetilde O^{(1)}=D$ in Eq. (\ref{eq:jm_matching}), i.e., $\G^{P_{\pi}\cdot D}=\widetilde \G^D$, where $\pi\in S_{2n}$ is the permutation that defines the perfect matching (cf. Eq. \eqref{eq:matching_perm}).

The block diagonal matrix $D$, acting on the set of $2n$ Majorana operators, partitions them into $\ell+1$ subsets of cardinality $\ell$, with each subset rotated by a lower-flat matrix $F^{(i)}\in O(\ell)$, $i=1,\ldots,\ell+1$. In this setting, it is natural to describe the joint measurability properties of $\widetilde\G^D$ using a Turán graph $T(2n,\ell+1)$.

If $M\in\mathcal{M}$ is an edge of a perfect matching of $T(2n,\ell+1)$ it connects two vertices from distinct subsets. It is simple to check (Prop. \ref{prop:edge}, Appendix \ref{A:jm_quadratic}) that the POVM $\G^{D}_M$ jointly measures any sufficiently unsharp degree--2 observable which consists of a Majorana operator from each of the two subsets connected by $M$, i.e., all observables labelled by the edges between both subsets. To maximise the number of observables that are jointly measured by the $n$ POVMs $\{\G^{D}_M\,|\,M\in\mathcal{M}\}$, we choose a perfect matching that is sparsely arranged. Consequently, the $n$ POVMs of the perfect matching simultaneously measure all (sufficiently unsharp) degree--$2$ observables labelled by the edges of $T(2n,\ell+1)$. To jointly measure the remaining observables, the POVM $\widetilde\G^{D\cdot P_{\sigma}}$ is described by the same Turán graph but with a permutation of the vertices into different subsets. 

For example, in Fig. \ref{fig:perfect_matching} (left), the partition sets of the Turán graph are assigned a unique colour and the sparsely arranged perfect matching ensures any pair of distinct colours are connected by an edge. Thus, $\widetilde\G^{D}$ of Eq. (\ref{eq:jm_matching}) jointly measures all quadratic observables whose constituents are  distinctly coloured. The perfect matching and its permutation $\pi\in S_{2n}$ (cf. Eq. (\ref{eq:matching_perm})) specifies the matrix $P_{\pi}$ which permutes the rows of $D$ to obtain $O^{(1)}$ in Fig. \ref{fig:matrix_Q}. In Fig. \ref{fig:perfect_matching} (right), the permutation $\sigma\in S_{2n}$, maps the vertices of the first Turán graph to the second, such that identically coloured vertices are sent to different partition subsets. This defines the matrix $P_{\sigma}$ used to permute the columns of $D$, ensuring that all unmeasured observables are recovered by $\widetilde\G^{D\cdot P_\sigma}$, or equivalent $\G^{(2)}$. A complete proof, with bounds on the sharpness, is provided in Appendix \ref{A:jm_quadratic}. A generalisation to Turán graphs $T(2n,\ell+1)$, with $n$ arbitrary, is presented in Appendix \ref{A:arbitrary_n}.

\subsection{Degree--$2k$ Majoranas}\label{sec:jm_quartic}

To jointly measure unsharp degree--$2k$ Majorana observables $\M_S$ with $\eta_{2k}=\Omega(n^{-k/2})$, we employ a similar joint measurement scheme that requires a randomisation over $N$ POVMs $\G^{(r)}:=\G^{O^{(r)}}$, $r=1,\ldots,N$, each defined by a $2n\times 2n$ orthogonal matrix $O^{(r)}$  (cf. Eq. \eqref{eq:jm_gen}). Unlike the degree--$2$ case, we do not give an explicit closed-form construction of the $N$ orthogonal matrices. Instead, $\{O^{(r)}\}_{r=1}^N$ is constructed via a randomised procedure that yields a suitable set with high probability.

In particular, each POVM $\G^{(r)}$ is defined via a modified version of the matrix $O=P_{\pi} \cdot D$ from the degree--$2$ strategy (cf. Eq. (\ref{eq:matrixQ})), namely,
\begin{equation}\label{eq:quartic_matrixQ}
O^{(r)}=P_{\pi} \cdot D \cdot P_{\sigma_r}\,,
\end{equation} 
with $P_{\sigma_r}$ a permutation matrix defined by a randomly sampled $\sigma_r\in S_{2n}$. In particular, $O^{(r)}$ is obtained by permuting the columns of $P_{\pi} \cdot D$ with a permutation sampled uniformly at random from $S_{2n}$.

We prove in Appendix \ref{A:jm_quartic} that $N>4k$ randomly sampled permutations $\sigma_r$ are sufficient to ensure (with high probability) that all $\{\gamma_S\,|\,S\in\mathcal{S}_{2k}\}$ are jointly measured by $\G$ with $\eta_{2k}$ satisfying the desired bound. Explicitly, for the set $\{O^{(r)}\}_{r=1}^N$, we show that given any $S=\{j_1,\ldots,j_{2k}\}$, there exists $R\in\mathcal{D}_{2k}$, and at least one $r\in[N]$, such that $O^{(r)}_{R,S}$ is monomial. Since the determinant of the monomial matrix is determined by the product of $2k$ entries, each from an $\ell\times\ell$ lower-flat matrix, it follows that $\eta_{S,R}=\Omega(n^{-k/2})$.

To simplify the proof we again consider the perfect matching description of the joint measurement in Eq. (\ref{eq:jm_matching}), with each POVM $\widetilde\G^{(r)}$ defined by a random partition of the vertices of a Turán graph into $\ell+1$ subsets, described by $\sigma_r\in S_{2n}$. The associated rotation is given by $\widetilde O^{(r)}=D\cdot P_{\sigma_r}$ and the perfect matching $\mathcal{M}$ is again sparsely arranged. An observable $\gamma_S$ is jointly measured by $\widetilde\G^{(r)}$, with sufficient sharpness, if each element of $S\in\mathcal{S}_{2k}$ labels a vertex from a \emph{different} partition subset of the Turán graph, i.e., the $2k$ elements of $S$ lie in $2k$ disjoint partition subsets. In Prop. \ref{prop:partitions} (Appendix \ref{A:jm_quartic}) we calculate the probability of failing to jointly measure $\gamma_S$ from a single partition, and show (in Cor. \ref{cor:partitions}) that after $N>4k$ partitions of $T(2n,\ell+1)$, the probability of failing to jointly measure all observables approaches zero as $n$ increases.  

\section{Incompatibility robustness and the SYK model}\label{sec:ir_syk}

We now investigate the incompatibility robustness of degree--$k$ Majorana observables, initially dropping the restriction to even degree observables. In particular, for the measurements $\M_S(\pm)=\frac{1}{2}(\id\pm\eta\gamma_S)$, we evaluate the quantity, 
\begin{equation}\label{eq:inc_rob_def}
\eta_k^*=\sup\{\eta\,|\,\M_S \,\,\text{are compatible}\,\, \forall S\in\mathcal{S}_k\}\,,
\end{equation}
which, equivalently, is the minimal noise $(1-\eta)$ needed to ensure the existence of a parent POVM. Note that $\eta_{2k}^*$ provides an upper bound on $\eta_{2k}$ in Eq. (\ref{eq:min_sharpness}) for any parent POVM $\G$ since $\eta_{2k}^*=\max_{\G} \eta_{2k}$.

While incompatibility robustness is fully characterized for pairs of qubit observables \cite{busch86,yu10} as well as pairs of mutually unbiased bases \cite{carmeli19,kiukas22}, semidefinite programming is usually required for more general measurement assemblages \cite{cavalcanti17,designolle19a,designolle19b}. However, this approach quickly becomes infeasible for large collections of measurements due to the number of variables involved. Nevertheless, for sets of observables satisfying certain symmetry conditions, finding the incompatibility robustness via an SDP reduces to evaluating the maximum eigenvalue of a particular class of operators \cite{nguyen20}. In particular, given a measurement assemblage and a symmetry group $G$ which permutes the outcomes while maintaining the bundle structure (see Appendix \ref{Asec:symmetries}), the assemblage is required to be uniformly and rigidly symmetric for such a simplification to apply. Uniformity holds if each outcome is related to every other outcome by a symmetry transformation. Rigidity holds if the stabilizer group of each outcome commutes with a single projection operator and its complement. In particular, a unitary representation $U:G\rightarrow U(d)$, restricted to the stabilizer group, has exactly two irreducible subrepresentations. In Appendix \ref{Asec:symmetries}, we show that the assemblage of degree--$k$ Majorana observables (with $k$ even or odd) is uniformly and rigidly symmetric with respect to the braid group. In the spinor representation (see Appendix \ref{Asec:braids}), a braiding transformation acts, via conjugation, on the Majorana operators as,
\begin{equation}\label{eq:action_on_maj}
B_{i,j}\gamma_{\ell} B_{i,j}^{\dagger}=
\begin{cases}
 \gamma_{\ell} &\mbox{if} \,\,\,{\ell}\notin \{i,j\}\,, \\
 \gamma_j &\mbox{if}\,\,\, {\ell}=i \,,\\
- \gamma_i &\mbox{if}\,\,\, {\ell}=j \,.
 \end{cases}
\end{equation}

The symmetry properties ensure that calculating the incompatibility robustness (for each $k$) reduces to evaluating the operator norm, i.e., finding the maximum eigenvalue, of 
\begin{equation}\label{eq:syk}
H_k=\sum_{S\in\mathcal{S}_{k}}e_{S}\gamma_S\,,
\end{equation}
where $e_S\in\{\pm 1\}$ for all $S\in\mathcal{S}_{k}$.
In Appendix \ref{Asec:syk} we show that the incompatibility robustness can be expressed as follows.
\begin{lemma}\label{lem:robustness}
The set $\{\gamma_S\,|\,S\in \mathcal{S}_{k}\}$ of degree--$k$ Majorana observables of an $n$ mode fermionic system has incompatibility robustness,
\begin{equation}\label{eq:inc_rob}
\eta_k^*= \frac{1}{|\mathcal{S}_{k}|}\max_{e_S\in\{\pm 1\}\forall S\in\mathcal{S}_k}{\no{H_k}}_{\infty}\,.
\end{equation}
\end{lemma}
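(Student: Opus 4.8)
The plan is to invoke the general symmetry-reduction result for incompatibility robustness of Nguyen et al.~\cite{nguyen20}, which applies to measurement assemblages that are uniformly and rigidly symmetric under a group $G$. The main structural work, relegated to Appendices \ref{Asec:symmetries}--\ref{Asec:braids}, is to verify these hypotheses for the assemblage $\{\M_S\}_{S\in\mathcal{S}_k}$ with $\M_S(\pm)=\tfrac12(\id\pm\eta\gamma_S)$ and $G$ the braid group acting via the spinor representation \eqref{eq:action_on_maj}. Once those are in place, the general theorem expresses the robustness as a maximisation (over sign assignments to the outcomes) of the operator norm of a single ``noise'' operator built from the effects, rescaled by the number of observables; the remaining task is to identify that operator explicitly as $\tfrac{1}{|\mathcal{S}_k|}H_k$ with $H_k=\sum_{S}e_S\gamma_S$.

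Concretely, I would proceed in the following steps. First, I establish the bundle/symmetry setup: each measurement $\M_S$ has a two-element outcome set, and a braiding $B_{i,j}$ permutes the Majorana operators as in \eqref{eq:action_on_maj}, hence permutes the labels $S\in\mathcal{S}_k$ (possibly flipping a sign, which corresponds to swapping the two outcomes of $\M_S$). One checks that the braid group acts transitively on the full outcome set (uniformity) — any $(S,\pm)$ can be sent to any $(S',\pm')$ — using that the braid group acts transitively on $k$-subsets of $[2n]$ together with the ability to flip individual signs. Second, I verify rigidity: the stabiliser of a fixed outcome $(S,+)$ must commute with the pair of complementary projections $\P_S(\pm)=\tfrac12(\id\pm\gamma_S)$, equivalently the spinor representation restricted to this stabiliser decomposes into exactly the two eigenspaces of $\gamma_S$; this follows from the fact that the stabiliser consists of braidings that fix $S$ setwise without sign change, all of which commute with $\gamma_S$, and that $\gamma_S$ has exactly two eigenvalues $\pm1$ (since $\gamma_S^2=\id$) with the representation acting irreducibly on each eigenspace. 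Third, with uniformity and rigidity confirmed, I apply \cite[Theorem ...]{nguyen20}: the incompatibility robustness equals $\sup\{\eta : \exists \text{ parent}\}$ and is computed by the formula that, for a rigidly symmetric family of unbiased binary qubit-like observables, reduces to $\eta_k^* = \bigl(\min\text{ over valid dual configurations}\bigr)$, which dualises to $\tfrac{1}{|\mathcal{S}_k|}\max_{e}\no{H_k}_\infty$. The identification of the extremal operator as $H_k=\sum_S e_S\gamma_S$ comes from the structure $\M_S(e_S)=\tfrac12(\id+e_S\eta\gamma_S)$: summing the ``traceless parts'' $e_S\gamma_S$ over all $S$ with a free choice of outcome sign $e_S$ is exactly the operator whose norm governs positivity of the candidate parent POVM.

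The main obstacle I anticipate is the careful verification of \emph{rigidity} — i.e., that the stabiliser subgroup of each outcome, under the spinor representation, has exactly two irreducible components aligned with the spectral projections of $\gamma_S$. This requires understanding the image of the braid-group stabiliser inside the Clifford algebra and checking irreducibility on each $\pm1$-eigenspace of $\gamma_S$; a naive stabiliser might be too small to act irreducibly, so one may need to enlarge $G$ (e.g.\ include sign-flip generators, or pass to an extended braid group) so that the ``rigid symmetry'' hypotheses of \cite{nguyen20} genuinely hold. A secondary, more routine difficulty is matching conventions between the SDP formulation of incompatibility robustness used in \cite{nguyen20} (which typically phrases things in terms of a depolarising parameter and a primal/dual pair of SDPs) and the sharpness parameter $\eta$ used here; care is needed so that the factor $1/|\mathcal{S}_k|$ and the $\sup$/$\max$ structure come out correctly, and in particular that the optimisation over sign patterns $e_S$ genuinely arises from the dual feasibility conditions rather than being imposed by hand.
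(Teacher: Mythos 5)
Your proposal follows essentially the same route as the paper: verify uniformity and rigidity of the assemblage under the (spinor representation of the) braid group, then apply the symmetry-reduced dual SDP of \cite{nguyen20} with the ansatz $X_{e_S}=a\id+b\M_S(e_S)$ to extract $\eta_k^*=|\mathcal{S}_k|^{-1}\max_{\ve}\no{H_k}_\infty$. The obstacle you flag for rigidity is real and is resolved exactly as you anticipate: the commutant of the pure braid-group stabiliser is $\mathrm{Span}(\id,\gamma_S,Q\gamma_S,Q)$ (too large, containing the parity operator $Q$), and the paper fixes this either by restricting to a fixed-parity sector or by enlarging the symmetry group to include single Majorana transformations.
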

Interestingly, the spectrum of the operator $H_k$ has been extensively studied in relation to the Sachdev-Ye-Kitaev (SYK) model \cite{sachdev93,kitaev15,feng19,hastings21,herasymenko22}, which is of independent interest in condensed matter physics and string theory.
We note that an upper bound on the incompatibility robustness for binary observables can be derived without symmetry considerations \cite{kunjwal14}, which for Majorana observables is equivalent to the right-hand side of Eq. (\ref{eq:inc_rob}).

We now invoke several results on the spectral behavior of $H_k$ to derive bounds (sometimes tight) on the incompatibility robustness, as summarized in the next theorem.
\begin{theorem}\label{thm:inc_rob}
Let $\{\gamma_S\,|\,S\in \mathcal{S}_{2k}\}$ denote the set of degree--$2k$ Majorana observables of an $n$ mode fermionic system.
\begin{itemize}
\item[(i)] For $k=1$, the incompatibility robustness satisfies $\eta_2^*\leq \frac{1}{\sqrt{2n-1}}$, and is saturated if and only if a $2n\times 2n$ (real) skew-Hadamard matrix exists. Furthermore, the joint measurement in Sec. \ref{sec:jm_pair} is asymptotically optimal, i.e., $\eta_2^*= \Theta(n^{-1/2})$.
\item[(ii)] For $k\leq 5$, and all $n$ sufficiently large, the incompatibility robustness satisfies $\eta_{2k}^*\leq \sqrt{\binom{n}{k}/\binom{2n}{2k}}$. Furthermore, the joint measurement in Sec. \ref{sec:jm_quartic} is asymptotically optimal, i.e., $\eta_{2k}^*= \Theta(n^{-k/2})$.
\end{itemize}
\end{theorem}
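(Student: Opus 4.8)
The plan is to prove Theorem~\ref{thm:inc_rob} by combining Lemma~\ref{lem:robustness} with known spectral bounds on the SYK-type operator $H_k = \sum_{S\in\mathcal S_k} e_S\gamma_S$, and then matching these against the achievability bound $\eta_{2k}=\Omega(n^{-k/2})$ from Theorem~\ref{thm:jm}. For part~(i), I start from $\eta_2^* = \tfrac{1}{|\mathcal S_2|}\max_{e_S}\|H_2\|_\infty$ with $|\mathcal S_2|=\binom{2n}{2}=n(2n-1)$. Writing $H_2 = \sum_{i<j} e_{ij}\gamma_i\gamma_j = \tfrac12\sum_{i,j} A_{ij}\gamma_i\gamma_j$ for a real antisymmetric $2n\times 2n$ matrix $A$ with $\pm1$ off-diagonal entries, I would use the standard fact that a quadratic Majorana Hamiltonian can be brought to normal form: bringing $A$ to block-diagonal form $\bigoplus_m \lambda_m\bigl(\begin{smallmatrix}0&1\\-1&0\end{smallmatrix}\bigr)$ via an orthogonal transformation (which is implemented by a Gaussian unitary, preserving the spectrum of $H_2$), gives $\|H_2\|_\infty = \sum_{m=1}^{n}|\lambda_m|$, where $\pm i\lambda_m$ are the eigenvalues of $A$. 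Since $A$ is antisymmetric with zero diagonal and $\pm1$ entries, $\operatorname{tr}(A^\top A) = \sum_m 2\lambda_m^2 = 2n(2n-1)$, so by Cauchy--Schwarz $\sum_m|\lambda_m| \le \sqrt{n}\cdot\sqrt{\sum_m\lambda_m^2} = \sqrt{n}\cdot\sqrt{n(2n-1)} = n\sqrt{2n-1}$. Dividing by $|\mathcal S_2| = n(2n-1)$ yields $\eta_2^* \le 1/\sqrt{2n-1}$. Equality in Cauchy--Schwarz requires all $|\lambda_m|$ equal, i.e. $A^\top A = (2n-1)\id$, which combined with $A$ antisymmetric and $\pm1$-valued is exactly the definition of a (real) skew-Hadamard matrix of order $2n$; conversely such a matrix yields the extremal $e_S$. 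The asymptotic claim $\eta_2^* = \Theta(n^{-1/2})$ then follows: the upper bound gives $O(n^{-1/2})$ unconditionally, and Theorem~\ref{thm:jm} with $N=2$ gives a matching $\Omega(n^{-1/2})$ lower bound via $\eta_2 \le \eta_2^*$.

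For part~(ii), the upper bound $\eta_{2k}^* \le \sqrt{\binom{n}{k}/\binom{2n}{2k}}$ should come from the Hastings--O'Donnell bound on the SYK spectrum cited as \cite{hastings21}: for $k=\mathcal O(1)$ and $n$ large, $\max_{e_S}\|H_{2k}\|_\infty = O\!\bigl(\sqrt{\binom{2n}{2k}\binom{n}{k}}\bigr)$ (more precisely their result controls the operator norm of random and worst-case signed sums of degree-$2k$ Majorana monomials; the relevant statement is that the norm does not exceed $\sqrt{\binom{2n}{2k}}$ times roughly $\binom{n}{k}^{1/2}$, up to the constant captured by taking $n$ sufficiently large). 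Plugging into Eq.~\eqref{eq:inc_rob} with $|\mathcal S_{2k}| = \binom{2n}{2k}$ gives $\eta_{2k}^* \le \sqrt{\binom{n}{k}/\binom{2n}{2k}}$. One then checks that $\binom{n}{k}/\binom{2n}{2k} = \Theta(n^{-k})$ for fixed $k$ (the leading terms are $\sim n^k/k!$ over $\sim (2n)^{2k}/(2k)!$), so the upper bound is $O(n^{-k/2})$; combined with the $\Omega(n^{-k/2})$ achievability from Theorem~\ref{thm:jm} (valid for all $k=\mathcal O(1)$, hence in particular $k\le 5$) and $\eta_{2k}\le\eta_{2k}^*$, this pins down $\eta_{2k}^* = \Theta(n^{-k/2})$.

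I expect the main obstacle to be the precise invocation of the Hastings--O'Donnell result: their bound on the SYK operator norm is stated for a particular normalization and regime, and the restriction to $k\le 5$ presumably reflects exactly where their techniques (or the constants in their estimates) yield the clean $\sqrt{\binom{2n}{2k}\binom{n}{k}}$-type bound rather than something weaker. I would need to state carefully which of their theorems applies, how the "worst-case signs" $\max_{e_S}\|H_{2k}\|$ relates to the random-sign bounds they prove (typically via a union bound or a net argument showing the worst case is not much larger than typical), and why $n$ "sufficiently large" absorbs their multiplicative constants into the stated inequality. The quadratic case is self-contained and clean; the degree-$2k$ case is essentially a black-box citation whose only real content is the arithmetic $\binom{n}{k}/\binom{2n}{2k} = \Theta(n^{-k})$ and the matching with Theorem~\ref{thm:jm}. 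A secondary (minor) point is justifying that the Gaussian-unitary conjugation used in part~(i) genuinely preserves $\|H_2\|_\infty$ and that every real antisymmetric $\pm1$-matrix is realized by some choice of signs $e_S$ — both are immediate but worth a sentence.
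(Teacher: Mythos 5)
Your proposal follows essentially the same route as the paper: reduce to the SYK operator norm via Lemma~\ref{lem:robustness}, handle $k=1$ by bringing the antisymmetric $\pm1$ coefficient matrix to normal form and applying Cauchy--Schwarz to $\sum_j|\lambda_j|$ (with equality characterized by $\id+A$ being a skew-Hadamard matrix --- note it is $\id+A$, not $A$ itself, that is skew-Hadamard), cite the Hastings--O'Donnell bound for $k\le 5$, and match against Theorem~\ref{thm:jm} for the lower bounds. The only substantive difference is that the paper's Lemma~\ref{thm:odonnell} is a clean deterministic, constant-free bound $\no{H_{2k}}_\infty\le\sqrt{\binom{2n}{2k}\binom{n}{k}}$ for arbitrary coefficients with fixed $\ell_2$-norm (proved via a Lov\'asz-theta argument, not a union bound over signs), so no extra work is needed to pass from random to worst-case signs or to remove multiplicative constants.
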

We sketch the proofs of parts $(i)$ and $(ii)$ in the next two subsections (with further details in Appendix \ref{A:IR}).

\subsection{Degree--$2$ Majoranas}\label{sec:pairs}

The coefficients of $H_2=i\sum_{i<j}e_{ij}\gamma_i\gamma_j$ define a unique $2n\times 2n$ real antisymmetric matrix $E=-E^{T}$, with entries $e_{ij}=-e_{ji}$ for $i<j$. As described in Appendix \ref{A:IR}, the operator $H_2$ can be mapped via an orthogonal matrix $O\in O(2n)$ to a free fermion Hamiltonian, $H_2=i\sum_{j=1}^n\lambda_j\,\widetilde \gamma_{2j-1}\widetilde \gamma_{2j}$, where $\lambda_j\in\mathbb{R}$ and $\pm i\lambda_j$ are the eigenvalues of $E$.

Since $H_2$ is now a sum of commuting projection operators (with eigenvalues $\pm 1$), the eigenvalues of $H_2$ have the form $\sum_j \pm \lambda_j$. Rewriting in terms of fermionic operators it follows that the eigenvalues of $H_2$ are given by $\lambda_{\bf m}=\sum_{j=1}^{n}\lambda_j(-1)^{m_j}$, where ${\bf m}=(m_1,m_2,\ldots,m_n)$ with $m_j\in\{0,1\}$. The maximum eigenvalue of $E$ is achieved when $m_j=1$ for $\lambda_j<0$ and $m_j=0$ for $\lambda_j\geq 0$ such that ${\no {H_2}}_{\infty}=\sum_{j=1}^{n}|\lambda_j|$. We now find a bound on the maximum eigenvalue of any matrix from the set of $2n\times 2n$ matrices $\mathcal{E}_{2n}=\{E\,|\,E=-E^{T}, \,\,e_{ij}\in\{\pm 1\}, \forall i\neq j\}$. The following result is proved in Appendix \ref{A:IR} using the theory of tournament matrices \cite{Wesp02,Ito17}.

\begin{lemma}\label{lem:hboundpair}
Let $E\in\mathcal{E}_{2n}$, with eigenvalues $\pm i\lambda_j$. Then, $\sum_{j=1}^{n}|\lambda_j|\leq n\sqrt{2n-1}$. The bound is tight if and only if there exists $E\in\mathcal{E}_{2n}$ such that $E+\id$ is a $2n\times 2n$ (real) Hadamard matrix.
\end{lemma}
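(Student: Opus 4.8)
The plan is to bound $\sum_{j=1}^n |\lambda_j|$ using only the two quantities we can control: $\tr{E^T E}$, which is fixed because all off-diagonal entries are $\pm 1$, and the fact that $E + \id$ is a $\pm 1$-matrix, which will pin down exactly when the Cauchy–Schwarz step is tight. Since $E$ is real antisymmetric of even order, its eigenvalues come in conjugate pairs $\pm i\lambda_j$ with $\lambda_j \in \real$, and $\tr{E^T E} = -\tr{E^2} = \sum_{j=1}^n 2\lambda_j^2$. On the other hand every off-diagonal entry of $E$ has modulus $1$ and the diagonal is zero, so $\tr{E^T E} = \sum_{i \ne j} e_{ij}^2 = 2n(2n-1)$. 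Hence $\sum_{j=1}^n \lambda_j^2 = n(2n-1)$, and by Cauchy–Schwarz $\sum_{j=1}^n |\lambda_j| \le \sqrt{n \sum_{j=1}^n \lambda_j^2} = \sqrt{n \cdot n(2n-1)} = n\sqrt{2n-1}$, which is the claimed bound. This establishes the inequality for every $E \in \mathcal{E}_{2n}$ with no further input.

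Next I would analyze the equality case. Cauchy–Schwarz is tight precisely when all $|\lambda_j|$ are equal, i.e.\ $|\lambda_j| = \sqrt{2n-1}$ for every $j$; equivalently $E^T E = E E^T = -E^2 = (2n-1)\id$, so $M := E + \id$ satisfies $M^T M = (E^T + \id)(E + \id) = E^T E + E + E^T + \id = (2n-1)\id + \id = 2n\,\id$ (using $E^T = -E$). Since $M$ has entries $e_{ij} \in \{\pm 1\}$ off the diagonal and $1$ on the diagonal, every entry of $M$ is $\pm 1$, so $M^T M = 2n\,\id$ says exactly that $M$ is a $2n \times 2n$ real Hadamard matrix. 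Conversely, if some $E \in \mathcal{E}_{2n}$ yields a Hadamard matrix $M = E + \id$, then $M^T M = 2n\,\id$ forces $E^T E = (2n-1)\id$ by reversing the computation, hence all $|\lambda_j| = \sqrt{2n-1}$ and the bound $n\sqrt{2n-1}$ is attained. This gives the "if and only if."

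The only subtlety — and the part that needs a careful pointer to the cited tournament-matrix theory \cite{Wesp02,Ito17} — is that one should confirm the equality condition is genuinely realizable-or-obstructed by a \emph{combinatorial} object rather than vacuously stated: a matrix $E \in \mathcal{E}_{2n}$ with $E + \id$ Hadamard is exactly (the signed adjacency form of) a \emph{skew-Hadamard} matrix, and such matrices are conjectured but not known to exist for all orders $2n \equiv 0 \pmod 4$; for $2n$ not divisible by $4$ they cannot exist, so the bound is strict there. I do not expect any obstacle in the inequality itself — the norm identity $\tr{E^2} = -\sum 2\lambda_j^2$ and the entrywise trace count are routine — so the main work is just phrasing the equality analysis cleanly and correctly identifying $E + \id$ as the Hadamard/skew-Hadamard object, which is where the tournament-matrix references do the bookkeeping of signs and the connection to $\{0,1\}$ tournament adjacency matrices.
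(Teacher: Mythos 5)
Your proposal is correct and follows essentially the same route as the paper: the identity $\tr{E^TE}=2n(2n-1)$ from the $\pm1$ off-diagonal entries, Cauchy--Schwarz on the $|\lambda_j|$, and the observation that equality forces $E^TE=(2n-1)\id$, which together with $E^T=-E$ makes $E+\id$ a $\pm1$ matrix with $(E+\id)^T(E+\id)=2n\,\id$, i.e.\ Hadamard. Your explicit expansion of the cross terms in $(E^T+\id)(E+\id)$ is a slightly more direct way to reach the same conclusion the paper obtains via normality and unitary diagonalization, and your closing remark about realizability (skew-Hadamard matrices, divisibility by $4$) matches the discussion following the lemma in the main text.
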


It follows directly from Lemma \ref{lem:hboundpair} that $\max_{E\in\mathcal{E}_{2n}}{\no {H_2}}_{\infty}\leq n\sqrt{2n-1}$. Applying Eq. (\ref{eq:inc_rob}) with $|\mathcal{S}_2|=n(2n-1)$, we arrive at the upper bound provided in part $(i)$ of Theorem \ref{thm:inc_rob}. Note that the bound cannot be saturated for all system sizes. An $n\times n$ real Hadamard matrix does not exist if $n$ is not a multiple of 4 (except when $n=2$). 
However, Hadamard's conjecture suggests they exist in all other dimensions \cite{paley33}, which is confirmed for $n<668$ \cite{kharaghani05,bruzda}. A review of skew-Hadamard matrices and their constructions is given in \cite{koukouvinos08}; the first instance in which their existence is open is $n=276$.

Together with Theorem \ref{thm:jm}, which lower bounds $\eta_{2}$ via the joint measurement described in Sec. \ref{sec:jm_pair}, we conclude $\eta_2^*=\Theta(n^{-1/2})$.

\subsection{Degree--$2k$ Majoranas}\label{sec:quad}

In relation to the SYK model, most attention has been directed towards bounding the maximum eigenvalue of a random degree--$4$ Hamiltonian of the form (\ref{eq:syk}), with coefficients $e_S$ chosen to be independent random Gaussians, such that $\sum_{S\in\mathcal{S}_k}e_S^2=1$ in expectation. While physical arguments suggest the maximum eigenvalue of any such operator drawn randomly from the degree--$4$ SYK model satisfies $\Theta(\sqrt{n})$, with high probability, a mathematical proof was only recently provided \cite{hastings21}. 

The proof relies on a general upper bound, given in [Theorem 1.5, \cite{hastings21}], for operators of the form (\ref{eq:syk}) with no constraints on the coefficients except for $\sum_{S}e_S^2=1$. We restate this result below, with a rescaling of the coefficients.

\begin{lemma}\label{thm:odonnell}
For an $n$ mode fermionic system, let $H_{2k}=\sum_{S\in\mathcal{S}_{2k}}e_{S}\gamma_S$, with $e_S\in\mathbb{R}$ and $\sum_{S}e_S^2=\binom{2n}{2k}$. If $k\leq 5$, then
\begin{equation}\label{eq:odonnell}
{\no{H_{2k}}}_{\infty}\leq \sqrt{\binom{2n}{2k}\binom{n}{k}}\,,
\end{equation}
for all $n$ sufficiently large.
\end{lemma}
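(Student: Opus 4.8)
The plan is to derive Lemma~\ref{thm:odonnell} as a direct corollary of Theorem~1.5 of Hastings and O'Donnell~\cite{hastings21}, which is stated for operators of the form $H=\sum_{S\in\mathcal{S}_{2k}}e_S\gamma_S$ under the normalization $\sum_S e_S^2=1$. Their result asserts that, for $k\le 5$ and $n$ sufficiently large, such an operator satisfies ${\no{H}}_\infty\le C(k)\cdot n^{-k/2}$ for an explicit constant $C(k)$; more precisely, their bound takes the form ${\no{H}}_\infty \le \sqrt{\binom{n}{k}/\binom{2n}{2k}}$ once their $2^k k!$-type combinatorial prefactor is rewritten in binomial form. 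The only work here is bookkeeping: unwinding their normalization convention, matching their definition of the Majorana monomials and the index set $\mathcal{S}_{2k}$ to ours (in particular the Hermiticity phase $i^{\binom{2k}{2}}$, which does not affect operator norms), and carrying out the rescaling.

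First I would recall the precise statement of~\cite[Theorem~1.5]{hastings21} in our notation, being careful that their sum is over unordered $2k$-subsets of $[2n]$ exactly as our $\mathcal{S}_{2k}$, and that the phase we attach to $\gamma_S$ is a unit scalar so that $\no{\gamma_S}_\infty=1$ and the spectral bound is phase-independent. Second, I would observe that if $H_{2k}=\sum_S e_S\gamma_S$ with $\sum_S e_S^2=\binom{2n}{2k}$, then $H_{2k}/\sqrt{\binom{2n}{2k}}$ has coefficients whose squares sum to $1$, so Theorem~1.5 applies to it and yields ${\no{H_{2k}}}_\infty \le \sqrt{\binom{2n}{2k}}\cdot\big(\text{their bound}\big)$. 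Third, I would verify that their bound, when expressed in closed form, equals $\sqrt{\binom{n}{k}}$; this is where the $k\le 5$ restriction enters, since that is precisely the range in which~\cite{hastings21} establish the matching upper bound (below $\sqrt{\binom{n}{k}}$ up to constants, and with the stated clean binomial form). Combining these gives Eq.~(\ref{eq:odonnell}).

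The main obstacle is purely notational reconciliation rather than mathematical: making sure the combinatorial constant in~\cite[Theorem~1.5]{hastings21} is transcribed into exactly the binomial expression $\sqrt{\binom{2n}{2k}\binom{n}{k}}$ claimed here, including constant factors and any lower-order terms that get absorbed into ``for all $n$ sufficiently large.'' One must also confirm that the hypothesis there is genuinely coefficient-agnostic apart from the $\ell^2$-normalization (so that it covers our arbitrary real $e_S$, not merely the i.i.d.\ Gaussian SYK ensemble), which the authors state explicitly. No new estimates or probabilistic arguments are needed; the content is entirely imported from~\cite{hastings21}.
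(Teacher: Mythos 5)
Your proposal matches the paper's own treatment: the paper likewise does not prove the bound but imports it directly from Theorem~1.5 of Hastings and O'Donnell, restated with the coefficients rescaled from $\sum_S e_S^2=1$ to $\sum_S e_S^2=\binom{2n}{2k}$, exactly as you describe. The only substance beyond citation is the normalization bookkeeping, which you handle correctly.
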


The proof, which we do not reproduce here, is found in \cite{hastings21}. The argument relies on a graph theoretic approach, where the Majorana monomials of the Hamiltonian are represented as vertices of a graph, and the anti-commutation relations are described by its edges. Evaluating the Lovász number of a special class of Kneser-type graphs yields the desired bound.

Since Lemma \ref{thm:odonnell} includes Hamiltonians with coefficients limited to $e_S\in\{\pm 1\}$, it can be directly applied to bound the incompatibility robustness (of Lemma \ref{lem:robustness}) for all degree--$2k$ Majorana observables with $k\leq 5$. In particular, it follows from Eqs. (\ref{eq:inc_rob}) and (\ref{eq:odonnell}), together with $|\mathcal{S}_{2k}|=\binom{2n}{2k}$, that $\eta_{2k}^*\leq \sqrt{\binom{n}{k}/\binom{2n}{2k}}$. With the joint measurement from Theorem \ref{thm:jm} we obtain $\eta_{2k}^*=\Theta(n^{-k/2})$, as stated in part $(ii)$ of Theorem \ref{thm:inc_rob}.

\section{Estimating fermionic observables}\label{sec:est}

We now employ our joint measurement for two estimation tasks: fermionic partial tomography and estimating energies of Hamiltonians. Note that due to fermionic parity conservation, most physical observables involve linear combinations of even degree Majorana operators. Approximating $k$--body reduced density matrices ($k$--RDMs) of a quantum state $\rho$ requires estimating $\tr{\gamma_S\rho}$ for each degree--$2k$ observable $\gamma_S$  \cite{Zhao21,Jiang20}. For instance, knowledge of $k$--RDMs, with $k\leq 4$, has useful applications in error mitigation, modeling electrons in condensed matter systems and excited-state calculations \cite{takeshita20,wang21,peterson13,parrish19}. Approximating energies of chemical Hamiltonians, e.g., to find binding energies of molecules with $2$--body interactions, requires estimating linear combinations of degree--$2$ and degree--$4$ Majorana observables \cite{huggins21,wan22}.

\subsection{Local observables}\label{sec:local}

The joint measurement $\G^O$ in Eq. (\ref{eq:jm_gen}) allows us to simultaneously estimate the expectation values $\tr{\gamma_S\rho}$ of all degree--$2k$ Majorana observables via the single shot estimators $\hat\gamma_{S}=\eta_S^{-1}e_S$. Here,  $\eta_S$ is defined in Eq. (\ref{eq:optimal_eta_S}) and $e_S\in\{\pm1\}$ is the outcome of the unsharp Majorana observable $\M_S$ obtained from $\G^O$ via the mapping of Eq. (\ref{eq:pp_map}), i.e., $e_{S}=\text{sgn}(O_{R,S})x_Sq_R$. It is easy to check that $\mathbb{E}[\hat\gamma_{S}]=\eta_S^{-1}\mathbb{E}[e_S]=\tr{\gamma_S\rho}$, where the expectation value $\mathbb{E}$ is over the outcome statistics of the POVM $\G^O$ on the state $\rho$.

For a fixed $O\in\mathcal{O}(2n)$, let $\hat\gamma_{S}^{(j)}$ denote the estimator of $\tr{\gamma_S\rho}$ from the $j$--th implementation of the joint measurement $\G^O$. We wish to determine how many copies $L_{2k}$ of the state $\rho$ are needed to bound the error of $\hat\Gamma_{S}=\frac{1}{L_{2k}}\sum_{j=1}^{L_{2k}} \hat\gamma_{S}^{(j)}$, for all $S\in\mathcal{S}_{2k}$, such that $|\tr{\gamma_{S}\rho}-\hat \Gamma_{S}|< \epsilon$, with probability at least $1-\delta$. Since the single shot estimators are binary, Hoeffding's inequality provides an effective way to bound the sample complexity, and avoids the less efficient median-of-means approach \cite{hoeffding63}. In particular, as $\hat\gamma_{S}^{(j)}\in\{\pm \eta_S^{-1}\}$, then
\begin{equation}
\text{Pr}\left(|\tr{\gamma_S\rho}-\hat\Gamma_S|<\epsilon\right)>1-2e^{-L_{2k}(\epsilon\eta_S)^2/2}\,,
\end{equation}
for each $S\in \mathcal{S}_{2k}$. If we require each probability failure to be no more than $\delta/|\mathcal{S}_{2k}|$ and take the union bound over all $|\mathcal{S}_{2k}|$ events, we succeed with probability at least $1-\delta$ by setting $2e^{-L_{2k}(\epsilon\eta_{2k})^2/2}=\delta/|\mathcal{S}_{2k}|$. Applying Theorem \ref{thm:jm}, the number of samples $L_{2k}$ required to obtain the confidence interval $1-\delta$ is given below.

\begin{proposition}\label{prop:sample_complexity}
Let $\{\gamma_S\,|\,S\in\mathcal{S}_{2k}\}$ denote the set of degree--$2k$ Majorana observables of an $n$ mode fermionic system. For $k=\mathcal{O}(1)$, the set of degree--$2k$ observables can be simultaneously estimated via the joint measurement of Eq. (\ref{eq:jm_gen}), up to additive error $\epsilon$ with probability at least $1-\delta$, with sample complexity
\begin{equation}
L_{2k}=\mathcal{O}\left(\frac{2n^{k}}{\epsilon^2}\log\left(\frac{2|\mathcal{S}_{2k}|}{\delta}\right)\right)\,.
\end{equation}
\end{proposition}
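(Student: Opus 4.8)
The plan is to derive the sample complexity bound directly from the Hoeffding-plus-union-bound argument already sketched in the text immediately preceding the proposition, inserting the sharpness lower bound from Theorem~\ref{thm:jm}. First I would recall the single-shot estimator $\hat\gamma_S=\eta_S^{-1}e_S$ with $e_S\in\{\pm1\}$ obtained from the joint measurement $\G^O$ via the post-processing map of Eq.~(\ref{eq:pp_map}), so that $\hat\gamma_S^{(j)}\in\{-\eta_S^{-1},+\eta_S^{-1}\}$ is bounded in an interval of length $2\eta_S^{-1}$ and has mean $\tr{\gamma_S\rho}$. Applying Hoeffding's inequality to the empirical average $\hat\Gamma_S=\tfrac{1}{L_{2k}}\sum_{j=1}^{L_{2k}}\hat\gamma_S^{(j)}$ gives $\Pr(|\tr{\gamma_S\rho}-\hat\Gamma_S|\geq\epsilon)\leq 2\exp(-L_{2k}\epsilon^2\eta_S^2/2)$ for each fixed $S$.

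Next I would take a union bound over all $|\mathcal S_{2k}|=\binom{2n}{2k}$ observables, using that $\eta_S\geq\eta_{2k}$ for every $S\in\mathcal S_{2k}$ to replace each individual sharpness by the worst-case value $\eta_{2k}$. Demanding that the total failure probability be at most $\delta$, it suffices to choose $L_{2k}$ so that $2|\mathcal S_{2k}|\exp(-L_{2k}\epsilon^2\eta_{2k}^2/2)\leq\delta$, i.e.
\begin{equation}
L_{2k}\geq \frac{2}{\epsilon^2\eta_{2k}^2}\log\!\left(\frac{2|\mathcal S_{2k}|}{\delta}\right).
\end{equation}
Finally I would invoke Theorem~\ref{thm:jm}, which guarantees the existence of a parent POVM of the form in Eq.~(\ref{eq:jm_gen}) achieving $\eta_{2k}=\Omega(n^{-k/2})$, equivalently $\eta_{2k}^{-2}=\mathcal O(n^{k})$ (absorbing the constant, which depends on $k$ but not on $n$ since $k=\mathcal O(1)$). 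Substituting this into the displayed bound yields $L_{2k}=\mathcal O\!\big(\tfrac{2n^{k}}{\epsilon^2}\log(2|\mathcal S_{2k}|/\delta)\big)$, which is the claimed scaling.

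The argument is essentially routine once Theorem~\ref{thm:jm} is in hand; the only genuine content is the bookkeeping of constants and the observation that Hoeffding applied to bounded $\pm\eta_S^{-1}$ variables avoids the logarithmic overhead of a median-of-means estimator. The one point that deserves a remark, rather than a true obstacle, is that the $O^{(r)}$ realizing the sharpness bound in Theorem~\ref{thm:jm} may only exist with high probability over the randomized construction for $k>1$; but since the statement of Theorem~\ref{thm:jm} already asserts \emph{existence} of a suitable collection, we may fix one such parent POVM and the estimation analysis proceeds deterministically from there. I would also note in passing that the $\log|\mathcal S_{2k}|=\mathcal O(k\log n)$ factor is polylogarithmic in $n$, so the dominant dependence on the system size is the $n^{k}$ prefactor coming from $\eta_{2k}^{-2}$.
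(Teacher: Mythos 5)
Your proposal is correct and follows essentially the same route as the paper: Hoeffding's inequality applied to the bounded single-shot estimators $\hat\gamma_S^{(j)}\in\{\pm\eta_S^{-1}\}$, a union bound over the $|\mathcal S_{2k}|$ observables with failure budget $\delta/|\mathcal S_{2k}|$ each, and substitution of $\eta_{2k}=\Omega(n^{-k/2})$ from Theorem~\ref{thm:jm}. Your remark about fixing one suitable parent POVM whose existence is guaranteed by Theorem~\ref{thm:jm} is a fair and harmless clarification.
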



\subsection{Hamiltonians}\label{sec:hamiltonians}

We define a $2q$--local Hamiltonian, which consists of Majorana monomials up to degree--$2q$, as 
\begin{equation}\label{eq:ham}
H=\sum_{k=1}^q\sum_{S\in \mathcal{S}_{2k}}\alpha_{S}\gamma_{S}\,,
\end{equation}
with $\alpha_S\in\mathbb{R}$. We construct an unbiased estimator of $\tr{H\rho}$ from the outcomes of a single joint measurement $\G^O$ defined in Eq. (\ref{eq:jm_gen}), with arbitrary $O\in O(2n)$. For simplicity, we assume the joint measurement strategy requires only a single $O$, but the calculations generalize straightforwardly to larger sets. Furthermore, we assume the estimator of an observable $\gamma_S$ is obtained from the post-processing of $\G^O$ (see Eq. (\ref{eq:pp_gen})) via the choice $R(S)\in\mathcal{D}_{2k}$ such that $\eta_S=|\det(O_{R(S),S})|$, as described in Eq. (\ref{eq:optimal_eta_S}). We will use $R=R(S)$ as shorthand. The estimator is given by 
\begin{equation}\label{eq:ham_est}
\hat H=\sum_{k=1}^q\sum_{S\in\mathcal{S}_{2k}}\alpha_{S}\eta_S^{-1}e_{S}\,,
\end{equation}
where $e_{S}\in\{\pm 1\}$. It follows that $\mathbb{E}[\hat H]=\sum_{S}\alpha_{S}\eta_S^{-1}\mathbb{E}[e_{S}]=\sum_{S}\alpha_{S}\tr{\gamma_{S}\rho}=\tr{H\rho}$. The variance of the estimator can be found by evaluating $\mathbb{E}[e_{S}e_{S'}]$ for all pairs of observables $\gamma_S$ and $\gamma_{S'}$, of possibly different degrees. This involves calculating the two-observable marginals $\M_{S,S'}(e_{S},e_{S'})$ from the joint measurement $\G^O$, as shown in Appendix \ref{A:2-obs-marginals}.
\begin{proposition}\label{cor:var}
The variance of the estimator $\hat H$ of a $2q$--local Hamiltonian $H$ is given by
\begin{multline*}
\var[\hat H]=\sum_{\substack{S,S'\\S\neq S'}}f(S,S')\frac{\alpha_{S}\alpha_{S'}}{\nu_S\nu_{S'}}\nu_{S\triangle S'}\emph{tr}[\gamma_{S\triangle S'}\rho]\\
+\sum_{S}\alpha^2_{S}|\nu_S|^{-2}-(\tr{H\rho})^2\,,
\end{multline*}
where $S\triangle S'$ denotes the symmetric difference, $\nu_S=\det(O_{R,S})$, $\nu_{S'}=\det(O_{R',S'})$ and  $\nu_{S\triangle S'}=\det(O_{R\triangle R' ,S\triangle S'})$, with
\begin{equation}\label{eq:sym_fun}
f(S,S')=
\begin{cases}
1 &\mbox{if} \,\,\,|S\triangle S'|=|R\triangle R'| \\
0 &\mbox{otherwise}\,.
 \end{cases}
\end{equation}
\end{proposition}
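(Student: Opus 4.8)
The plan is to compute $\var[\hat H] = \mathbb{E}[\hat H^2] - (\mathbb{E}[\hat H])^2$ directly, reducing everything to the two-observable marginals $\M_{S,S'}$ extracted from $\G^O$. Since $\mathbb{E}[\hat H] = \tr{H\rho}$ is already established, the task is to evaluate $\mathbb{E}[\hat H^2] = \sum_{S,S'}\alpha_S\alpha_{S'}\nu_S^{-1}\nu_{S'}^{-1}\mathbb{E}[e_S e_{S'}]$, where the sum runs over all pairs (including $S=S'$) across all degrees $k,k'\le q$. The diagonal terms $S=S'$ contribute $\sum_S \alpha_S^2 \eta_S^{-2} \mathbb{E}[e_S^2] = \sum_S \alpha_S^2 |\nu_S|^{-2}$, since $e_S^2 = 1$ and $\eta_S = |\nu_S|$; this already accounts for the middle term of the claimed formula. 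So the crux is the off-diagonal correlation $\mathbb{E}[e_S e_{S'}]$ for $S\ne S'$.

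First I would compute the joint two-observable marginal of $\G^O$. Recall that conditioned on the sampled monomial $X$ and outcome string $\mathbf{q}$, the outcome $e_S$ is the deterministic sign $\mathrm{sgn}(\det O_{R,S})\,x_S\,q_R$ with $x_S = (-1)^{|X|\cdot|S| - |S\cap X|}$ and $q_R = \prod_i q_{R_i}$ over the $k$ blocks comprising $R = R(S)$. Hence $e_S e_{S'} = \mathrm{sgn}(\nu_S)\mathrm{sgn}(\nu_{S'})\, x_S x_{S'}\, q_R q_{R'}$. The product $x_S x_{S'} = (-1)^{|X|(|S|+|S'|) - |S\cap X| - |S'\cap X|}$; averaging over $X$ uniform in $2^{[2n]}$ picks out, essentially, the parity structure of $S\triangle S'$, collapsing to $x_{S\triangle S'}$ up to a fixed sign (this is the standard "random Pauli/Majorana twirl" computation underlying Eq.~\eqref{eq:random_rotated_prod}). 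The product $q_R q_{R'}$ over the computational-basis block measurements likewise collapses to $q_{R\triangle R'}$, the outcome associated with the symmetric difference of the block sets. Combining, and using $\det O_{R,S}\det O_{R',S'}$ relating to $\det O_{R\triangle R', S\triangle S'}$ when the sign/parity bookkeeping is consistent, one finds $\mathbb{E}[e_S e_{S'}] = f(S,S')\,\mathrm{sgn}(\nu_S)\mathrm{sgn}(\nu_{S'})\,\nu_{S\triangle S'}\,\tr{\gamma_{S\triangle S'}\rho}$ — the indicator $f(S,S')$ encoding exactly the compatibility condition $|S\triangle S'| = |R\triangle R'|$ needed for the outcomes to correlate nontrivially (otherwise the marginal factorizes into $\id/2$ on the difference and the expectation vanishes). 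Substituting $\mathrm{sgn}(\nu_S)\mathrm{sgn}(\nu_{S'})\nu_{S\triangle S'}/(\nu_S\nu_{S'}) = \nu_{S\triangle S'}/(\nu_S\nu_{S'})\cdot|\nu_S||\nu_{S'}|/(\nu_S\nu_{S'})$... more cleanly, writing $\nu_S^{-1} = \mathrm{sgn}(\nu_S)/|\nu_S|$ and absorbing signs, the off-diagonal sum becomes $\sum_{S\ne S'} f(S,S')\,\frac{\alpha_S\alpha_{S'}}{\nu_S\nu_{S'}}\,\nu_{S\triangle S'}\tr{\gamma_{S\triangle S'}\rho}$, matching the first term. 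Finally subtract $(\tr{H\rho})^2$.

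I expect the main obstacle to be the careful sign and parity bookkeeping in identifying $\mathbb{E}[x_S x_{S'}]$ over the uniform choice of $X$ and in relating the product of minors $\det(O_{R,S})\det(O_{R',S'})$ to the single minor $\det(O_{R\triangle R', S\triangle S'})$ — this is a Cauchy–Binet / Laplace-expansion-type identity for Gaussian-unitary-transformed Majorana monomials (cf. Eq.~\eqref{eq:unitary_prod}), and getting the phase $i^{\binom{2k}{2}}$ conventions from the definition of $\gamma_S$ to line up with the real orthogonal transformation is delicate. The cleanest route is probably to work in the Heisenberg picture: write $e_S$ as the outcome of the effective observable $\gamma_R^{O,X}$ projected onto the $\gamma_S$ component, note that $\gamma_R^{O,X}\gamma_{R'}^{O,X}$ (as operators before measurement) generates the $\gamma_{S\triangle S'}$ term with coefficient $x_{S\triangle S'}\det(O_{R\triangle R',S\triangle S'})$ precisely because $O$ is orthogonal (so $O_{R,\cdot}$ rows are orthonormal and cross terms in the product telescope), then take the expectation over $\mathbf q$ and $X$. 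This is exactly the content deferred to Appendix~\ref{A:2-obs-marginals}, and I would organize the proof as: (1) derive $\M_{S,S'}$; (2) read off $\mathbb{E}[e_Se_{S'}]$; (3) assemble $\var[\hat H]$ and split diagonal/off-diagonal.
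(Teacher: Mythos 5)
Your proposal is correct and follows essentially the same route as the paper's Appendix H: compute the two-observable marginal $\M_{S,S'}$ from the parent POVM, observe that only the $\id$, $\gamma_S$, $\gamma_{S'}$ and $\gamma_{S\triangle S'}$ terms survive the constrained sum over $(\mathbf q,\mathbf x)$ (the last only when $|S\triangle S'|=|R\triangle R'|$, which is exactly $f(S,S')$), read off $\mathbb{E}[e_Se_{S'}]$, and assemble the diagonal and off-diagonal contributions. One small clarification: no Cauchy--Binet-type identity relating $\det(O_{R,S})\det(O_{R',S'})$ to $\det(O_{R\triangle R',S\triangle S'})$ is needed --- the latter minor enters independently as the coefficient of $q_{R\triangle R'}x_{S\triangle S'}\gamma_{S\triangle S'}$ already present in the expansion of $\G^O$, while the $1/(\nu_S\nu_{S'})$ prefactor comes solely from the estimator normalization combined with the signs $\tau\tau'$.
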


To obtain a bound on the sample complexity, one can apply the median-of-means approach, which gives a simple classical post-processing strategy to reduce the effect of estimation errors \cite{huang20,mcnulty22}. 
The bound is controlled by the variance, which for the above strategy can be optimized via two approaches. First, the orthogonal matrix $O\in O(2n)$ appearing in the joint measurement $\G^O$ can be chosen to minimize $\nu_{S\triangle S'}/(\nu_S\nu_{S'})$. 
In a more general strategy, one can optimize over the classical post-processing, without restricting to a single $R$ for each $S$.

\section{A comparison to other estimation strategies}\label{sec:comparison}

\subsection{Joint measurements via fermion-to-qubit mappings}\label{sec:f-to-q}

Given that, for an $n$ mode fermionic system, every Majorana operator (of odd or even degree) can be expressed as an $n$--qubit Pauli operator, a possible measurement approach is to consider a joint measurement of Pauli observables \cite{busch86,brougham07}. As shown in \cite{mcnulty22}, we can measure any sufficiently unsharp $n$--qubit Pauli operator $P=P_1\otimes \ldots\otimes P_n$, with $P_i\in\{\id,X,Y,Z\}$, via a local parent POVM on each qubit, with effects
\begin{equation}\label{eq:jm_qubit}
\F(x_i,y_i,z_i)=\frac{1}{8}\left(\id+\frac{1}{\sqrt{3}}(x_iX+y_iY+z_iZ)\right)\,,
\end{equation}
where $x_i,y_i,z_i\in\{\pm 1\}$ label the $i$--th qubit outcome string. Furthermore, this measurement can be implemented by a randomization over projective measurements. To obtain the unsharp measurement,
\begin{equation}\label{eq:noisy_pauli}
\M_P(e_P)=\tfrac{1}{2}(\id+e_P\eta_PP)\,,
\end{equation}
with outcome $e_P\in\{\pm 1\}$, we perform the classical post-processing,
\begin{equation}\label{eq:pp_qubit}
D(e_P|P,x_1,\ldots,z_n)=\begin{cases}
1 &\mbox{if} \,\,\,e_P=\mu(P) \\
0 &\mbox{if} \,\,\,e_P=-\mu(P) \,,
 \end{cases}
\end{equation}
where $\mu(P):=\prod_{i\in\text{supp}(P)}\mu_i(P)$ is the product of the relevant local outcomes, with $\mu_i(P)=x_i,y_i,z_i$ if $P_i=X,Y,Z$, respectively, and $\text{supp}(P)=\{i\,|\,P_i\neq\id\}$. After post-processing of the joint measurement outcomes it is easy to verify that the noise in Eq. (\ref{eq:noisy_pauli}) satisfies
\begin{equation}\label{eq:pauli_noise}
\eta_P=3^{-w(P)/2}\,,
\end{equation}
where $w(P)=|\text{supp}(P)|$ is the weight of the Pauli operator.

The choice of fermion-to-qubit mapping plays a crucial role in determining how sharply the observables can be simultaneously measured. A parent measurement implementing sharper Majorana observables requires a mapping that minimizes the weight of the Pauli operator (in Eq. (\ref{eq:pauli_noise})) associated with each $\gamma_j$. The Jordan-Wigner transformation, as described in Eqs. (\ref{eq:jw1}) and (\ref{eq:jw2}), acts non-trivially on $\mathcal{O}(n)$ qubits. Consequently, degree--$k$ observables are jointly measured with sharpness exponential in $n$.

In fact, the measurement strategy provided in \cite{Jiang20} can be interpreted as the joint measurement of $\F$ on each qubit, together with a fermion-to-qubit transformation defined on ternary trees \cite{vlasov19}. The ternary trees mapping aims to minimize the average weight of the Pauli operators and maps each $\gamma_j$ to an operator $P$ with $w(P)=\lceil\log_3(2n+1)\rceil$. Thus, any degree--$k$ Majorana observable is transformed to a Pauli operator of weight at most $ k\lceil\log_3(2n+1)\rceil$. It follows that all degree--$k$ Majorana observables can be jointly measured with
\begin{equation}
\eta_k\lesssim 3^{-\frac{k}{2}\log_3(2n+1)}=(2n+1)^{-\frac{k}{2}}\,,
\end{equation}
where $\eta_k$ is defined in Eq. (\ref{eq:min_sharpness}).

Approximating the $k$--body reduced density matrices (RDMs) of a fermionic quantum state requires estimating the expectation values of all degree--$2k$ Majorana observables (cf. Sec. \ref{sec:est}). This is achieved, to the desired $\epsilon$ precision, by implementing $\mathcal{O}(\eta_{2k}^{-2}\epsilon^{-2})$ repetitions of the joint measurement. Since $\eta_{2k}=\mathcal{O}(n^{-k})$ via ternary tree mappings, $\mathcal{O}(n^{2k}\epsilon^{-2})$ repetitions of the joint measurement are required. For example, $\mathcal{O}(n^{4}\epsilon^{-2})$ repetitions are required to estimate 2-RDMs. This contradicts the erroneous claim in \cite{Jiang20} that a joint measurement with the ternary tree mapping requires only $\mathcal{O}(n^{k}\epsilon^{-2})$ repetitions to estimate $k$--RDMs.

\subsection{Fermionic classical shadows}\label{sec:cs}

Classical shadows---a randomized measurement protocol that yields a classical representation of an unknown quantum state \cite{huang20}---provides a method to simultaneously estimate non-commuting observables. Originally applied to systems of qubits, with unitaries sampled from the local or global $n$--qubit Clifford group, the protocol has since been extended in several ways to fermionic systems \cite{Zhao21,wan22,low22,ogorman22}. In one approach, a fermionic shadow is constructed by sampling unitaries from the intersection of the fermionic Gaussian and Clifford unitaries \cite{Zhao21}.

Several connections have been made between joint measurements and classical shadows \cite{mcnulty22}. In particular, for estimating energies of Hamiltonians composed of linear combinations of $n$--qubit Pauli observables, a single shot estimator based on the outcomes of a joint measurement provides the same performance guarantees as locally biased classical shadows \cite{hadfield20}. In light of Prop. \ref{prop:sample_complexity}, estimating degree--$k$ Majorana observables via the joint measurement defined in Eq. (\ref{eq:jm_gen}) shares the same asymptotic bounds on the sample complexity as fermionic shadows \cite{Zhao21,wan22}. However, in contrast to classical shadows which sample from the intersection of fermionic Gaussian and Clifford unitaries, implementing the joint measurement requires sampling from approximately $4k$ fermionic Gaussian unitaries and the set of Majorana monomials (or Pauli operators).

Other connections also appear between the two measurement strategies. For example, classical shadows can be used to define joint measurements and, vice versa, the outcomes of a joint measurement, together with classical post-processing, result in a classical shadow of the quantum state \cite{mcnulty22}. The former relation provides a sufficient condition for the compatibility of an arbitrary set of measurements. In Appendix \ref{A:CS}, we apply the fermionic shadow of \cite{Zhao21} to derive a joint measurability condition for the set of degree--$2k$ Majorana observables. In particular, we show that the observables $\{\gamma_S\,|\,S\in\mathcal{S}_{2k}\}$ are compatible when $\eta_{2k}\leq \binom{n}{k}/\binom{2n}{2k}$, therefore placing a lower bound on the incompatibility robustness $\eta_{2k}^*=\Omega(n^{-k})$. This coincides with the bound provided by the joint measurement in Sec. \ref{sec:f-to-q} using a ternary tree fermion-to-qubit mapping. Clearly, this provides only a sufficient condition for joint measurability since the parent POVM in Theorem \ref{thm:jm} together with the upper bounds from Theorem \ref{thm:inc_rob}, imply $\eta_{2k}^*=\Theta(n^{-k/2})$ for $k\leq 5$.

The fermionic shadow norm $\no{\gamma_S}_{\text{FGU}}$, derived in \cite{Zhao21}, is a useful quantity that provides a (state independent) bound on the variance of the classical shadow estimators of $\tr{\gamma_S\rho}$, for all $S\in\mathcal{S}_{2k}$. In particular, the squared shadow norm is given by $\no{\gamma_S}_{\text{FGU}}^2=\binom{2n}{2k}/\binom{n}{k}$. It follows from Theorem \ref{thm:inc_rob} that $\eta_{2k}^*\leq \no{\gamma_S}^{-1}_{\text{FGU}}$ for $k\leq 5$. If $k=1$, and the number of fermionic modes $n$ agrees with the existence of a $2n \times 2n$ skew-Hadamard matrix, then $\eta_{2}^*= \no{\gamma_S}^{-1}_{\text{FGU}}$. In the following conjecture, we suggest this bound holds more generally.

\begin{conjecture}\label{conjecture}
Let $\{\gamma_S\,|\,S\in \mathcal{S}_{2k}\}$ denote the set of degree--$2k$ Majorana operators of an $n$ mode fermionic system, with $k=\mathcal{O}(1)$. The incompatibility robustness satisfies $\eta_{2k}^*\leq\sqrt{\binom{n}{k}/\binom{2n}{2k}}=\no{\gamma_S}^{-1}_{\text{FGU}}$. Thus, the joint measurement in Sec. \ref{sec:nearly_optimal} is asymptotically optimal, and $\eta_{2k}^*=\Theta(n^{-k/2})$.
\end{conjecture}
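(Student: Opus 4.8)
The plan is to reduce the conjecture to a worst-case spectral bound on the SYK operators $H_{2k}$ and then to extend the Hastings--O'Donnell estimate beyond its current range. By Lemma~\ref{lem:robustness}, $\eta_{2k}^*=\binom{2n}{2k}^{-1}\max_{e_S\in\{\pm1\}}\no{H_{2k}}_\infty$; combined with the lower bound $\eta_{2k}=\Omega(n^{-k/2})$ of Theorem~\ref{thm:jm} (valid for every constant $k$, with $\eta_{2k}\le\eta_{2k}^*$), the conjecture is \emph{equivalent} to the single inequality $\max_{e_S\in\{\pm1\}}\no{H_{2k}}_\infty\le\sqrt{\binom{2n}{2k}\binom{n}{k}}$ for all $k=\mathcal{O}(1)$ and $n$ large, i.e.\ to Lemma~\ref{thm:odonnell} with its hypothesis $k\le5$ removed. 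The fact that this bound is already tight for $k=1$ (attained whenever a $2n\times2n$ skew-Hadamard matrix exists, Theorem~\ref{thm:inc_rob}) is evidence that $\binom{n}{k}$ is the correct constant and not merely the correct order.

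To remove the restriction I would follow the Hastings--O'Donnell route. Since $|S|=|S'|=2k$ is even, $\gamma_S\gamma_{S'}=(-1)^{|S\cap S'|}\gamma_{S'}\gamma_S$, so the anticommuting cross terms cancel in $H_{2k}^2$ and $H_{2k}^2=\binom{2n}{2k}\id+\sum_{S\neq S',\,|S\cap S'|\text{ even}}e_Se_{S'}\gamma_S\gamma_{S'}$; hence $\no{H_{2k}}_\infty^2=\binom{2n}{2k}+\max_\psi\sum_{S\neq S',\text{ commuting}}e_Se_{S'}\langle\psi|\gamma_S\gamma_{S'}|\psi\rangle$. Relaxing the correlations $\langle\psi|\gamma_S\gamma_{S'}|\psi\rangle$ to a positive-semidefinite moment matrix with unit diagonal supported on the commutation graph on $\binom{[2n]}{2k}$ (edges between $2k$-sets of even intersection) yields a semidefinite program; Hastings--O'Donnell bound its value via an explicit dual certificate, and because the program is $S_{2n}$-invariant the certificate's feasibility reduces, after symmetrization, to a finite list of positivity/eigenvalue inequalities in the Johnson scheme $J(2n,2k)$ (evaluations of Hahn/Krawtchouk-type polynomials), which \cite{hastings21} verifies only for $2k\le10$. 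The plan is to establish these inequalities for every \emph{fixed} $k$: with $k$ constant and $n\to\infty$, the relevant polynomials have bounded degree and their scheme eigenvalues have clean asymptotics, so one needs only to check that the leading behaviour reproduces $\binom{n}{k}$ and that the lower-order corrections preserve the sign of the certificate.

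The principal obstacle is that this positivity is exactly what blocks Hastings--O'Donnell past $k\le5$, and it is not clear a priori that their certificate stays dual-feasible for large $k$; if it fails, a genuinely different argument would be needed---for instance a noncommutative-Khintchine estimate that builds in the anticommutation graph, or a recursive operator-norm expansion iterating the $H_{2k}^2$ identity above while tracking the $k$-dependent constants through the $\mathcal{O}(\log n)$ levels of recursion. A more modest but still useful target, reachable by any of these, is the scaling-only bound $\max_{e_S\in\{\pm1\}}\no{H_{2k}}_\infty=\mathcal{O}(n^{3k/2})$ for all constant $k$: by the identity above and Theorem~\ref{thm:jm} this already gives the second assertion $\eta_{2k}^*=\Theta(n^{-k/2})$, hence the asymptotic optimality of the scheme of Sec.~\ref{sec:nearly_optimal}, leaving only the sharp constant open. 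Throughout, the subtle point is that the incompatibility robustness sees the \emph{maximum} over sign patterns $e_S\in\{\pm1\}$: random-matrix and free-probability methods control the typical SYK norm but not this worst case.
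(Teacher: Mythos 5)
The statement you are addressing is labelled a \emph{Conjecture}, and the paper deliberately leaves it open (see the concluding remarks: ``a general upper bound (Conjecture~\ref{conjecture}) is left open''). So there is no proof in the paper to match against, and your submission is, by your own framing, a research plan rather than a proof. The parts of it that are actually established are correct but are already in the paper: the equivalence, via Lemma~\ref{lem:robustness}, of the conjectured bound $\eta_{2k}^*\leq\sqrt{\binom{n}{k}/\binom{2n}{2k}}$ with the spectral inequality $\max_{e_S\in\{\pm1\}}\no{H_{2k}}_\infty\leq\sqrt{\binom{2n}{2k}\binom{n}{k}}$ is exactly the remark the authors make immediately after stating the conjecture, and the $k=1$ tightness via skew-Hadamard matrices is Theorem~\ref{thm:inc_rob}(i). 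Your computation that even-degree monomials satisfy $\gamma_S\gamma_{S'}=(-1)^{|S\cap S'|}\gamma_{S'}\gamma_S$, so that anticommuting cross terms cancel in $H_{2k}^2$, is also correct and is the standard starting point of the Hastings--O'Donnell analysis.

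The genuine gap is the central step: you never establish the spectral inequality for $k>5$, and you acknowledge that the dual certificate of \cite{hastings21} is precisely what fails (or at least is unverified) beyond that range. Saying ``the plan is to establish these inequalities for every fixed $k$'' and that the asymptotics of the Johnson-scheme eigenvalues ``should'' preserve the sign of the certificate is a statement of intent, not an argument; if the sign check were a routine leading-order computation for fixed $k$ and large $n$, the restriction $k\leq5$ would not appear in the published bound. A second, smaller overreach is the claim that the ``modest target'' $\max_{e_S}\no{H_{2k}}_\infty=\mathcal{O}(n^{3k/2})$ is ``reachable by any of these'' methods: the trivial bounds give only $\mathcal{O}(n^{2k})$ (triangle inequality, or $\no{H_{2k}^2}_\infty\leq\binom{2n}{2k}^2$ from the commuting cross terms), and improving this to $\mathcal{O}(n^{3k/2})$ already requires a nontrivial cancellation argument that you do not supply --- indeed even the scaling-only assertion $\eta_{2k}^*=\Theta(n^{-k/2})$ is open for $k>5$. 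Your proposal is a reasonable and accurately contextualized roadmap, but it does not prove the conjecture, and no essentially new idea is offered for the step at which the known techniques stop.
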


Interestingly, due to Lemma \ref{lem:robustness}, the conjecture places a bound on the maximum eigenvalue of $H_{2k}$, i.e., $\no{H_{2k}}_{\infty}\leq\sqrt{\binom{2n}{2k}\binom{n}{k}}$. Similar bounds have been considered when optimizing fermionic Hamiltonians, e.g., \cite{feng19,hastings21,herasymenko22}. For $k\leq 5$, the conjectured bound is confirmed in Theorem \ref{thm:inc_rob}. We also note that the joint measurement in Sec. \ref{sec:f-to-q} places a lower bound on the maximum eigenvalue, namely, $\max_{e_S\in\{\pm 1\}}\no{H_{k}}_{\infty}=\Omega(n^{k/2})$, for all even and odd $k=\mathcal{O}(1)$.

\

\section{Concluding remarks}\label{sec:conclusion}

We have studied the joint measurability of fermionic observables composed of products of anti-commuting Majorana operators. Our main results are summarised in Theorems \ref{thm:jm} and \ref{thm:inc_rob}. We have introduced a parent POVM which simultaneously measures, for $k=\mathcal{O}(1)$, all degree--$2k$ Majorana observables $\{\gamma_S\,|\,S\in\mathcal{S}_{2k}\}$ with sharpness $\eta_{2k}=\Theta(n^{-k/2})$. By characterizing the incompatibility properties of Majorana observables, we found that the joint measurement is asymptotically optimal for $k\leq 5$. Our measurement is well-suited for various estimation tasks and can be implemented via randomized projective measurements.

There are several open questions to explore. While we provide asymptotically tight bounds on the incompatibility robustness when $k\leq 5$, a general upper bound (Conjecture \ref{conjecture}) is left open. A full characterization of the incompatibility robustness, as well as an optimal joint measurement for every system size, also remains open. Furthermore, it would be interesting to explore an extension of the fermionic joint measurement strategy (Sec. \ref{sec:jm}) from even-degree Majorana operators to odd-degree observables. While the odd case is covered by jointly measuring local Pauli observables under a fermion-to-qubit mapping (Sec. \ref{sec:f-to-q}), the optimal joint measurement remains unknown (except when $k=1$).

From a practical perspective, it would be useful to compare our joint measurement strategy to other estimation methods, e.g., for approximating energies of fermionic Hamiltonians. One potential advantage lies in the ability to tailor the joint measurement to a particular Hamiltonian, and optimize the variance of the estimator to reduce the sample complexity. Similar techniques have been developed for classical shadows \cite{hadfield20} and joint measurements \cite{mcnulty22} on qubit systems. For this purpose, it would also be useful to explore feasible implementations of the protocol with simple circuits, low circuit depth, and minimal reconfigurations.


\section*{Acknowledgments} 
We thank Marcin Kotowski and Ingo Roth for insightful discussions, as well as Matthew Hastings and Ryan O'Donnell for clarifying aspects of their work. The authors acknowledge financial support from the TEAM-NET project cofinanced by the EU within the Smart Growth Operational Program (Contract No. POIR.04.04.00-00-17C1/18-00). D.M. acknowledges support from PNRR MUR Project No. PE0000023-NQSTI.

\bibliographystyle{quantum}
\bibliography{references_dan}

\onecolumngrid
\appendix

\section{Classical post-processing of the joint measurement}\label{A:pp}

In this section, we present some basic properties of the POVM $\G^O$ defined in Eq. (\ref{eq:jm_gen}) and prove Prop. \ref{prop:jm_pp} of Sec. \ref{subsec:degree_k}. Recall that for any $S\in\mathcal{S}_{2k}$ and $X\subseteq [2n]$, we denote $x_S=(-1)^{|X|\cdot|S|-|X\cap S|}$, and for any $R=\cup R_i\in\mathcal{D}_{2k}$ with $R_i\in\mathcal{D}_{2}$, we denote $q_R=\prod_{i=1}^k q_{R_i}\in\{\pm 1\}$.
\begin{proposition}\label{Aprop:jm}
The POVM $\G^O$ defined in Eq. (\ref{eq:jm_gen}) is equivalent to
\begin{equation}\label{eqA:jm_gen_equiv}
\G^O({\bf q},X)=\frac{1}{2^{3n}}\left(\id+\sum_{k=1}^n \sum_{R\in\mathcal{D}_{2k}}q_R\sum_{S\in\mathcal{S}_{2k}}x_S\det(O_{R,S})\gamma_{S}\right)\,,
\end{equation}
with outcomes ${\bf q}=(q_{12},q_{34},\ldots,q_{2n-1,2n})\in\{\pm 1\}^{n}$ and $X\in 2^{[2n]}$.
\end{proposition}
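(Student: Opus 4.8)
The plan is to start from the definition of $\G^O$ in Eq.~\eqref{eq:jm_gen} as a scaled product of the commuting POVMs $\G^O_R(q_R,X)$, $R\in\mathcal{D}_2$, given in Eq.~\eqref{eq:jm}, and to expand this product directly. Writing each factor as $\G^O_R(q_R,X) = \tfrac{1}{2^{2n+1}}\bigl(\id + q_R\, G_R\bigr)$ where $G_R := \sum_{\{i,j\}\in\mathcal{S}_2,\,\{i,j\}=R\text{-compatible}}\ldots$ — more precisely $G_R$ is the single-edge version of Eq.~\eqref{eq:random_rotated_prod} with $k=1$ — the product over the $n$ edges $R\in\mathcal{D}_2$ becomes, after collecting the prefactors $(2^{2n})^{n-1}\cdot(2^{2n+1})^{-n} = 2^{-3n}$, a sum over all subsets $T\subseteq\mathcal{D}_2$ of the terms $\bigl(\prod_{R\in T}q_R\bigr)\prod_{R\in T}G_R$.

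The heart of the argument is then to identify $\prod_{R\in T}G_R$ with the degree-$2k$ block of Eq.~\eqref{eqA:jm_gen_equiv}, where $k=|T|$. For a fixed $T=\{R_1,\ldots,R_k\}$ one has $\prod_{R\in T}q_R = q_R$ with $R=\bigcup_i R_i\in\mathcal{D}_{2k}$, which is exactly the shorthand introduced before the statement. For the operator part I would use that each $G_{R_i}=\gamma^{O,X}_{R_i}$ is a rotated-and-conjugated single Majorana pair, so $\prod_i G_{R_i} = \gamma^{\dagger}_X\,\gamma^O_{R_1}\cdots\gamma^O_{R_k}\,\gamma_X = \gamma^{\dagger}_X\,\bigl(U_O^\dagger\gamma_{R_1}\cdots\gamma_{R_k}U_O\bigr)\gamma_X$. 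Since $R_1,\ldots,R_k$ are disjoint pairs, $\gamma_{R_1}\cdots\gamma_{R_k}=\pm\gamma_R$ with $R=\bigcup_i R_i$; tracking the phase shows it is in fact $\gamma_R$ because of the Hermitian normalization built into $\gamma_R$ and the block-diagonal structure of $\mathcal{D}_{2k}$ (here one invokes Eq.~\eqref{eq:unitary_prod} for the conjugation by $U_O$ and Eq.~\eqref{eq:commutativity} for the conjugation by $\gamma_X$). This yields $\prod_{R\in T}G_R = \gamma^{O,X}_R = \sum_{S\in\mathcal{S}_{2k}}x_S\det(O_{R,S})\gamma_S$, exactly as in Eq.~\eqref{eq:random_rotated_prod}. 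Summing over all $T$, grouped by cardinality $k=|T|$, and noting that $T\mapsto\bigcup_{R\in T}R$ is a bijection from size-$k$ subsets of $\mathcal{D}_2$ onto $\mathcal{D}_{2k}$, reproduces the claimed expression, with the $k=0$ term ($T=\varnothing$) giving the $\id$.

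I expect the main obstacle to be the careful bookkeeping of signs and phases when collapsing $\gamma^O_{R_1}\cdots\gamma^O_{R_k}$ into a single $\gamma^O_R$: one must verify that the $i^{\binom{2k}{2}}$ phase in the definition of $\gamma_R$ is precisely the phase produced by reordering the $2k$ Majorana factors into ascending order and that no extra sign is introduced by the (commuting) conjugations by $U_O$ and $\gamma_X$. A clean way to sidestep ad hoc sign-chasing is to observe that both sides of Eq.~\eqref{eqA:jm_gen_equiv} are manifestly Hermitian positive operators summing to $\id$ over outcomes, and that they agree on each commuting block by the bijection argument above; alternatively one can just note that the identity $U_O^\dagger\gamma_{R_1}\cdots\gamma_{R_k}U_O = \gamma^O_R$ already appears implicitly in the derivation of Eq.~\eqref{eq:rotated_prod}, so the phase consistency is inherited from there. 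The remaining steps — counting the prefactor $2^{-3n}$ and the binomial reorganization of the sum over $T$ — are routine.
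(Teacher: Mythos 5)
Your proposal is correct and follows essentially the same route as the paper's proof: expand the product over $R\in\mathcal{D}_2$, collect the prefactor $(2^{2n})^{n-1}(2^{2n+1})^{-n}=2^{-3n}$, organize the cross terms by cardinality via the bijection $T\mapsto\bigcup_{R\in T}R$ between size-$k$ subsets of $\mathcal{D}_2$ and $\mathcal{D}_{2k}$, and use that conjugation by $U_O$ and $\gamma_X$ is multiplicative so that $\prod_i\gamma^{O,X}_{R_i}=\gamma^{O,X}_R$. The phase check you flag does work out cleanly (the discrepancy between $i^k$ from the $k$ pair-factors and $i^{\binom{2k}{2}}$ in the definition of $\gamma_R$ is $(-1)^{k(k-1)}=1$, and no reordering sign arises since the disjoint pairs are already ascending), which the paper simply asserts without comment.
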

\begin{proof}
Let $R_i=\{2j_i-1,2j_i\}\in\mathcal{D}_2$ with $j_i<j_{i'}$ for $i<i'$. The POVM defined in Eq. (\ref{eq:jm_gen}) can be expressed as
\begin{align}
\G^O({\bf q},X)
&=(2^{2n})^{n-1}\!\!\prod_{R\in\mathcal{D}_2} \G^O_{R}(q_{R},X) \label{A:jm_gen}\\[2pt]
&=\frac{1}{2^{3n}}\!\!\prod_{R\in\mathcal{D}_2}\!\left(\id+q_R\gamma_{R}^{O,X}\right)\label{A:jm_gen1}\\[2pt]
&=\frac{1}{2^{3n}}\!\Bigl(
   \id
   + \sum_{R_1\in\mathcal{D}_2} q_{R_1}\gamma_{R_1}^{O,X}
   + \sum_{R_1,R_2\in\mathcal{D}_2} q_{R_1}q_{R_2}\gamma_{R_1}^{O,X}\gamma_{R_2}^{O,X} \notag\\[-2pt]
&\qquad\qquad\qquad\qquad\qquad\quad\,\,
   + \sum_{R_1,R_2,R_3\in\mathcal{D}_2} q_{R_1}q_{R_2}q_{R_3}\gamma_{R_1}^{O,X}\gamma_{R_2}^{O,X}\gamma_{R_3}^{O,X}
   + \cdots
   \Bigr)\notag\\[2pt]
&=\frac{1}{2^{3n}}\!\left(
   \id
   + \sum_{R\in\mathcal{D}_2} q_{R}\gamma_{R}^{O,X}
   + \sum_{R\in\mathcal{D}_4} q_{R}\gamma_{R}^{O,X}
   + \sum_{R\in\mathcal{D}_6} q_{R}\gamma_{R}^{O,X}
   + \cdots
   \right)\label{A:jm_gen2}\\[2pt]
&=\frac{1}{2^{3n}}\!\left(
   \id
   + \sum_{k=1}^n\sum_{R\in\mathcal{D}_{2k}} q_R\gamma_R^{O,X}
   \right)\label{A:jm_gen3}\\[2pt]
&=\frac{1}{2^{3n}}\!\left(
   \id
   + \sum_{k=1}^n \sum_{R\in\mathcal{D}_{2k}} q_R
     \sum_{S\in\mathcal{S}_{2k}} x_S\det(O_{R,S})\gamma_{S}
   \right).\label{A:jm_gen4}
\end{align}
with $\gamma_R^{O,X}$ defined in Eq. (\ref{eq:random_rotated_prod}). In Eq. (\ref{A:jm_gen2}) we have used the fact that, for $R=\cup_{i=1}^k R_i\in\mathcal{D}_{2k}$ and $R_i\in\mathcal{D}_2$,
\begin{equation}
\gamma_R^O=U_O^{\dagger}\gamma_RU_O=U_O^{\dagger}\left(\prod_{i=1}^k\gamma_{R_i}\right)U_O=\prod_{i=1}^k \left(U_O^{\dagger}\gamma_{R_i}U_O\right)=\prod_{i=1}^k\gamma_{R_i}^O\,,
\end{equation}
and similarly, $\gamma_R^{O,X}=\prod_{i=1}^k\gamma^{O,X}_{R_i}$.
\end{proof}
Next, we show that the POVM can be rewritten by modifying the outcome space by replacing $X\subseteq [2n]$ with a string ${\bf x}=(x_1,\ldots,x_{2n})\in\{\pm 1\}^{2n}$.
\begin{proposition}\label{Aprop:jm_equiv}
The POVM $\G^O$ defined in Eq. (\ref{eqA:jm_gen_equiv}) can be expressed as
\begin{equation}\label{eqA:equiv_jm}
G^O({\bf q},{\bf x})=\frac{1}{2^{3n}}\left(\id+\sum_{k=1}^n \sum_{R\in\mathcal{D}_{2k}}q_R\sum_{S\in\mathcal{S}_{2k}}\left(\prod_{i\in S}x_i\right)\det(O_{R,S})\gamma_{S}\right)\,,
\end{equation}
with outcomes ${\bf q}=(q_{12},q_{34},\ldots,q_{2n-1,2n})\in\{\pm 1\}^{n}$ and ${\bf x}=(x_1,x_2\ldots,x_{2n})\in\{\pm 1\}^{2n}$.
\end{proposition}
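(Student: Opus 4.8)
The plan is to exhibit an explicit bijection between the two outcome encodings and check that under this bijection the two expressions for $\G^O$ agree term by term. Recall from the single-Majorana discussion (Eq.~\eqref{eq:randomisation} and the surrounding text) that for a fixed subset $X\subseteq[2n]$ the signs $x_j=(-1)^{|X|-|X\cap\{j\}|}$ range, as $X$ runs over the power set $2^{[2n]}$, bijectively over all sign strings $\mathbf{x}=(x_1,\dots,x_{2n})\in\{\pm1\}^{2n}$ when $X$ is even, and the odd $X$ produce the negated strings; either way the map $X\mapsto\mathbf{x}$ (restricted appropriately, or with the global-sign convention fixed) is a bijection $2^{[2n]}\to\{\pm1\}^{2n}$. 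Since $\G^O$ of Eq.~\eqref{eqA:jm_gen_equiv} is already a POVM with uniform weight $1/2^{2n}$ on the $X$-label, relabelling by this bijection changes neither the normalization nor the POVM property; it only rewrites the summands. So the real content is purely algebraic: the coefficient $x_S=(-1)^{|X|\cdot|S|-|X\cap S|}$ appearing in Eq.~\eqref{eqA:jm_gen_equiv} must be shown to equal $\prod_{i\in S}x_i$.

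The key computation is therefore the identity $(-1)^{|X|\cdot|S|-|X\cap S|}=\prod_{i\in S}x_i$, where $x_i=(-1)^{|X|-[i\in X]}$ with $[\,\cdot\,]$ the Iverson bracket. Taking the product over $i\in S$ gives $\prod_{i\in S}x_i=(-1)^{\sum_{i\in S}(|X|-[i\in X])}=(-1)^{|S|\cdot|X|-|X\cap S|}$, which is exactly $x_S$. This is a one-line parity count; I would present it as a short displayed equation and note that it holds for every $S$ and every $X$, hence the two double sums in Eqs.~\eqref{eqA:jm_gen_equiv} and \eqref{eqA:equiv_jm} are identical summand by summand once the $X$-label is traded for the $\mathbf{x}$-label.

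There is essentially no obstacle here — the statement is a bookkeeping reformulation, and the only point requiring a word of care is fixing the convention for the $X\leftrightarrow\mathbf{x}$ correspondence so that the global sign coming from the parity of $X$ is handled consistently (for even $X$ the dictionary is the clean one of Eq.~\eqref{eq:randomisation}; for odd $X$ all signs flip, but since $\gamma_X^\dagger(\cdots)\gamma_X$ only depends on $X$ through these signs, one may simply restrict attention to, say, even $X$, or absorb the flip). Once that convention is pinned down, the proof is the displayed parity identity above plus the observation that summing over all $\mathbf{x}\in\{\pm1\}^{2n}$ with weight $2^{-2n}$ reproduces the $X$-sum, so that Eq.~\eqref{eqA:equiv_jm} is just Eq.~\eqref{eqA:jm_gen_equiv} rewritten in the new variables.
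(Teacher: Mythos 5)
Your proposal is correct and follows essentially the same route as the paper: verify the one-line parity identity $x_S=(-1)^{|S|\cdot|X|-|S\cap X|}=\prod_{i\in S}x_i$ and observe that $X\mapsto\mathbf{x}$ is a bijection $2^{[2n]}\to\{\pm1\}^{2n}$ (the paper writes down the explicit inverse: if the $-1$ entries sit at $L$, take $X=L$ for $|L|$ even and $X=[2n]\setminus L$ for $|L|$ odd), with your explicit parity computation actually filling in a step the paper merely asserts. One parenthetical should not be taken literally: you cannot ``restrict attention to even $X$,'' since even subsets produce only the sign strings with an even number of $-1$'s; the argument is fine because the identity holds for all $X$ and the full map is the bijection you need.
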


\begin{proof}
Recall (cf. Eq. (\ref{eq:commutativity})),
\begin{equation}\label{Aeq:majorana_conjugation}
\gamma^{\dagger}_X\cdot\gamma_{j}\cdot\gamma_X=x_j\gamma_j\,,\qquad j=1,\ldots,2n,
\end{equation}
and
\begin{equation}
\gamma^{\dagger}_X\cdot\gamma_{S}\cdot\gamma_X=x_S\gamma_S\,,\quad S\in2^{[2n]}\,,
\end{equation}
where $x_{j}=(-1)^{|X|-|X\cap \{j\}|}$ and $x_S=(-1)^{|S|\cdot |X|-|S\cap X|}$. Thus, we can treat $x_S=\left(\prod_{j\in S}x_j\right)$ in $\G^O$ as a product of the relevant integers from the string ${\bf x}=(x_1,\ldots,x_{2n})\in\{\pm 1\}^{2n}$.

It is straightforward to check that each $X\in 2^{[2n]}$ gives rise to a unique string ${\bf x}=(x_1,\ldots,x_{2n})\in\{\pm 1\}^{2n}$ via Eqs. (\ref{Aeq:majorana_conjugation}). In particular, if the $-1$ entries of ${\bf x}$ are labelled by the indices $L=\{\alpha_1,\ldots,\alpha_{|L|}\}\subseteq [2n]$ and $|L|$ is even, then Eqs. (\ref{Aeq:majorana_conjugation}) is satisfied if $X=L$. On the other hand, if $|L|$ is odd then $X=[2n]\setminus L$. It follows that we can replace the outcomes $X\in 2^{[2n]}$ with ${\bf x}\in\{\pm 1\}^{2n}$ in the joint measurement $\G^O$.
\end{proof}

\begin{proposition}\label{Aprop:pp_R}
For any $R=\cup_{i=1}^k R_i\in\mathcal{D}_{2k}$ and $k\in [n]$, the POVM $\G^O$ in Eq. (\ref{eqA:jm_gen_equiv}) is a joint measurement of 
\begin{equation}\label{Aeq:jm_R}
\G^O_R(q_R,{\bf x})=\frac{1}{2^{2n+1}}\left(\id+q_R\sum_{S\in\mathcal{S}_{2k}}\left(\prod_{i\in S}x_i\right)\det(O_{R,S})\gamma_{S}\right)\,,
\end{equation}
with $q_R=\prod_{i=1}^k q_{R_i}\in\{\pm 1\}$ and ${\bf x}\in \{\pm 1\}^{2n}$.
\end{proposition}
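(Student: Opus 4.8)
The plan is to show that the POVM $\G^O$, written in the form of Eq.~\eqref{eqA:equiv_jm}, admits $\G^O_R$ of Eq.~\eqref{Aeq:jm_R} as a marginal, i.e.\ that the effects of $\G^O_R$ are recovered by summing the effects of $\G^O$ over an appropriate classical post-processing of the outcomes $(\mathbf q,\mathbf x)$. Concretely, I would fix $R=\cup_{i=1}^k R_i\in\mathcal{D}_{2k}$ with $R_i=\{2j_i-1,2j_i\}\in\mathcal{D}_2$, and exhibit the stochastic map that assigns to each outcome $(\mathbf q,\mathbf x)$ of $\G^O$ the outcome $(q_R,\mathbf x)$ of $\G^O_R$, where $q_R:=\prod_{i=1}^k q_{R_i}$ and $\mathbf x$ is passed through unchanged. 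The content of the claim is then the identity
\begin{equation}
\sum_{\mathbf q:\,\prod_i q_{R_i}=q_R}\,\G^O(\mathbf q,\mathbf x)=\G^O_R(q_R,\mathbf x)\,,
\end{equation}
after an overall renormalisation by the number of $\mathbf q$-strings summed over (namely $2^{n-1}$, since fixing the product $q_R$ of $k\le n$ bits leaves $2^{n-1}$ free choices) together with the rescaling relating the prefactors $2^{-3n}$ and $2^{-(2n+1)}$.

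First I would carry out the $\mathbf q$-sum on the right-hand expansion~\eqref{eqA:jm_gen_equiv}. The key combinatorial fact is that, for $T\in\mathcal{D}_{2m}$, the coefficient $q_T=\prod_{i=1}^m q_{T_i}$ is a character of the group $\{\pm1\}^n$ (identifying $T$ with a subset of the $n$ pairs). Summing a non-trivial character over the coset $\{\mathbf q:\prod_{i}q_{R_i}=q_R\}$ gives zero unless $T=R$ (as subsets of the $n$ pairs), in which case it gives $q_R$ times the coset size $2^{n-1}$. Hence all terms with $T\neq R$ drop out, and the surviving terms are exactly the identity and the $T=R$ term $q_R\sum_{S\in\mathcal{S}_{2k}}(\prod_{i\in S}x_i)\det(O_{R,S})\gamma_S$. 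After dividing by $2^{n-1}$ and comparing prefactors ($2^{-3n}\cdot 2^{-(n-1)}$ versus $2^{-(2n+1)}$, which match), this is precisely $\G^O_R(q_R,\mathbf x)$ of Eq.~\eqref{Aeq:jm_R}. I would also check that this post-processing is a legitimate stochastic matrix (deterministic here: $D(q_R\mid R,\mathbf q,\mathbf x)=1$ iff $q_R=\prod_i q_{R_i}$ and the $\mathbf x$-components agree), so that the definition of joint measurability from Eq.~\eqref{eq:postprocessing} is met, and that $\G^O_R$ is itself a valid POVM (positivity follows since $\gamma_R^{O,\mathbf x}$ is a rotation-plus-conjugation of the Hermitian unitary $\gamma_R$ with $(\gamma_R^{O,\mathbf x})^2=\id$).

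I do not expect a serious obstacle: the only slightly delicate points are bookkeeping — getting the normalisation constants right (the factor $2^{n-1}$ from the coset size versus the $2^{-3n}$ and $2^{-(2n+1)}$ prefactors), and being careful that $q_R$ as defined for $\mathcal{D}_{2k}$ really does equal the product $\prod_{i=1}^k q_{R_i}$ over the constituent pairs, which is consistent with the convention already fixed above Prop.~\ref{Aprop:jm}. The ``hard part,'' such as it is, is just articulating the character-sum orthogonality cleanly; everything else is a direct computation on the explicit expansion~\eqref{eqA:equiv_jm}. I would present the proof in three short steps: (1) identify the post-processing map and verify it is stochastic; (2) perform the $\mathbf q$-sum via character orthogonality over $\{\pm1\}^n$; (3) match normalisations to conclude equality with $\G^O_R$.
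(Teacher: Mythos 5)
Your proof is correct and takes essentially the same route as the paper, whose entire argument is the one-line observation that $\G^O_R$ is the marginal of $\G^O$ obtained by summing over all $\mathbf q$ with $\prod_{i} q_{R_i}=q_R$ fixed; your character-orthogonality step is just the explicit justification of why only the $T=\varnothing$ and $T=R$ terms survive that sum. One bookkeeping slip worth fixing: no division by the coset size is needed --- the $2^{n-1}$ terms of the coset sum combine with the prefactor as $2^{-3n}\cdot 2^{n-1}=2^{-(2n+1)}$, whereas your stated check ``$2^{-3n}\cdot 2^{-(n-1)}$ versus $2^{-(2n+1)}$'' does not actually balance; the displayed marginalisation identity you give is nonetheless exactly right as written.
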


\begin{proof}
This follows by taking the marginals, $\sum_{\{q_T|T\in\mathcal{D}_{2k}\}\setminus \{q_R\}}\G^O({\bf q},{\bf x})=\G_R^O(q_R,{\bf x})$.
\end{proof}

\begin{proposition}\label{Aprop:pp_gen}
For every $S\in\mathcal{S}_{2k}$ and $R\in\mathcal{D}_{2k}$, the POVM $\G^O_R$ in Eq. (\ref{Aeq:jm_R}) with classical post-processing,
\begin{equation}\label{eqA:pp}
D(e_{S}|S,q_R,{\bf x})=\begin{cases}
1 &\mbox{if} \,\,\,\text{sgn}(\det(O_{R,S}))q_R\left(\prod_{i\in S}x_i\right)=e_S \\
0 &\mbox{otherwise} \,,
 \end{cases}
\end{equation}
jointly measures the degree--$2k$ measurement $\M_S(e_S)=\frac{1}{2}(\id+e_S\eta_{R,S}\gamma_S)$ with sharpness $\eta_{R,S}=|\det(O_{R,S})|$.
\end{proposition}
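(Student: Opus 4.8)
The plan is to verify directly that the proposed classical post-processing, when applied to the POVM $\G^O_R$ of Eq.~(\ref{Aeq:jm_R}), produces exactly the effect $\M_S(e_S) = \tfrac12(\id + e_S\eta_{R,S}\gamma_S)$ with $\eta_{R,S} = |\det(O_{R,S})|$. Recall that the general definition of joint measurability via post-processing (Eq.~(\ref{eq:postprocessing})) requires computing $\M_S(e_S) = \sum_{\lambda} D(e_S|S,\lambda)\,\G^O_R(\lambda)$, where here the parent outcome label is $\lambda = (q_R, \mathbf{x})$ ranging over $\{\pm1\}\times\{\pm1\}^{2n}$. So the first step is to write out this sum explicitly and split it according to the two cases in the deterministic post-processing function $D$ of Eq.~(\ref{eqA:pp}).

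Concretely, I would fix $S\in\mathcal{S}_{2k}$ and $R\in\mathcal{D}_{2k}$, and compute $\M_S(+1) = \sum_{(q_R,\mathbf{x})\,:\,\mathrm{sgn}(\det(O_{R,S}))\,q_R\prod_{i\in S}x_i = +1}\G^O_R(q_R,\mathbf{x})$. Substituting the explicit form of $\G^O_R(q_R,\mathbf{x})$ from Eq.~(\ref{Aeq:jm_R}), the sum splits into a contribution from the identity term and a contribution from each $\gamma_{S'}$ term with $S'\in\mathcal{S}_{2k}$. The identity contributes $\tfrac{1}{2^{2n+1}}\cdot(\text{number of selected }(q_R,\mathbf{x}))\cdot\id$; since the selection constraint fixes the sign of one product expression, exactly half of the $2^{2n+1}$ parent outcomes are selected, giving $\tfrac12\id$. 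The key point for the $\gamma_S$ term (i.e. $S' = S$) is that over the selected set, the coefficient $q_R\prod_{i\in S}x_i\det(O_{R,S})$ is constant and equal to $\mathrm{sgn}(\det(O_{R,S}))^{-1}\cdot |\det(O_{R,S})|\cdot(\text{the selection sign})$, so summing gives $\tfrac{1}{2^{2n+1}}\cdot 2^{2n}\cdot|\det(O_{R,S})|\,\gamma_S = \tfrac12\eta_{R,S}\gamma_S$. For every other $S'\neq S$, the product $\prod_{i\in S'}x_i$ is an independent $\pm1$-valued function over the $\mathbf{x}$-variables not constrained by the single linear relation defining the selection, and hence sums to zero over the selected set; this is the cancellation that kills all off-diagonal terms.

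The main obstacle, and the step requiring care, is precisely this cancellation argument: one must check that for $S'\neq S$ the character $\mathbf{x}\mapsto\prod_{i\in S'}x_i$ is not constant on the subset of $\{\pm1\}^{2n+1}$ (including the $q_R$ coordinate) cut out by the constraint $\mathrm{sgn}(\det(O_{R,S}))\,q_R\prod_{i\in S}x_i = e_S$. Since $S\neq S'$ as subsets of $[2n]$, the two characters $\prod_{i\in S}x_i$ and $\prod_{i\in S'}x_i$ are distinct (hence independent) characters of the group $\{\pm1\}^{2n}$, and adjoining $q_R$ only to the first one preserves independence; a standard orthogonality-of-characters computation then shows the sum over the constrained set vanishes. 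One should also handle the edge case where $\det(O_{R,S}) = 0$: then $\eta_{R,S} = 0$, the $\gamma_S$ term is absent, and the post-processing function as written needs the convention $\mathrm{sgn}(0)$, but in that degenerate case $\M_S(e_S) = \tfrac12\id$ trivially and any assignment of outcomes works, so the statement still holds.

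Assembling these pieces gives $\M_S(+1) = \tfrac12(\id + \eta_{R,S}\gamma_S)$, and the analogous computation for $e_S = -1$ gives $\tfrac12(\id - \eta_{R,S}\gamma_S)$; since $\M_S(+1) + \M_S(-1) = \id$ and both are positive semidefinite (as $\eta_{R,S}\le 1$ because $O_{R,S}$ is a submatrix of an orthogonal matrix, so $|\det(O_{R,S})|\le 1$), this is a valid POVM and coincides with the unsharp Majorana measurement of Eq.~(\ref{eq:unsharp}). This completes the proof of Proposition~\ref{Aprop:pp_gen}, which is the claim that $\G^O_R$ jointly measures $\M_S$ with sharpness $\eta_{R,S} = |\det(O_{R,S})|$ via the stated post-processing.
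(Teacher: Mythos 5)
Your proposal is correct and follows essentially the same route as the paper's proof in Appendix \ref{A:pp}: a direct evaluation of $\sum_{q_R,\mathbf{x}} D(e_S|S,q_R,\mathbf{x})\,\G^O_R(q_R,\mathbf{x})$, with the identity term contributing $\tfrac12\id$ because the constraint selects half the outcomes, the $\gamma_S$ term surviving with coefficient $e_S|\det(O_{R,S})|$, and all $\gamma_{S'}$ terms with $S'\neq S$ cancelling. You merely make explicit (via orthogonality of characters on $\{\pm1\}^{2n}$ and the degenerate case $\det(O_{R,S})=0$) what the paper leaves implicit in passing from its second to third displayed line.
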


\begin{proof}
Let $\tau=\text{sgn}(\det(O_{R,S}))$ and $x_S=\prod_{i\in S}x_i$. To obtain the unsharp measurement $\M_S$,  with outcome $e_S\in\{\pm1\}$, we evaluate
\begin{eqnarray}\label{A:cpp}
\M_S(e_S)&=&\sum_{q_R\in\{\pm 1\}}\sum_{{\bf x}\in\{\pm 1\}^{2n}}D(e_S|S,q_R,{\bf x})\G^O_R(q_R,{\bf x})\\
&=&\frac{1}{2^{2n+1}}\sum_{\substack{q_R,\,{\bf x}\,:\\e_S=\tau x_Sq_R}}\left(\id+q_R\sum_{T\in\mathcal{S}_{2k}}\left(\prod_{i\in T}x_i\right)\det(O_{R,T})\gamma_{T}\right)\\
&=&\frac{1}{2}\left(\id+q_R\left(\prod_{i\in S}x_i\right)\det(O_{R,S})\gamma_{S}\right)\\
&=&\frac{1}{2}\left(\id+\frac{e_S}{\tau}\det(O_{R,S})\gamma_{S}\right)\\
&=&\frac{1}{2}\left(\id+e_S|\det(O_{R,S})|\gamma_{S}\right)\,.
\end{eqnarray}
\end{proof}

Following Props. \ref{Aprop:jm}, \ref{Aprop:jm_equiv}, \ref{Aprop:pp_R} and \ref{Aprop:pp_gen}, we arrive at Prop. \ref{prop:jm_pp} in Sec. \ref{subsec:degree_k}.

\section{Joint measurement of degree--$2$ Majoranas}\label{A:jm_quadratic}

\subsection{Flat orthogonal matrices}

For simplicity, the joint measurability scheme described in Sec. \ref{sec:jm_pair} is initially restricted to system size $2n=\ell(\ell+1)$, where $\ell\in\mathbb{Z}^+$. To achieve the lower bound in Theorem \ref{thm:jm}, the parent POVM relies on the existence of the following orthogonal matrix.

\begin{definition}
An $\ell\times\ell$ orthogonal matrix $F\in O(\ell)$ is a lower-flat matrix if its entries satisfy
\begin{equation}\label{eq:lower-flat}
\min_{i,j\in[\ell]}|f_{ij}|=\Omega(\ell^{-1/2})\,.
\end{equation}
\end{definition}

A lower-flat matrix is a less restrictive version of a flat orthogonal matrix \cite{jaming15}. A flat orthogonal matrix additionally requires $\max_{i,j\in[\ell]}|f_{ij}|=\mathcal{O}(\ell^{-1/2})$ such that all entries have absolute value close to $1/\sqrt{\ell}$. Clearly, any Hadamard matrix is a (lower-)flat matrix. While it is conjectured that flat orthogonal matrices exist for all $\ell$, only specific constructions are known. For instance, if a Hadamard of order $\ell+1$ exists, a flat orthogonal matrix of order $\ell$ exists. Such examples occur when $\ell=p^m$, with $p=4k-1$ prime and $m$ odd (e.g. $\ell=3,7,11,19,\ldots$). 

Fortunately, one can construct a lower-flat orthogonal matrix for any $\ell\in\mathbb{Z}^+$ \cite{jaming15}. An example is given by the $N\times N$ matrix ($N=2^m+q$ and $q<2^m$),
\begin{equation}\label{eq:flat}
F=\left( \begin{array}{c|c|c}
           H_{s,s} & \frac{1}{\sqrt{2}}H_{s,q} & -\frac{1}{\sqrt{2}}H_{s,q} \\ \hline
           \frac{1}{\sqrt{2}}H_{q,s} & \frac{1}{2}(H_{q,q}+\id) &  \frac{1}{2}(-H_{q,q}+\id)   \\ \hline
           -\frac{1}{\sqrt{2}}H_{q,s} & \frac{1}{2}(-H_{q,q}+\id)  & \frac{1}{2}(H_{q,q}+\id) \\
         \end{array} \right)\,,
\end{equation}
where $H$ is any Hadamard matrix of order $2^m$, which we assign the block structure,
\begin{equation}
H=\left( \begin{array}{c|c}
           H_{s,s} & H_{s,q} \\ \hline
           H_{q,s} & H_{q,q}   \\
         \end{array} \right)\,.
\end{equation}
 Here, $H_{s,q}$ denotes an $s\times q$ block matrix with $s=2^m-q$. Orthogonality of the lower-flat matrix follows from the decomposition $F=V^T\widetilde F V$, where 
\begin{equation}
    \widetilde F=\left( \begin{array}{c|c}
           H & 0 \\ \hline
           0 & I_q  
         \end{array} \right)\, \qquad \text{and} \qquad 
         V=\left( \begin{array}{c|c|c}
           I_s & 0 & 0 \\ \hline
           0 & \frac{1}{\sqrt{2}}I_q &  -\frac{1}{\sqrt{2}}I_q   \\ \hline
           0 & \frac{1}{\sqrt{2}}I_q  & \frac{1}{\sqrt{2}}I_q
         \end{array} \right)\,. 
\end{equation}
Note that for $m>1$, the entries of $F$ in Eq. (\ref{eq:flat}) satisfy $\frac{1}{2\sqrt{2^m}}\leq|f_{ij}|\leq \frac{1}{2}+\frac{1}{\sqrt{2^m}}$, and therefore $|f_{ij}|\geq\frac{1}{2\sqrt{N}}$.

\subsection{Fermionic Gaussian unitaries and perfect matchings}\label{A:perfect_matchings}

As described in Sec. \ref{sec:jm_pair}, to jointly measure all noisy degree--$2$ Majorana observables, we randomised over two POVMs $\G^{(1)}$ and $\G^{(2)}$, as defined in Eq. (\ref{eq:jm_gen}), which are implemented by the orthogonal matrices $O^{(1)}=P_{\pi} \cdot D$ and $O^{(2)}=P_{\pi} \cdot D\cdot P_{\sigma}$, respectively. The construction of each matrix, and the proof of Theorem \ref{thm:jm}, is most easily understood by considering perfect matchings in Turán graphs, where the vertices correspond to degree--1 Majorana operators. We initially consider a Turán graph $T(2n,\ell+1)$ with its $2n=\ell(\ell+1)$ vertices partitioned into $\ell+1$ subsets, $Y_{1}=\{1,\ldots, \ell\},\ldots, Y_{\ell+1}=\{\ell^2+1,\ldots, \ell(\ell+1)\}$, such that $|Y_i|=\ell$ for all $i\in[\ell+1]$. It will be useful to assign each subset of vertices a distinct color (e.g., see Fig. \ref{fig:perfect_matching}, left).

\begin{figure}[!t]\label{fig:permutation_main}
  \centering
\begin{minipage}{0.49\columnwidth}
 \centering
\includegraphics[scale=0.15]{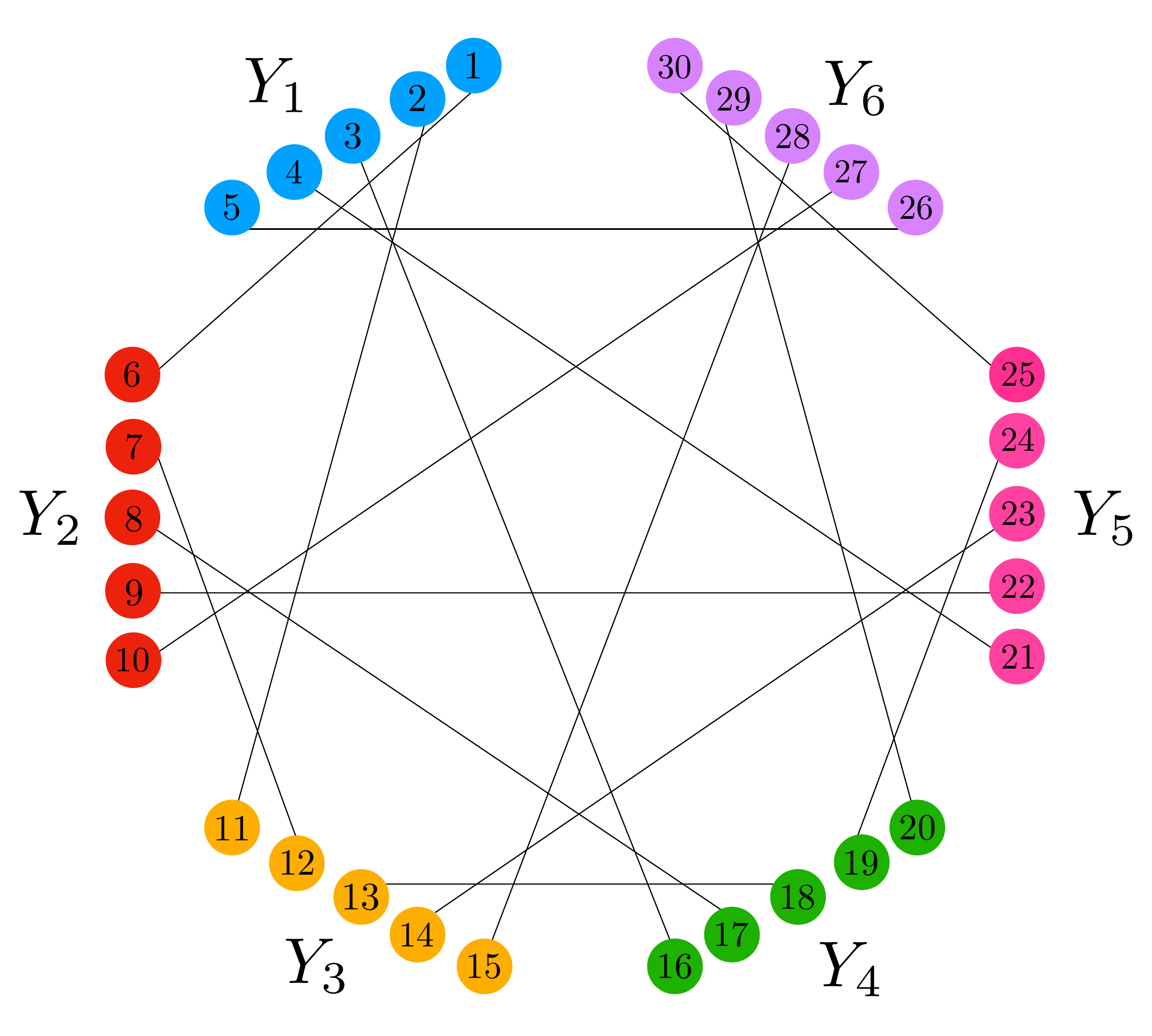}
\end{minipage}
\begin{minipage}{0.49\columnwidth}
\centering
\includegraphics[scale=0.15]{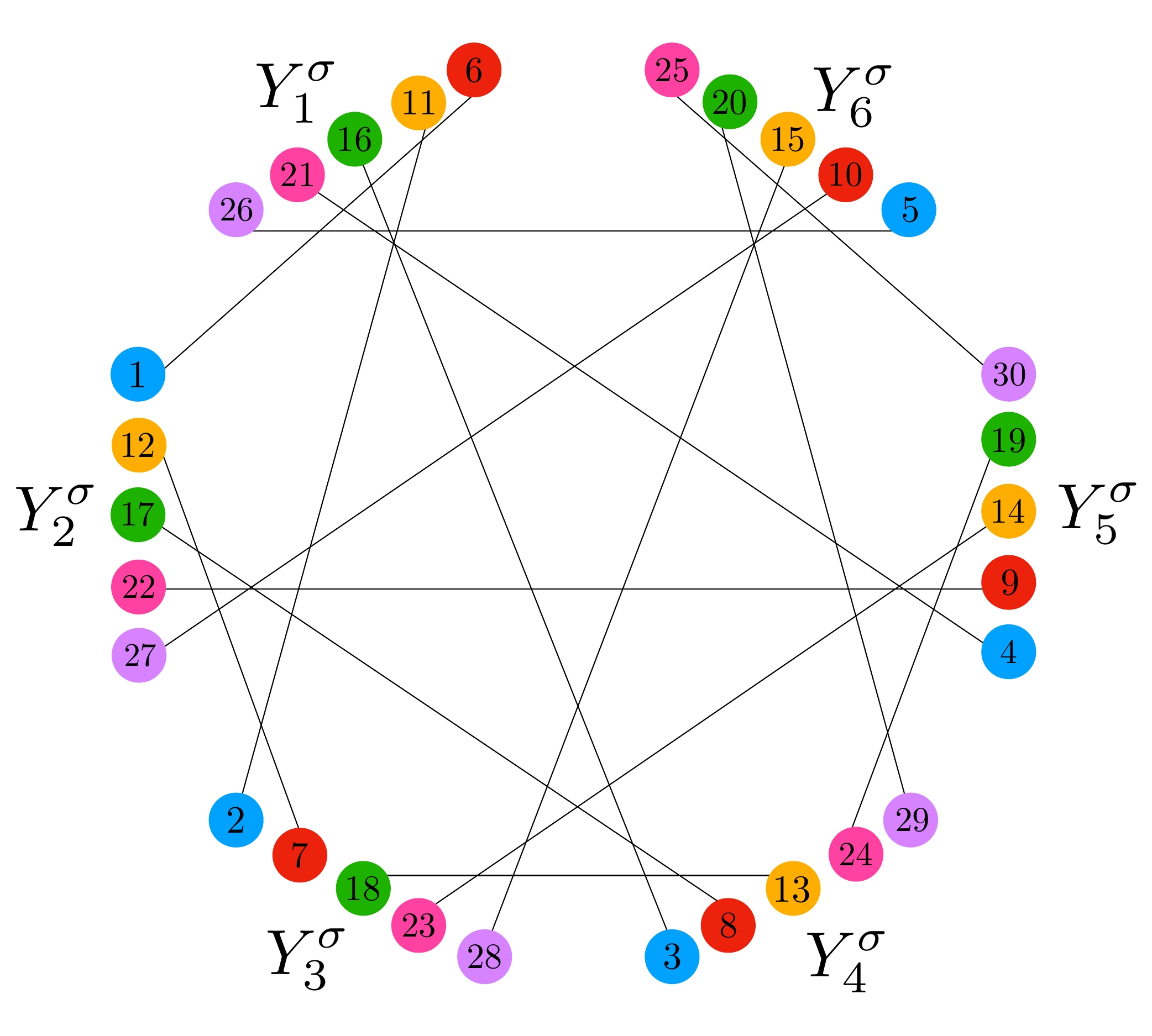}
\end{minipage}
\caption{\label{fig:perm_A}(Left.) An example, for $n=15$, of a perfect matching $\mathcal{M}$ of the Turán graph $T(2n,\ell+1)$, with $\ell=5$. The vertices (representing single Majorana operators) are partitioned into $6$ disjoint subsets $Y_i$ of cardinality $5$. The six partition subsets are assigned distinct colors and the sparsely arranged perfect matching (see Def. \ref{def:sparse}) ensures the existence of an edge between every two distinct colors. (Right.) A Turán graph $T(30,6)$ obtained by permuting vertices connected by an edge in the original Turán graph, i.e., $v_1\leftrightarrow v_2$ if $\{v_1,v_2\}\in\mathcal{M}$. This ensures each partition subset $Y_i^{\sigma}$ contains no identically colored vertices.}
\end{figure}
The perfect matching we construct is required to satisfy the additional property of being sparsely arranged, which we define as follows.
\begin{definition}\label{def:sparse}
    A perfect matching $\mathcal{M}$ of a Turán graph is \emph{sparsely arranged} if, for every pair of distinct partition subsets $Y_i$ and $Y_{i'}$, there exists an edge $\{v_1,v_2\}\in \mathcal{M}$, with $v_1\in Y_i$ and $v_2\in Y_{i'}$.
\end{definition}
 A sparsely arranged perfect matching for $n=15$ is illustrated in Fig. \ref{fig:perm_A}, and generalized to $2n=\ell(\ell+1)$ vertices in Fig. \ref{fig:perm_B}.

Given a perfect matching $\mathcal{M}$ of a Turán graph $T(2n,\ell+1)$, we define a joint measurement,
\begin{equation}\label{Aeq:jm_matching}
\widetilde\G^O=(2^{2n})^{n-1}\prod_{M\in\mathcal{M}}\G^O_{M}\,,
\end{equation}
where
\begin{equation}\label{eq:povm_edge}
\G^{O}_M(q_M,X)=\frac{1}{2^{2n}}\left(\frac{\id+q_M \gamma^{O,X}_M}{2}\right)\,,
\end{equation}
and $\gamma^{O,X}_M:=\sum_Sx_S\cdot\det(O_{M,S})\cdot\gamma_S$, in analogy with Eqs. (\ref{eq:random_rotated_prod}) and (\ref{eq:jm}). Now, applying the matrix $O$ is equivalent to performing $P_{\pi}\cdot O$ in the original setting of Eq (\ref{eq:jm_gen}), i.e., $\G^{P_{\pi}\cdot O}=\widetilde\G^O$, where $\pi\in S_{2n}$ defines the perfect matching of Eq. (\ref{eq:matching_perm}) and maps elements of $\mathcal{M}$ to $\mathcal{D}_2$.

\begin{proposition}\label{prop:edge}
Suppose $\G^O_M$ is the POVM in Eq. (\ref{eq:povm_edge}) generated from the edge $M=\{v_1,v_2\}\in \mathcal{M}$ of the Turán graph $T(2n,\ell+1)$ connecting $Y_i$ and $Y_{i'}$. Let $O=D=F^{(1)}\oplus\cdots\oplus F^{(\ell+1)}$ be the block diagonal matrix of lower-flat matrices $F^{(i)}\in O(\ell)$, as defined in Eq. (\ref{eq:flat}). Then, $\G^O_M$ jointly measures the set of observables $\{i\gamma_{j_1}\gamma_{j_2} \,|\, j_1\in Y_i,j_2\in Y_{i'}\}$ with sharpness $\eta_{j_1,j_2}=|\det(D_{M,\{j_1,j_2\}})|=\Omega(\ell^{-1})$.
\end{proposition}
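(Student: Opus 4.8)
The plan is to unpack the block-diagonal structure of $D$ and reduce everything to a single $2\times 2$ determinant. First I would observe that an edge $M=\{v_1,v_2\}$ of $T(2n,\ell+1)$ connecting $Y_i$ and $Y_{i'}$ corresponds, after the permutation $\pi$ identifying $\mathcal{M}$ with $\mathcal{D}_2$, to a pair of adjacent rows of $O$; but working directly in the perfect-matching picture (Eq.~\eqref{eq:povm_edge}) we keep $M$ as is and simply note that since $v_1\in Y_i$ and $v_2\in Y_{i'}$ lie in \emph{distinct} blocks of $D=\bigoplus_{i}F^{(i)}$, the two rows of $D$ indexed by $M$ have disjoint supports: row $v_1$ is supported on the columns $Y_i$ and row $v_2$ on the columns $Y_{i'}$.

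Next I would analyze $\det(D_{M,S})$ for $S=\{j_1,j_2\}$ with $j_1<j_2$. Because the two rows have disjoint column-supports $Y_i$ and $Y_{i'}$, the $2\times2$ submatrix $D_{M,S}$ is monomial precisely when $S$ has one element in $Y_i$ and one in $Y_{i'}$: in that case $D_{M,S}$ is (up to a row/column swap) diagonal with entries $f^{(i)}_{v_1 j_a}$ and $f^{(i')}_{v_2 j_b}$, so $|\det(D_{M,S})| = |f^{(i)}_{v_1 j_a}|\cdot|f^{(i')}_{v_2 j_b}|$. If instead both $j_1,j_2$ lie in the same block, or in a block disjoint from both $Y_i$ and $Y_{i'}$, then at least one row of $D_{M,S}$ is zero and the determinant vanishes, so $\gamma_S$ receives the trivial marginal $\id/2$ — consistent with Prop.~\ref{prop:jm_pp}. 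Hence the set of observables jointly measured with nonzero sharpness is exactly $\{i\gamma_{j_1}\gamma_{j_2}\mid j_1\in Y_i,\ j_2\in Y_{i'}\}$, as claimed.

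Finally I would bound the sharpness from below. By the lower-flat property of each $F^{(i)}\in O(\ell)$ from Eq.~\eqref{eq:lower-flat}, every nonzero entry satisfies $|f^{(i)}_{uv}| = \Omega(\ell^{-1/2})$; the explicit construction in Eq.~\eqref{eq:flat} gives the concrete constant $|f^{(i)}_{uv}|\ge \tfrac{1}{2\sqrt{\ell}}$ when $\ell$ is not a Hadamard order (and $|f^{(i)}_{uv}| = \ell^{-1/2}$ when it is). Multiplying the two relevant entries, $\eta_{j_1,j_2} = |\det(D_{M,\{j_1,j_2\}})| = |f^{(i)}_{v_1 j_a}|\,|f^{(i')}_{v_2 j_b}| = \Omega(\ell^{-1})$. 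That $\G^O_M$ genuinely \emph{jointly measures} these unsharp observables (i.e.\ is a parent POVM yielding each $\M_S$ with this sharpness) then follows verbatim from Prop.~\ref{prop:jm_pp} applied with $R=M$, using the post-processing of Eq.~\eqref{eq:pp_gen}. The only mildly delicate point — the main ``obstacle'' — is carefully tracking the disjointness of supports to rule out the off-diagonal contributions to the $2\times 2$ determinant and to confirm that no spurious cross-block term survives; everything else is bookkeeping, invoking the lower-flat bound and Prop.~\ref{prop:jm_pp}.
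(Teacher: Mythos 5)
Your proposal is correct and follows essentially the same route as the paper: it reduces the claim to the classical post-processing calculation of Prop.~\ref{prop:jm_pp} (with $R$ replaced by the edge $M$), uses the disjoint column-supports of the two rows of $D$ indexed by $M$ to see that $D_{M,\{j_1,j_2\}}$ is monomial exactly when $j_1\in Y_i$ and $j_2\in Y_{i'}$, and then applies the lower-flat bound of Eq.~\eqref{eq:lower-flat} to the product of the two nonzero entries. The explicit ruling-out of off-diagonal cross-block contributions is a slightly more careful bookkeeping step than the paper spells out, but it is the same argument.
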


This can be easily checked by performing the same classical post-processing calculation described in Appendix \ref{A:pp}. Note that $\eta_{j_1,j_2}=|\det(D_{M,\{j_1,j_2\}})|=|d_{v_1j_1}d_{v_2j_2}|$, where $d_{\mu_1j_1}$ and $d_{v_2j_2}$ are matrix elements of $D$ that come from lower-flat orthogonal matrices $F^{(i)}$ and $F^{(i')}$, respectively. Thus, the lower bound on the sharpness follows directly from Eq. (\ref{eq:lower-flat}).

If $\mathcal{M}$ is sparsely arranged, it follows from Prop. \ref{prop:edge} that $\widetilde\G^{D}$, as defined in Eq. (\ref{Aeq:jm_matching}), jointly measures the set of observables $\{\gamma_S\,|\, S\,\, \text{is an edge of} \,\,T(2n,\ell+1)\}$ with $\eta_S=\Omega(\ell^{-1})$, i.e., all quadratic monomials whose constituent vertices are distinctly colored.

To jointly measure observables composed of operators from the same partition subset (i.e., of the same color), we define a second POVM based on a new partition of vertices of the Turán graph. This partition is described by a permutation $\sigma\in S_{2n}$ which permutes elements from the original partition subsets $Y_1,\ldots,Y_{\ell+1}$ to $Y_1^{\sigma},\ldots,Y_{\ell+1}^{\sigma}$, where $\sigma$ exchanges vertices $v_1\leftrightarrow v_2$ if and only if  $\{v_1,v_2\}\in\mathcal{M}$. This guarantees that each partition subset $Y_i^{\sigma}$ contains no identically colored vertices (see, e.g., Fig. \ref{fig:perm_A}, right). In this case, the POVM is defined by the matrix $D\cdot P_{\sigma}$, where $P_{\sigma}$ permutes the columns of $D$ via $\sigma\in S_{2n}$. From the POVM $\widetilde\G^{D\cdot P_{\sigma}}$, it follows via Prop. \ref{prop:edge} that we can jointly measure (with sufficient sharpness) all quadratic observables composed of Majorana operators of the same color. In conclusion, after randomizing over both $\G^{D}$ and $\G^{D\cdot P_{\sigma}}$, the full set of observables is measured with sharpness satisfying the required asymptotic scaling.

\begin{figure}[!t] \label{fig:permutation_main2}
  \centering
\begin{minipage}{0.49\columnwidth}
 \centering
\includegraphics[scale=0.25]{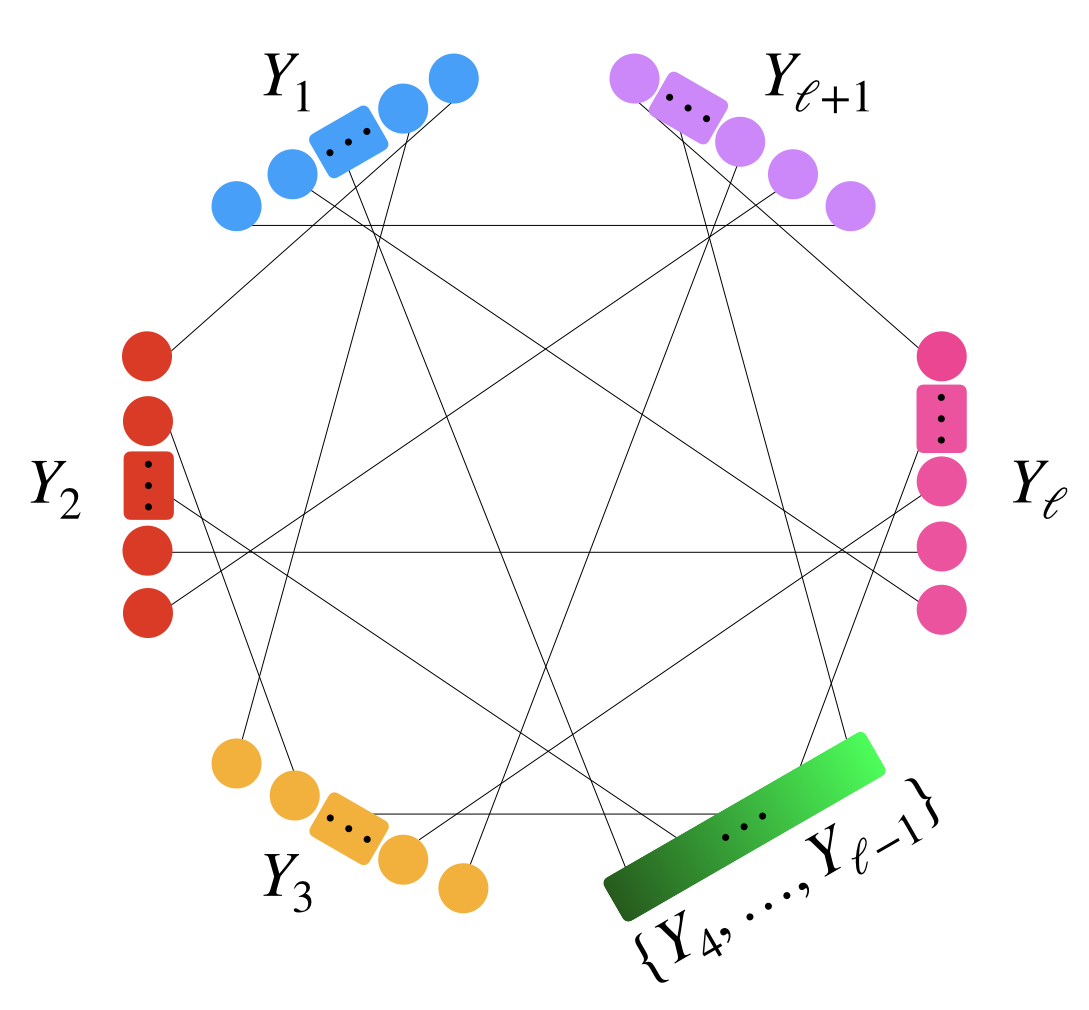}
\end{minipage}
\caption{\label{fig:perm_B}
A perfect matching of the graph $T(2n,\ell+1)$, with its $2n$ vertices partitioned into disjoint subsets $Y_i$, $i=1,\ldots, \ell+1$, each of cardinality $\ell$. First, each vertex in $Y_1$ is connected to a vertex of a different subset via the edges $(1,\ell+1), (2,2\ell+1),\ldots,(\ell,\ell^2+1)$, ensuring that $Y_1$ is connected to every other partition subset. This strategy repeats for the remaining subsets, i.e., vertices in $Y_2$ are paired to vertices of different subsets by the edges $(\ell+2,2\ell+2),(\ell+3,3\ell+2),\ldots,(2\ell,\ell^2+2)$. This continues to the penultimate subset $Y_{\ell}$ that has only one vertex left unpaired, and is joined to the remaining vertex of $Y_{\ell+1}$ by the edge $(\ell^2,\ell(\ell+1))$. We note that the perfect matching is used to define the permutation $\pi$ in Eq. (\ref{A:perm}), required to construct $O^{(1)}\in O(2n)$ (cf. Eq. (\ref{eq:matrixQ})). The permutation is defined such that each edge is mapped to a unique element in $\mathcal{D}_2$, e.g., $(1,\ell+1)\rightarrow (1,2)$.}
\end{figure}

In the circuit representation of the measurement (see Fig. \ref{fig:general_idea} and Eq. (\ref{eq:jm_gen})), after performing a fermionic Gaussian unitary, we simultaneously measure the $n$ commuting observables $\{\gamma_R\,|\,R\in \mathcal{D}_2\}$ as opposed to $\{\gamma_M\,|\,M\in \mathcal{M}\}$. This difference is incorporated by the permutation defined by the sparsely arranged perfect matching $\mathcal{M}$ of Fig. \ref{fig:perm_B} (cf. Eq. (\ref{eq:matching_perm})), namely,
\begin{equation}\label{A:perm}
\pi=\left( \begin{array}{ccccccc}
                      1 & \ell+1 & 2 & 2\ell+1 & \ldots & \ell^2-1  & 2n\\
                      1 & 2 & 3 & 4 & \ldots & 2n-1 & 2n
         \end{array} \right)\,.
\end{equation}
In particular, the measurements $\widetilde\G^D$ and $\widetilde\G^{D\cdot P_{\sigma}}$ defined in Eq. (\ref{Aeq:jm_matching}) are equivalent to the measurements $\G^{P_{\pi}\cdot D}$ and $\G^{P_{\pi}\cdot D\cdot P_{\sigma}}$ defined in Eq. (\ref{eq:jm_gen}), respectively. Written explicitly, the permutation $\sigma$, which exchanges vertices $v_1\leftrightarrow v_2$ if and only if  $\{v_1,v_2\}\in\mathcal{M}$ (e.g., Fig. \ref{fig:perm_A}, right) is defined, in cyclic form, as
\begin{equation}\label{eq:sigma2}
\sigma=(1,\ell+1)(2,2\ell+1)\ldots(\ell^2-1,\ell(\ell+1))\,.
\end{equation}

\subsection{Example: \texorpdfstring{$n=3$}{n=3}}
We consider the case of $n=3$ with $\ell=2$. To jointly measure all quadratic Majorana observables we require two POVMs of the form defined in Eq. (\ref{eq:jm_gen}),
\begin{equation*}
\G^{(r)}({\bf q},X)=4^6\G^{(r)}_{12}(q_{12},X)\cdot \G^{(r)}_{34}(q_{34},X)\cdot \G^{(r)}_{56}(q_{56},X)\,,
\end{equation*}
where, e.g.,
\begin{equation*}
\G^{(r)}_{12}(q_{12},X)=\frac{1}{2^6}\left(\frac{\id+q_{12}\sum_{S\in\mathcal{S}_2} x_{S}\cdot\det(O^{(r)}_{12,S})\cdot\gamma_{S}}{2}\right)\,,
\end{equation*}
and $r=1,2$. To construct the orthogonal matrices $O^{(r)}$, let $D=F\oplus F\oplus F$, where
\begin{equation}
F=\frac{1}{\sqrt{2}}\left( \begin{array}{cc}
           1 & 1 \\
           1 & -1
         \end{array} \right)\,
\end{equation}
is a Hadamard matrix (also a lower-flat matrix), and $P_{\pi}$, with $\pi=(1)(23)(45)(6)$, is the matrix which permutes rows $2\leftrightarrow3$ and $4\leftrightarrow5$ of $D$, as described by the perfect matching of the Turán graph $T(6,3)$ in Fig. \ref{fig:perm_C}.

The first POVM $\G^{(1)}$ is therefore specified by
\begin{equation}
O^{(1)}=P_{\pi}\cdot D=\frac{1}{\sqrt{2}}\left( \begin{array}{cccccc}
           1 & 1 & 0 & 0 & 0 & 0 \\
           0 & 0 & 1 & -1 & 0 & 0 \\
           1 & -1 & 0 & 0 & 0 & 0 \\
           0 & 0 & 0 & 0 & 1 & 1 \\
           0 & 0 & 1 & 1 & 0 & 0 \\
           0 & 0 & 0 & 0 & 1 & -1
         \end{array} \right)\,.
\end{equation}
Applying the post-processing of Eq. (\ref{eq:pp_gen}) to $\G^{(1)}_{12}$ we recover noisy versions of $\gamma_S$ only when $\det(O^{(1)}_{12,S})\neq0$. This holds when $S\in\{\{1,3\}, \{1,4\},\{2,3\},\{2,4\}\}$, in which case $|\det(O^{(1)}_{12,S})|=1/2$. Likewise, post-processing of $\G^{(1)}_{34}$ and $\G^{(1)}_{56}$ yields the noisy observables associated with $S\in\{\{1,5\}, \{1,6\},\{2,5\},\{2,6\}\}$ and $S\in\{\{3,5\}, \{3,6\},\{4,5\},\{4,6\}\}$, respectively, with $\eta_{S}=1/2$, as defined in Eq. (\ref{eq:optimal_eta_S}). These cover all edges of the Turán graph $T(6,3)$ in Fig. \ref{fig:perm_C} (left).

\begin{figure}[t]\label{fig:figure_n_3}
 \centering
\begin{minipage}{0.49\columnwidth}
 \centering
\includegraphics[scale=0.10]{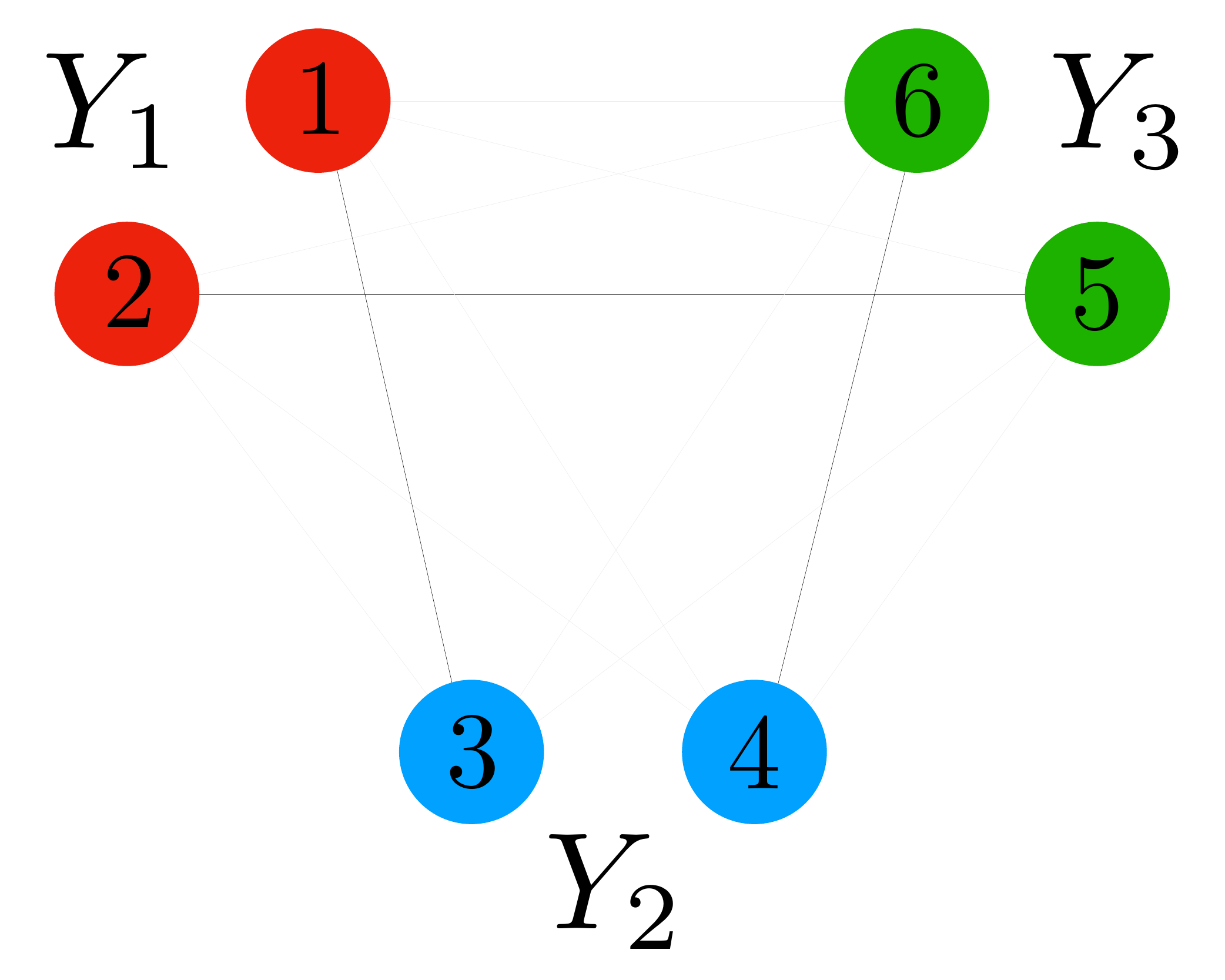}
\end{minipage}
\begin{minipage}{0.49\columnwidth}
 \centering
\includegraphics[scale=0.10]{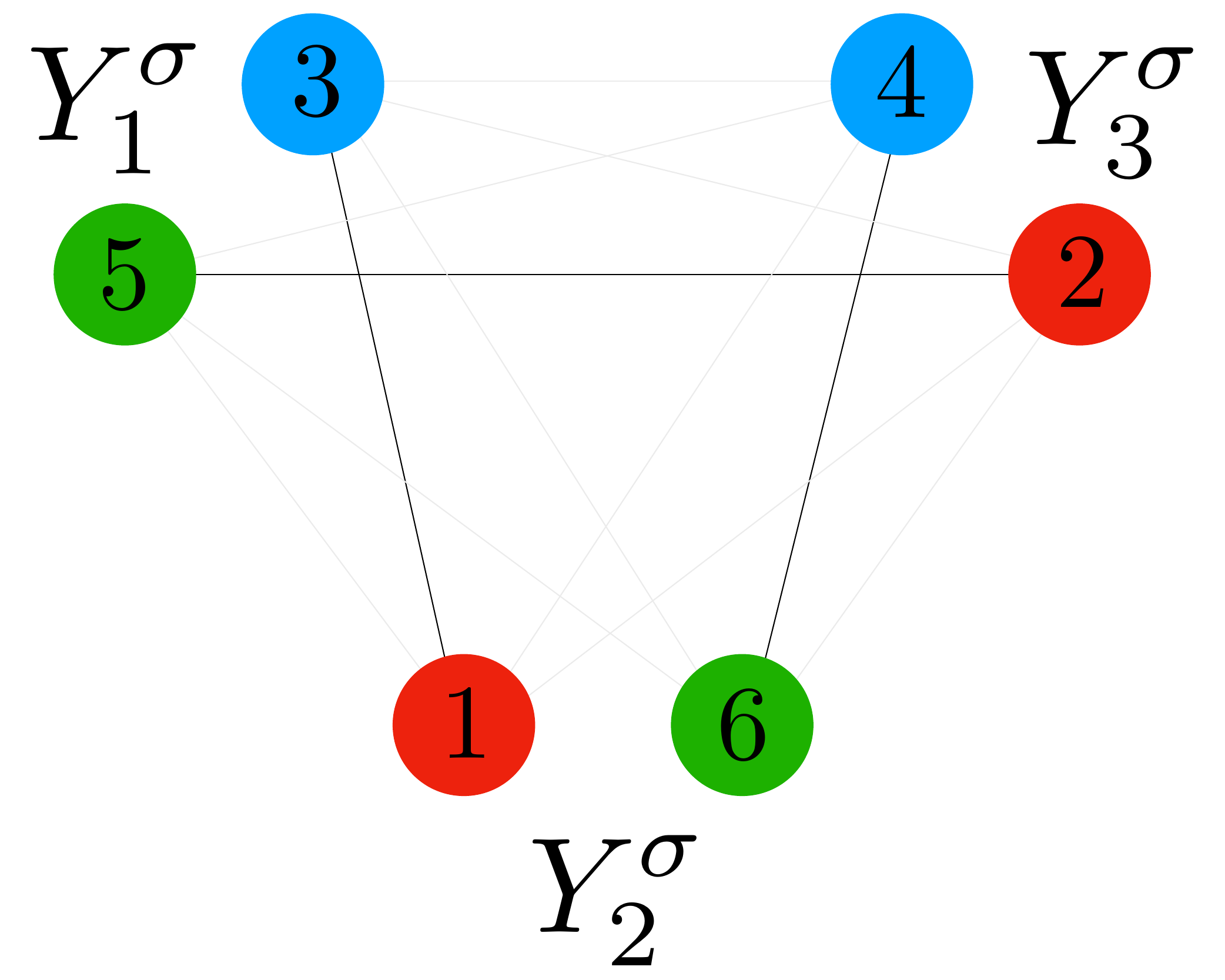}
\end{minipage}
\caption{\label{fig:perm_C}A perfect matching of the Turán graph $T(6,3)$, used to define the permutation matrices $P_{\pi}$ and $P_{\sigma}$. The three edges $(1,3)$, $(2,5)$ and $(4,6)$ of the perfect matching (left) are permuted to $(1,2)$, $(3,4)$ and $(5,6)$, respectively, such that $\pi=(1)(23)(45)(6)$. The second Turán graph (right) is obtained by swapping vertices connected by an edge in the first perfect matching, i.e., $\sigma=(13)(25)(46)$.}
\end{figure}

The missing observables, i.e., $S\in\{\{1,2\},\{3,4\},\{5,6\}\}$, are recovered from the second POVM $\G^{(2)}$ with
\begin{equation}
O^{(2)}=P_{\pi}\cdot D\cdot P_{\sigma}\,,
\end{equation}
where $\sigma=(13)(25)(46)$ such that $P_{\sigma}$ permutes the columns $1\leftrightarrow3$, $2\leftrightarrow5$ and $4\leftrightarrow6$ of $D$ (see Fig. \ref{fig:perm_C}, right). Explicitly, we implement
\begin{equation}
O^{(2)}=\frac{1}{\sqrt{2}}\left( \begin{array}{cccccc}
           0 & 0 & 1 & 0 & 1 & 0 \\           
           1 & 0 & 0 & 0 & 0 & 1 \\
           0 & 0 & 1 & 0 & -1 & 0 \\           
           0 & 1 & 0 & 1 & 0 & 0 \\
           1 & 0 & 0 & 0 & 0 & -1 \\
           0 & 1 & 0 & -1 & 0 & 0
         \end{array} \right)\,,
\end{equation}
and perform post-processing of each POVM $\G^{(2)}_{12}$, $\G^{(2)}_{34}$ and $\G^{(2)}_{56}$. From $\G^{(2)}_{12}$ we obtain the four noisy quadratic observables $S\in\{\{1,3\}, \{1,5\},\{3,6\},\{5,6\}\}$. Similarly, for $\G^{(2)}_{34}$ and $\G^{(2)}_{56}$ we obtain the observables $S\in\{\{2,3\}, \{2,5\},\{3,4\},\{4,5\}\}$, and $S\in\{\{1,2\}, \{1,4\},\{2,6\},\{4,6\}\}$, respectively. In each case, $\eta_{S}=1/2$.

\section{Joint measurement for arbitrary $n$ modes}\label{A:arbitrary_n}

To prove that Theorem \ref{thm:jm} (i) applies for an arbitrary $n$ mode fermionic system, we now generalize the construction of $O^{(1)}$ and $O^{(2)}$ defined in Eq. (\ref{eq:matrixQ}) and described in Appendix \ref{A:jm_quadratic}. Suppose $2n=\ell(\ell+1)+t$, where $0\leq t<2(\ell+1)$. We adapt the joint measurement by modifying the subsets of vertices on which the diagonal block matrix acts. In particular, consider the Turán graph $T(2n, \ell+1)$, with the $2n$ vertices partitioned into $\ell+1$ subsets $Y_i$ of cardinality $\ell+\ell_i$, where $0\leq \ell_i\leq 2$ for each $i\in [\ell+1]$. The partition is chosen such that $Y_1=\{1,2,\ldots, \ell+\ell_1\}$, $Y_2=\{\ell+\ell_1+1,\ldots,2\ell+\ell_1+\ell_2 \}$, etc., and the extra $t$ vertices are distributed as evenly as possible among the subsets. Thus, we define a new block diagonal matrix $D=\bigoplus_{i=1}^{\ell+1} F^{(i)}$, where $ F^{(i)}$ is a $(\ell+\ell_i)\times (\ell+\ell_i)$ lower-flat orthogonal matrix.

We consider two cases: (a) $0\leq t\leq (\ell+1)$, and (b) $(\ell+1)<t<2(\ell+1)$. First, suppose (a) holds such that each subset $Y_i$ contains at most $\ell+1$ vertices. The matrix $O^{(1)}=P_{\pi}\cdot D$ relies on a permutation $\pi$ of a perfect matching $\mathcal{M}$ of $T(2n, \ell+1)$ which is \emph{almost} sparsely arranged. $\mathcal{M}$ is partially constructed by ignoring the ``extra'' vertices and following the strategy described in Fig. \ref{fig:perm_B}. Edges can then be added in an arbitrary fashion to complete the perfect matching, leaving some of the partition subsets connected twice. The matrix $O^{(2)}=P_{\pi}\cdot D \cdot P_{\sigma}$ results from a permutation $\sigma$ (in analogy with Fig. \ref{fig:perm_A}, right) which exchanges vertices $v_1\leftrightarrow v_2$ iff $\{v_1,v_2\}\in \mathcal{M}$. This leads to a partition $Y_i^{\sigma}, i=1,\ldots,\ell+1$, in which no two vertices from the same set are mapped via $\sigma$ to a single subset $Y_{i}^{\sigma}$.

This is no longer possible if (b) holds since the cardinality of $Y_i$ can be larger than the number of partition subsets. Instead, whenever $|Y_i|=\ell+2$, we modify the perfect matching to ensure the two ``extra'' vertices $v_1,v_2\in Y_i$ are connected by an edge in $\mathcal{M}$, i.e., $M=\{v_1,v_2\}\in \mathcal{M}$. The corresponding POVM $\G^D_M$, defined in Eq. (\ref{eq:povm_edge}), is not guaranteed to measure all sufficiently unsharp pairs $i\gamma_{j_1}\gamma_{j_2}$, with $j_1,j_2\in Y_i$. In particular, the relevant submatrix $D_{M,\{j_1,j_2\}}$ (from rows $v_1$ and $v_2$) is not monomial, and the determinant could be zero (or too small). Nevertheless, due to the orthogonality of $F^{(i)}$, there exists at least one pair $\{j_1,j_2\}\subset Y_i$ such that $i\gamma_{j_1}\gamma_{j_2}$ is jointly measured by $\G^D_{M}$ with sufficient sharpness.  In particular, for orthogonality of $F$ to hold (dropping the index $i$ for simplicity), given a pair of rows $(v_1,v_2)$ there exists a pair of columns $(j_1,j_2)$ such that exactly three of the four matrix entries of
\begin{equation}
F_{\{v_1,v_2\},\{j_1,j_2\}}=\left( \begin{array}{cc}
           F_{v_1,j_1} & F_{v_1,j_2} \\
           F_{v_2,j_1} & F_{v_2,j_2}
         \end{array} \right)\,
\end{equation}
have the same sign. In this case, $|\det(F_{\{v_1,v_2\},\{j_1,j_2\}})|=|F_{v_1,j_1}F_{v_2,j_2}|+|F_{v_1,j_2}F_{v_2,j_1}|=\Omega(\ell^{-1})$, guaranteeing that $i\gamma_{j_1}\gamma_{j_2}$ is jointly measured with sufficient sharpness. Therefore, it is not necessary to ensure the vertices of the given pair $j_1$ and $j_2$ lie in disjoint subsets in the second partitioning. Instead, $\sigma$ is chosen such that the only vertices of $Y_i$ that remain together in a single partition subset are those connected by an edge in $\mathcal{M}$. Following a similar argument presented in Appendix \ref{A:jm_quadratic}, we conclude all degree--$2$ Majorana observables of an $n$ mode fermionic system are jointly measured with the desired sharpness.

\section{Joint measurement of degree--$2k$ Majoranas}\label{A:jm_quartic}

To simultaneously measure all degree--$2k$ Majorana observables (with the noise specified in Thm. \ref{thm:jm}) we randomize over a collection on parent POVMs $\G^{O}$ defined in Eq. (\ref{Aeq:jm_matching}). For simplicity, we work with POVMs $\widetilde\G^O$, described by Eq. (\ref{eq:jm_matching}) and a perfect matching $\mathcal{M}$ of a Turán graph $T(2n,\ell+1)$. We initially assume $2n=\ell(\ell+1)$, ensuring that the corresponding Turán graph has a vertex set that partitions into $\ell+1$ subsets, each of cardinality $\ell$. Furthermore, we assume $\mathcal{M}$ is sparsely arranged (cf. Sec. \ref{sec:jm_pair}).

Let $\mathcal{Y}$ denote the set of all partitions $Y=(Y_1,\ldots,Y_{\ell+1})\in\mathcal{Y}$ of the Turán graph $T(2n,\ell+1)$, i.e., all partitions of its vertex set $[2n]=\{1,2,\ldots,2n\}$, into $\ell+1$ subsets $Y_i\in \binom{[2n]}{\ell}$, such that $Y_i\cap Y_{i'}=\varnothing$. In the following definition, we associate a joint measurement $\widetilde\G^O$ with each partition of a Turán graph.

\begin{definition}
We say that the joint measurement $\widetilde\G^O$  (in Eq. (\ref{Aeq:jm_matching})) is \emph{defined by the partition} $Y=(Y_1,\ldots,Y_{\ell+1})\in \mathcal{Y}$ of a Turán graph $T(2n,\ell+1)$ if 
$O=D\cdot P_{\sigma}$, where $\sigma\in S_{2n}$ is the permutation associated with the partition $Y$.
\end{definition}

Here, $D=\bigoplus_{i=1}^{\ell+1}F^{(i)}$ is the usual block diagonal matrix of lower-flat matrices $F^{(i)}\in O(\ell)$, and $\sigma$ is the permutation which maps the partition $\{1,\ldots,\ell\},\ldots,\{\ell^2+1,\ldots,\ell(\ell+1)\}$ to the subsets $Y_1,\ldots Y_{\ell+1}$.

\begin{definition}
The set $S=\{j_1,j_2,\ldots,j_{2k}\}\in\mathcal{S}_{2k}$ is \emph{generated} from a partition $Y=(Y_1,\ldots,Y_{\ell+1})\in\mathcal{Y}$ if for every pair $\{j,j'\}\subseteq S$, then $\{j,j'\}\nsubseteq Y_1,\ldots,Y_{\ell+1}$, i.e., no two elements of $S$ are in the same partition subset.
\end{definition}

\begin{lemma}\label{lem:jm_2k}
Let $\widetilde\G^O$ be the joint measurement defined by the partition $Y\in\mathcal{Y}$ of a Turán graph $T(2n,\ell+1)$. If $S\in \mathcal{S}_{2k}$ is generated from $Y$, then $\widetilde\G^O$ jointly measures $\gamma_S$ with $\eta_S=\Omega(n^{-k/2})$
\end{lemma}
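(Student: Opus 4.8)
The plan is to produce, for every $S$ satisfying the hypothesis, an explicit index set $R$ --- a union of $k$ distinct edges of the sparsely arranged perfect matching $\mathcal{M}$ --- for which the $2k\times 2k$ submatrix $O_{R,S}$ is \emph{monomial}, and then bound its determinant from below. The reduction to this is as follows: reading Prop.~\ref{prop:jm_pp} through the identity $\G^{P_\pi\cdot O}=\widetilde\G^O$ (under which the admissible index sets $R\in\mathcal{D}_{2k}$ correspond precisely to unions of $k$ distinct edges of $\mathcal{M}$), $\widetilde\G^O$ jointly measures $\gamma_S$ with sharpness $\eta_S\ge|\det(O_{R,S})|$ for any such $R$; this is verified by the same post-processing computation as in Appendix~\ref{A:pp}. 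So it suffices to make $|\det(O_{R,S})|=\Omega(n^{-k/2})$ for one choice of $R$. To this end I would first record the sparsity pattern of $O=D\cdot P_\sigma$: writing $B_1,\dots,B_{\ell+1}$ for the standard blocks $\{1,\dots,\ell\},\dots,\{\ell^2+1,\dots,\ell(\ell+1)\}$ on which the lower-flat matrices $F^{(i)}\in O(\ell)$ in $D=\bigoplus_i F^{(i)}$ act, the fact that $\sigma$ sends $B_i$ to $Y_i$ implies that the column of $O$ labelled by $j$ is supported exactly on the rows of the block $B_{i(j)}$, where $i(j)$ is the unique index with $j\in Y_{i(j)}$, and every such entry $O_{a,j}$ with $a\in B_{i(j)}$ is nonzero with $|O_{a,j}|=\Omega(\ell^{-1/2})$ by lower-flatness.

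Next I would choose $R$. Since $S=\{j_1,\dots,j_{2k}\}$ is generated from $Y$, its elements lie in $2k$ distinct subsets $Y_{i_1},\dots,Y_{i_{2k}}$, so by the sparsity pattern their columns are supported on $2k$ pairwise-disjoint blocks $B_{i_1},\dots,B_{i_{2k}}$. Because $\mathcal{M}$ is sparsely arranged, each pair of these blocks is joined by an edge of $\mathcal{M}$, so the available $\mathcal{M}$-edges form a complete graph on the $2k$ blocks; as $2k$ is even, pick a perfect matching $\{M_1,\dots,M_k\}$ of those blocks among such edges and set $R:=M_1\cup\dots\cup M_k$, a union of $k$ distinct (hence disjoint) edges of $\mathcal{M}$ that meets each $B_{i_m}$ in exactly one vertex. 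Monomiality of $O_{R,S}$ is then immediate: a row $a\in R$ lies in some $B_{i_m}$, and among the columns of $S$ only $j_m\in Y_{i_m}$, so its unique nonzero entry is $O_{a,j_m}$; symmetrically, the column $j_m$ is supported on $B_{i_m}$, which meets $R$ in one vertex, so it too has a unique nonzero entry among the rows of $R$. Therefore $|\det(O_{R,S})|$ equals a product of $2k$ entries, each $\Omega(\ell^{-1/2})$, and since $2n=\ell(\ell+1)$ forces $\ell=\Theta(\sqrt n)$ we obtain $\eta_S\ge|\det(O_{R,S})|=\Omega(\ell^{-k})=\Omega(n^{-k/2})$.

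The main obstacle is the combinatorial step of selecting $R$: one must be certain that $k$ edges of $\mathcal{M}$ can be chosen so as to cover the $2k$ prescribed blocks with no block repeated, and this is exactly what the \emph{sparsely arranged} hypothesis provides --- it yields a complete graph of inter-block edges, hence a perfect matching on any even-size collection of blocks. Everything else is bookkeeping; one should also note in passing that ``$S$ is generated from $Y$'' already forces $\ell+1\ge 2k$, so for $k=\mathcal{O}(1)$ and $n$ large there is no degenerate regime, and that the extension beyond $2n=\ell(\ell+1)$ (uneven block sizes, and matching edges that fall inside a single enlarged block) is absorbed by the argument of Appendix~\ref{A:arbitrary_n} in the same way as for degree--$2$.
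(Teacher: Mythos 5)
Your proposal is correct and follows essentially the same route as the paper's proof: choose $R$ as a union of $k$ disjoint edges of the sparsely arranged perfect matching covering the $2k$ distinct partition subsets containing the elements of $S$, observe that the resulting $2k\times 2k$ submatrix $O_{R,S}$ is monomial, and bound its determinant as a product of $2k$ lower-flat entries each of magnitude $\Omega(\ell^{-1/2})$ with $\ell=\Theta(\sqrt{n})$. Your write-up is in fact slightly more explicit than the paper's on the one combinatorial step the paper only asserts, namely that the sparse arrangement yields a complete graph of inter-block matching edges from which a block-level perfect matching on the $2k$ relevant blocks can be extracted.
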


\begin{proof}
Let $\mathcal{M}$ be the sparsely arranged perfect matching (described in Fig. \ref{fig:perm_B} of Appendix \ref{A:jm_quadratic}) of the Turán graph $T(2n,\ell+1)$ and suppose $M_1,\ldots,M_{k}\in \mathcal{M}$ are the $k$ distinct edges connecting the $2k$ distinct partition subsets $Y_{i_1},Y_{i_2},\ldots,Y_{i_{2k}}$. For any $S=\{j_1,j_2,\ldots,j_{2k}\}\in\mathcal{S}_{2k}$, it follows from Prop. \ref{prop:jm_pp} that classical post-processing of the joint measurement yields the unsharp observable $\M_S(\pm)=\frac{1}{2}(\id\pm|\det(O_{M,S})|\gamma_S)$, where $M=\cup_{i=1}^k M_i$. If $S$ is generated from $Y$, the $2k\times 2k$ submatrix $O_{M,S}$ is monomial (i.e., each row and column has exactly one non-zero element). Therefore, the determinant is the product of $2k$ entries from a set of $\ell\times \ell$ lower-flat matrices. Since each entry satisfies Eq. (\ref{eq:lower-flat}), we have $\eta_S=\Omega(n^{-k/2})$.
\end{proof}

It follows that we can jointly measure the subset of unsharp degree--$2k$ observables $\gamma_S$ for which the $2k$ vertices of $S$ belong to $2k$ distinct partition subsets. However, we are not guaranteed to jointly measure observables whose $2k$ vertices belong to fewer than $2k$ distinct subsets. In this case, after classical post-processing, the relevant $2k\times 2k$ submatrix of $O$ is no longer monomial, and a non-zero determinant is not assured. We therefore implement $N$ POVMs $\widetilde\G^{(r)}$, $r=1,\ldots,N$, each defined by a random partition from $\mathcal{Y}$. In the following, we show that $N>4k$ random partitions generate the full set $\mathcal{S}_{2k}$, with high probability. Thus, we arrive at the conclusion described in Theorem \ref{thm:jm}.

\begin{proposition}\label{prop:partitions}
Let $2n=\ell(\ell+1)$ and $2k\leq\ell+1$. The probability of failing to generate $S\in\mathcal{S}_{2k}$ from a partition $Y=(Y_1,\ldots,Y_{\ell+1})\in\mathcal{Y}$ of $[2n]$, chosen uniformly at random from the set of all partitions $\mathcal{Y}$, is given by
\begin{equation}\label{Aeq:prob_fail}
P(S)=1-\ell^{2k}\binom{\ell+1}{2k}\binom{2n}{2k}^{-1}\,.
\end{equation}
Furthermore, for $k=\mathcal{O}(1)$, the failure probability satisfies $P(S)=\mathcal{O}(\ell^{-1})$.
\end{proposition}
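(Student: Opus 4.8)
The plan is to compute $P(S)$ by a direct counting argument and then estimate the resulting expression asymptotically. First I would fix an arbitrary $S = \{j_1, \ldots, j_{2k}\} \in \mathcal{S}_{2k}$ and count the number of partitions $Y = (Y_1, \ldots, Y_{\ell+1}) \in \mathcal{Y}$ from which $S$ \emph{is} generated, i.e., those for which the $2k$ elements of $S$ lie in $2k$ pairwise distinct blocks. Since $|\mathcal{Y}|$ is the total number of ordered partitions of $[2n]$ into $\ell+1$ blocks of size $\ell$, namely $|\mathcal{Y}| = (2n)!/(\ell!)^{\ell+1}$, it suffices to count the ``good'' partitions and divide. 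I would organize the count as follows: choose which $2k$ of the $\ell+1$ blocks will receive the elements of $S$ (a factor $\binom{\ell+1}{2k}$); assign the $2k$ elements of $S$ to these $2k$ chosen blocks, one per block, in one of $(2k)!$ ways; and then freely distribute the remaining $2n - 2k$ vertices among the blocks so that each block ends up with exactly $\ell$ vertices. This last step contributes a multinomial factor which, combined with the first two, must be compared against $|\mathcal{Y}|$.

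The cleanest way to carry out the division is to recognize that the ratio $(\text{good partitions})/|\mathcal{Y}|$ is exactly the probability that, when $2n$ labelled balls are placed into $\ell+1$ labelled boxes of capacity $\ell$ uniformly at random, the $2k$ distinguished balls land in $2k$ distinct boxes. Conditioning sequentially on the placements of $j_1, j_2, \ldots, j_{2k}$, the first element can go anywhere, and at step $m$ (given the previous $m-1$ all landed in distinct blocks) the number of available ``slots'' that keep them distinct is $\ell \cdot (\ell+1-(m-1))$ out of the $2n - (m-1)$ remaining slots. Multiplying these conditional probabilities telescopes to
\begin{equation*}
\frac{\prod_{m=0}^{2k-1} \ell\,(\ell+1-m)}{\prod_{m=0}^{2k-1}(2n-m)} = \frac{\ell^{2k}\,(\ell+1)!/(\ell+1-2k)!}{(2n)!/(2n-2k)!} = \ell^{2k}\binom{\ell+1}{2k}\binom{2n}{2k}^{-1},
\end{equation*}
using $2n - m \ge 2n - 2k + 1$ and the identities $\binom{\ell+1}{2k} = \frac{(\ell+1)!}{(2k)!(\ell+1-2k)!}$ and $\binom{2n}{2k} = \frac{(2n)!}{(2k)!(2n-2k)!}$. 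Hence $P(S) = 1 - \ell^{2k}\binom{\ell+1}{2k}\binom{2n}{2k}^{-1}$, which is the claimed formula (note it does not depend on which particular $S$ was chosen, consistent with the symmetry of the construction). The hypothesis $2k \le \ell+1$ is exactly what is needed for the product $\prod_{m=0}^{2k-1}(\ell+1-m)$ to be a product of positive factors, i.e., for it to be possible at all to place the $2k$ elements in distinct blocks.

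For the asymptotic claim, with $k = \mathcal{O}(1)$ and $2n = \ell(\ell+1)$, so that $\ell = \Theta(\sqrt{n})$, I would expand the success probability $1 - P(S) = \prod_{m=0}^{2k-1}\frac{\ell(\ell+1-m)}{2n-m}$. Each factor is $\frac{\ell(\ell+1-m)}{\ell(\ell+1)-m} = \frac{\ell^2 + \ell - \ell m}{\ell^2 + \ell - m} = 1 - \frac{(\ell-1)m}{\ell^2+\ell-m} = 1 - \Theta(m/\ell)$ for each fixed $m < 2k$. Multiplying the $2k = \mathcal{O}(1)$ such factors and using $1 - \prod_i(1-\epsilon_i) = \sum_i \epsilon_i + \mathcal{O}(\max_i \epsilon_i^2)$ when all $\epsilon_i = \mathcal{O}(1/\ell)$, we get $P(S) = \sum_{m=1}^{2k-1}\Theta(m/\ell) + \mathcal{O}(\ell^{-2}) = \Theta(k^2/\ell) = \mathcal{O}(\ell^{-1})$, since $k$ is constant. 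This establishes $P(S) = \mathcal{O}(\ell^{-1})$.

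I do not anticipate a genuine obstacle here; the argument is elementary combinatorics. The only point requiring mild care is bookkeeping in the telescoping product — making sure the ``slot-counting'' interpretation (capacities $\ell$ per block, so $\ell$ slots per available block) is applied consistently and that the denominator correctly tracks the decreasing number of unfilled slots. A secondary subtlety is confirming that $P(S)$ is genuinely independent of the choice of $S$, which follows because the uniform random partition is invariant under relabelling of the $2n$ vertices; stating this explicitly avoids any appearance of a gap. The asymptotic step is then just a one-line expansion.
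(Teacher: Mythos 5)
Your proposal is correct, and it reaches the formula by a genuinely different combinatorial route than the paper. The paper counts the incidence set $\mathcal{Q}=\{(S,Y)\,:\,S\in\mathcal{S}_{2k}(Y)\}$ in two ways, using the relabelling symmetry in $S$ to reduce the problem to counting, for a \emph{fixed} partition $Y$, the number $|\mathcal{X}(Y)|=\binom{\ell+1}{2k}\ell^{2k}$ of subsets it generates (choose $2k$ blocks, then one of the $\ell$ elements in each); dividing by $\binom{2n}{2k}$ gives the success probability. You instead fix $S$ and run a sequential slot-counting argument: modelling the uniform partition as a uniform assignment of $2n$ labelled elements to $\ell+1$ capacity-$\ell$ boxes, the conditional probability that the $(m+1)$-th element of $S$ avoids the $m$ blocks already hit is $\ell(\ell+1-m)/(2n-m)$, and the telescoping product collapses to the same expression $\ell^{2k}\binom{\ell+1}{2k}\binom{2n}{2k}^{-1}$ (your bookkeeping here is right: the $m$ used blocks contain $m\ell-m$ remaining forbidden slots, leaving $2n-m\ell=\ell(\ell+1-m)$ allowed ones). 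The paper's double count buys a clean separation of the two symmetries and makes the $S$-independence of $P(S)$ automatic; your conditioning argument is more direct and hands you the success probability already in product form, which pays off in the asymptotic step: each factor is $1-\Theta(m/\ell)$ for fixed $m$, so $P(S)=\Theta(k^2/\ell)=\mathcal{O}(\ell^{-1})$ follows in one line, whereas the paper expands the numerator and denominator as polynomials in $\ell$ and tracks the cancellation of the leading $\ell^{4k-2}$ terms. Your factor-by-factor expansion is the cleaner of the two and even upgrades the claim to a two-sided $\Theta(k^{2}\ell^{-1})$ estimate. One cosmetic remark: whether $\mathcal{Y}$ is taken as ordered or unordered partitions changes $|\mathcal{Y}|$ by a factor $(\ell+1)!$ but not the probability, and your slot model is agnostic to this, so no gap arises there.
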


\begin{proof}
Let $\mathcal{S}_{2k}(Y)$ denote the set of all $2k$--element subsets of $[2n]$ that are generated from $Y$, i.e.,
\begin{equation*}
\mathcal{S}_{2k}(Y)=\{\{j_1,j_2,\ldots,j_{2k}\} \,|\,\{j_{u},j_{u'}\}\nsubseteq Y_i\,,\forall\, u,u',i \}\,.
\end{equation*}
The probability of successfully generating the string $\{1,2,\ldots,2k\}$ from a partition $Y\in\mathcal{Y}$, chosen uniformly at random, is given by
\begin{equation}\label{Aeq:success}
\text{Prob}(\{1,2,\ldots,2k\}\in \mathcal{S}_{2k}(Y))=\frac{|\mathcal{P}(1,2,\ldots,2k)|}{|\mathcal{Y}|}\,,
\end{equation}
where
\begin{equation*}
\mathcal{P}(1,2,\ldots,2k)=\{Y\in\mathcal{Y}\,|\,\{1,2,\ldots,2k\}\in \mathcal{S}_{2k}(Y)\}\,,
\end{equation*}
is the set of all partitions which generate $\{1,2,\ldots,2k\}$.
To calculate this probability we first count the number of elements in the set of pairings
\begin{equation}
\mathcal{Q}=\{(S,Y)\,|\,S\in \mathcal{S}_{2k}(Y)\,,\,Y\in\mathcal{Y}\}\,.
\end{equation}
Using the invariance $|\mathcal{P}(1,2,\ldots,2k)|=|\mathcal{P}(S)|$ for all $S\in \binom{[2n]}{2k}$, and $|\mathcal{S}_{2k}|=\binom{2n}{2k}$, we have
\begin{equation}\label{Aeq:Q1}
|\mathcal{Q}|=|\mathcal{P}(1,2,\ldots,2k)|\,\binom{2n}{2k}\,.
\end{equation}
Alternatively, we can count the number of all partitions $\mathcal{Y}$, together with number of sets $S$ generated by each partition, i.e., $\mathcal{X}(Y)=\{S\in \mathcal{S}_{2k}(Y)\}$, such that 
\begin{equation}\label{Aeq:Q2}
|\mathcal{Q}|=|\mathcal{Y}|\cdot|\mathcal{X}(Y)|\,.
\end{equation}

Equating the two expressions of Eqs. (\ref{Aeq:Q1}) and (\ref{Aeq:Q2}), allows us to rewrite the success probability of Eq. (\ref{Aeq:success}) as $|\mathcal{X}(Y)|\binom{2n}{2k}^{-1}$. To calculate $|\mathcal{X}(Y)|$ note that there are $\ell+1$ subsets to choose the $2k$ elements from, and for every subset, we can choose any of the $\ell$ elements, hence
\begin{equation}
|\mathcal{X}(Y)|=\binom{\ell+1}{2k}\ell^{2k}\,.
\end{equation}
The probability of failing to generate a $2k$--element subset $S\in\mathcal{S}_{2k}$ is, therefore,
\begin{equation}
P(S):=\text{Prob}\left(S\notin \mathcal{S}_{2k}(Y)\right)=1-|\mathcal{X}(Y)|\binom{2n}{2k}^{-1}=1-\ell^{2k}\binom{\ell+1}{2k}\binom{2n}{2k}^{-1}\,,
\end{equation}
as stated in the proposition.

To determine the asymptotic behavior of the failure probability for $k=\mathcal{O}(1)$, let $L=\ell(\ell+1)$ such that
\begin{eqnarray}
\ell^{2k}\binom{\ell+1}{2k}{\binom{L}{2k}}^{-1}&=&\frac{(\ell+1)\ell(\ell-1)\cdots(\ell-2k+2)\ell^{2k}}{L(L-1)(L-2)\cdots(L-2k+1)}\,.
\end{eqnarray}
Thus,
\begin{eqnarray}
P(S)&=&\frac{L(L-1)(L-2)\cdots(L-2k+1)-(\ell+1)\ell(\ell-1)\cdots(\ell-2k+2)\ell^{2k}}{L(L-1)(L-2)\cdots(L-2k+1)}\\
&=&\frac{(L-1)(L-2)\cdots(L-2k+1)-(\ell-1)\cdots(\ell-2k+2)\ell^{2k}}{(L-1)(L-2)\cdots(L-2k+1)}=:\frac{\mathcal{N}}{\mathcal{D}}\,.
\end{eqnarray}
The numerator $\mathcal{N}$ can be expanded as
\begin{eqnarray}
\mathcal{N}&=&L^{2k-1}+\mathcal{O}(L^{2k-2})-\ell^{4k-2}+\alpha\ell^{4k-3}+\mathcal{O}(\ell^{4k-4})=\alpha\ell^{4k-3}+\mathcal{O}(\ell^{4k-4})\,,
\end{eqnarray}
where $\alpha\in\mathbb{R}$. The coefficient $\alpha$ can be found from the simple expansion,
\begin{equation}
(\ell-1)\cdots(\ell-2k+2)=\ell^{2k-2}-\tfrac{1}{2}(2k-1)(2k-2)\ell^{2k-3}+\mathcal{O}(\ell^{2k-4})\,,
\end{equation}
such that $\alpha=(2k-1)(2k-2)/2$. Similarly, for the denominator,
\begin{eqnarray}
\mathcal{D}&=L^{2k-1}+\mathcal{O}(L^{2k})=\ell^{4k-2}+\mathcal{O}(\ell^{4k-4})\,.
\end{eqnarray}
Therefore, for $k=\mathcal{O}(1)$, we have $P(S)=\mathcal{O}(\ell^{-1})$, completing the proof of Prop. \ref{prop:partitions}.
\end{proof}

Now suppose we sample $N$ independent partitions of $\mathcal{Y}$. The probability of failing to generate $S\in\mathcal{S}_{2k}$ from at least one of the $N$ partitions is $P(S|N)=P(S)^{N}$. Taking the union bound over all $2k$--element subsets we find that the probability of failing to generate \emph{all} elements of $\mathcal{S}_{2k}$ from $N$ partitions is, 
\begin{equation}
P_N:=P\left(\bigcup_{S\in\mathcal{S}_{2k}}S\,\big|\,N\right)\leq\binom{2n}{2k}P(S)^N\,.
\end{equation}
This leads to the following corollary of Prop. \ref{prop:partitions}.
\begin{corollary}\label{cor:partitions}
Suppose $2n=\ell(\ell+1)$ and $2k\leq\ell+1$, with $k=\mathcal{O}(1)$. Given $N$ partitions sampled uniformly at random from $\mathcal{Y}$, the probability $P_N$ of failing to generate all elements of $\mathcal{S}_{2k}$ satisfies,
\begin{equation}
P_N=\mathcal{O}(\ell^{4k-N})\,.
\end{equation}
\end{corollary}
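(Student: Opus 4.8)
The plan is to obtain the corollary directly from Proposition \ref{prop:partitions} together with the union bound displayed just above it, so the argument is short. First I would use that the $N$ partitions $Y^{(1)},\ldots,Y^{(N)}$ are drawn independently and uniformly from $\mathcal{Y}$: for a fixed $S\in\mathcal{S}_{2k}$ the events ``$S$ is not generated by $Y^{(r)}$'' are then independent, each of probability $P(S)$ as computed in Proposition \ref{prop:partitions}. Hence the probability that $S$ is generated by \emph{none} of the $N$ partitions equals $P(S)^N$, and a union bound over the $|\mathcal{S}_{2k}|=\binom{2n}{2k}$ choices of $S$ gives $P_N\le\binom{2n}{2k}\,P(S)^N$.

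Next I would insert the two asymptotic estimates. By Proposition \ref{prop:partitions}, for $k=\mathcal{O}(1)$ we have $P(S)=\mathcal{O}(\ell^{-1})$ with the implied constant depending only on $k$, so $P(S)^N=\mathcal{O}(\ell^{-N})$ (the implied constant now depending on $k$ and $N$, both fixed in the regime considered). Since $2n=\ell(\ell+1)=\Theta(\ell^2)$ and $2k$ is a constant, $\binom{2n}{2k}=\Theta\big((\ell^2)^{2k}\big)=\Theta(\ell^{4k})$. Multiplying, $P_N=\mathcal{O}(\ell^{4k})\cdot\mathcal{O}(\ell^{-N})=\mathcal{O}(\ell^{4k-N})$, which is the claimed bound.

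There is essentially no real obstacle here; the only point needing care is bookkeeping of constants, namely checking that the constant hidden in $P(S)=\mathcal{O}(\ell^{-1})$ is uniform in $\ell$ so that after exponentiation and multiplication by the polynomial factor $\binom{2n}{2k}$ the result is still of the clean form $\mathcal{O}(\ell^{4k-N})$. I would close by noting how this feeds back into Theorem \ref{thm:jm}: taking $N>4k$ makes the exponent $4k-N$ strictly negative, so $P_N\to 0$ as $\ell\to\infty$ (equivalently $n\to\infty$ via $2n=\ell(\ell+1)$); thus a random collection $\{O^{(r)}\}_{r=1}^N$ of the matrices in Eq.~(\ref{eq:quartic_matrixQ}) generates all of $\mathcal{S}_{2k}$ with high probability, and by Lemma \ref{lem:jm_2k} the associated joint measurement achieves $\eta_{2k}=\Omega(n^{-k/2})$. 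The restriction $2k\le\ell+1$ inherited from Proposition \ref{prop:partitions} is harmless since $\ell$ grows with $n$ while $k$ is fixed; the case $2n\neq\ell(\ell+1)$ is handled by the same padding argument as in Appendix \ref{A:arbitrary_n}.
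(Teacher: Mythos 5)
Your argument is correct and matches the paper's own derivation: independence of the $N$ sampled partitions gives $P(S)^N$ for a fixed $S$, the union bound over the $\binom{2n}{2k}$ subsets gives $P_N\le\binom{2n}{2k}P(S)^N$, and inserting $P(S)=\mathcal{O}(\ell^{-1})$ from Proposition \ref{prop:partitions} together with $\binom{2n}{2k}=\Theta(\ell^{4k})$ (since $2n=\Theta(\ell^2)$ and $k$ is fixed) yields $\mathcal{O}(\ell^{4k-N})$. Your closing remarks on feeding this into Theorem \ref{thm:jm} likewise mirror the discussion following the corollary in the paper.
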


For $k=2$, we observe that since $P_N=\mathcal{O}(\ell^{8-N})$, taking $N>8$ partitions ensures the probability of not generating all subsets becomes sufficiently small for large $n$. Recall that the $r$--th partition defines a POVM $\widetilde\G^{(r)}$, and every $S\in\mathcal{S}_{2k}$ generated from the partition implies $\gamma_S$ can be jointly measured by $\widetilde\G^{(r)}$ with $\eta_S=\Omega(n^{-k/2})$. Thus, Cor. \ref{cor:partitions} implies that we can jointly measure, with high probability, the full set of observables via a randomization over $N>4k$ POVMs.

Finally, note that up to a minor modification of Eq. (\ref{Aeq:prob_fail}), we can extend Prop. \ref{prop:partitions} and Cor. \ref{cor:partitions} to the more general case of $2n=\ell(\ell+1)+t$, where $0\leq t<2(\ell+1)$. Here, we add at most two elements to each of the $\ell+1$ subsets $Y_j$ of $Y$ to ensure all $2n$ integers are included in the partition. In the extremal case, $t=2\ell$, ($t$ must be even) we add an extra two integers to each subset $Y_i$ such that $|Y_i|=(\ell+2)$ for all $i=1,\ldots,\ell$. For simplicity, we can also add two ``fictitious'' elements to the remaining subset, meaning that we generate more terms than necessary. A simple computation reveals the same scaling behavior of $P_N$. Thus, each POVM $\widetilde\G^{(r)}$ is defined by a random partition $Y$ of $[2n]$ into $\ell+1$ subsets, as equally sized as possible. The perfect matching $\mathcal{M}$ is partially constructed by first ignoring the ``extra'' $t$ vertices and following the strategy described in Fig. \ref{fig:perm_B}. Edges are then arbitrarily added to complete the perfect matching. We conclude the proof of Theorem \ref{thm:jm} by observing that each POVM $\widetilde\G^{(r)}$ satisfies Lemma \ref{lem:jm_2k}, and, with high probability, we can jointly measure the full set of observables $\{\gamma_S\,|\,S\in\mathcal{S}_{2k}\}$ after $N>4k$ partitions.

\section{Symmetries of Majorana observables}\label{Asec:symmetries}

In this appendix, by studying the symmetries of the degree--$k$ Majorana observables described by the braid group, we show that the measurement assemblage $\{\gamma_S\,|\,S\in\mathcal{S}_k\}$ is uniformly and rigidly symmetric (see Definitions \ref{def:uniformity} and \ref{def:rigidity}). Importantly, these properties result in a simplified expression for the incompatibility robustness, which we discuss in Appendix \ref{Asec:syk}. 

Using terminology from \cite{nguyen20}, it is useful to describe the degree--$k$ measurement assemblage as a \emph{bundle} of outcomes, defined by the triple $(\meo_k,\pi,\mathcal{S}_k)$ consisting of the set $\mathcal{S}_k$ of Majorana measurements, the set $\meo_k=\{e_S\,|\,e_S=\pm1\,,S\in\mathcal{S}_k\}$ of all possible outcomes, and a map $\pi:\meo_k\rightarrow\mathcal{S}_k$ such that $\pi(e_S)=S$. Furthermore, $\pi^{-1}(S)$ is called the \emph{fibre} over $S$ and denotes the set of outcomes of $\M_S$, i.e., $\sum_{e_S\in\pi^{-1}(S)}\M_S(e_S)=\id$, for each $S$. See Fig. \ref{fig:bundle-section} for an illustration.

We now introduce a symmetry group $G$ of the measurement assemblage acting on the bundle, together with a unitary representation $U:G\rightarrow U(d)$. The symmetry group $G$ permutes the outcomes of the bundle in a way that is compatible with the assignment of the outcomes to the measurements $\M_S$:
\begin{equation}
g[\pi(e_S)]=\pi[g(e_S)] \,\,\,\text{for all} \,\,\,g\in G \,\,\,\text{and}\,\,\,e_S\in\meo_k\,.
\end{equation}
For outcomes related by a symmetry transformation $g\in G$, their associated POVM effects are related by the unitary $U_g$:
\begin{equation}
\M_S(e_S)=U_g\M_{g^{-1}[S]}({g^{-1}[e_S]})U^{-1}_g \,\,\,\text{for all} \,\,\,g\in G \,\,\,\text{and}\,\,\,e_S\in\meo_k\,.
\end{equation}
To describe the symmetries of the measurement assemblage we consider the braid group introduced below.

\begin{figure}
    \centering
    \includegraphics[width=0.4\textwidth]{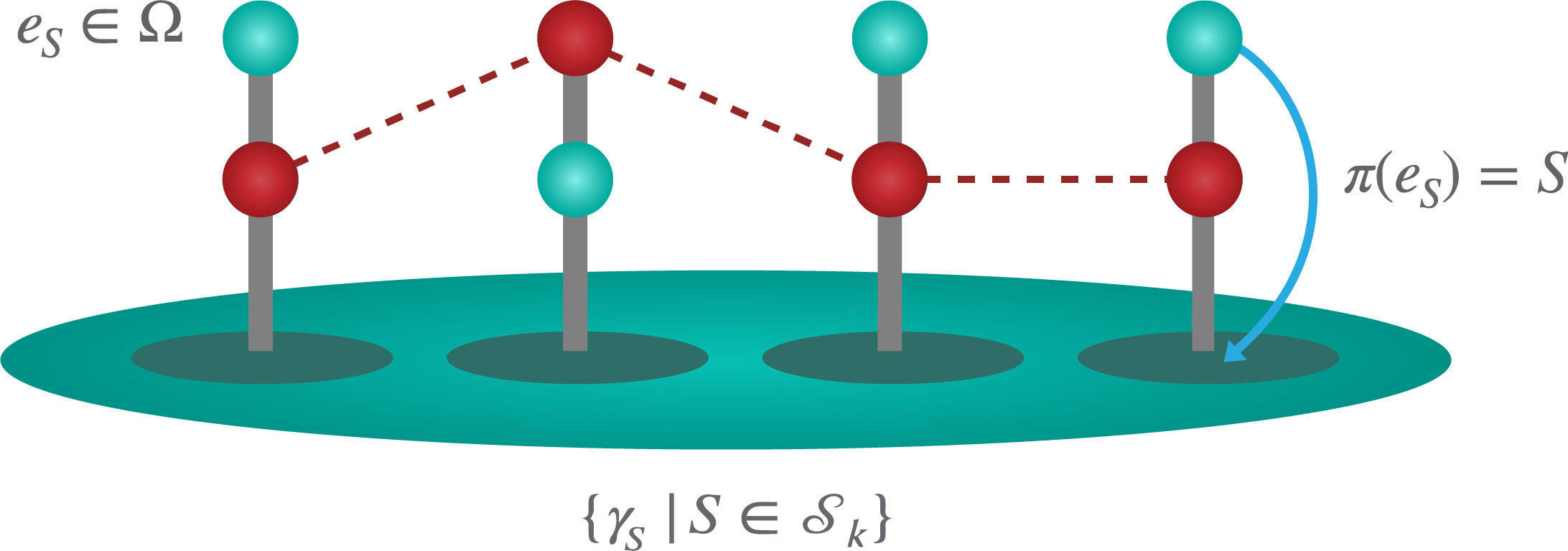}
    \caption{An illustration of the assemblage of Majorana observables $\{\gamma_S\,|\,S\in\mathcal{S}_k\}$, viewed as a bundle of outcomes $(\meo_k,\pi,\mathcal{S}_k)$. The set $\Omega_k$ consists of all possible measurement outcomes (colored balls), and the map $\pi:\meo_k\rightarrow\mathcal{S}_k$ is defined such that $\pi(e_S)=S$, where $S$ (shaded ellipse) labels the measurement $\M_S$. A fiber over $S$  (balls attached to a grey vertical line) corresponds to the outcomes of $\M_S$. A section $\ve$ of $\Omega_k$ (defined in Appendix \ref{Asec:syk}) is a particular outcome string containing one outcome for every $S\in\mathcal{S}_k$. An example of a section is illustrated by the subset of red-colored balls.}
    \label{fig:bundle-section}
\end{figure}

\subsection{The braid group}\label{Asec:braids}

We are interested in studying the symmetries of the measurement assemblage described by the braid group $\mathcal{B}_{n}$. Consider a braid as a collection of $n$ strands with their ends aligned in two rows of points (one above the other). The standard (or Artin) generators of the braid group are $\{b_i,b_i^{-1}\}$, $i=1,\ldots,n-1$, which act by crossing the $i$--th strand over the $(i+1)$--th strand (see Fig. \ref{fig:braid_group}). The generators $b_i$ and $b_i^{-1}$ are the inverse of each other and satisfy the relations,
\begin{eqnarray}
b_ib_j&=&b_jb_i\,\,\,\text{for}\,\,\,|i-j|>1\,,\label{eq:braid_generators1}\\
b_ib_{i+1}b_i&=&b_{i+1}b_ib_{i+1}\,\,\,\text{for}\,\,\,i=1\ldots,n-2\,.\label{eq:braid_generators2}
\end{eqnarray}
The first relation shows that two non-neighboring braids commute, while the second is analogous to the third Reidemeister move in knot theory. Together they are sufficient to fully describe the braid group \cite{artin47}. Multiplication in the group corresponds to the sequential application of the braid operations. 

\begin{figure}[h]
    \centering
    \includegraphics[width=0.4\textwidth]{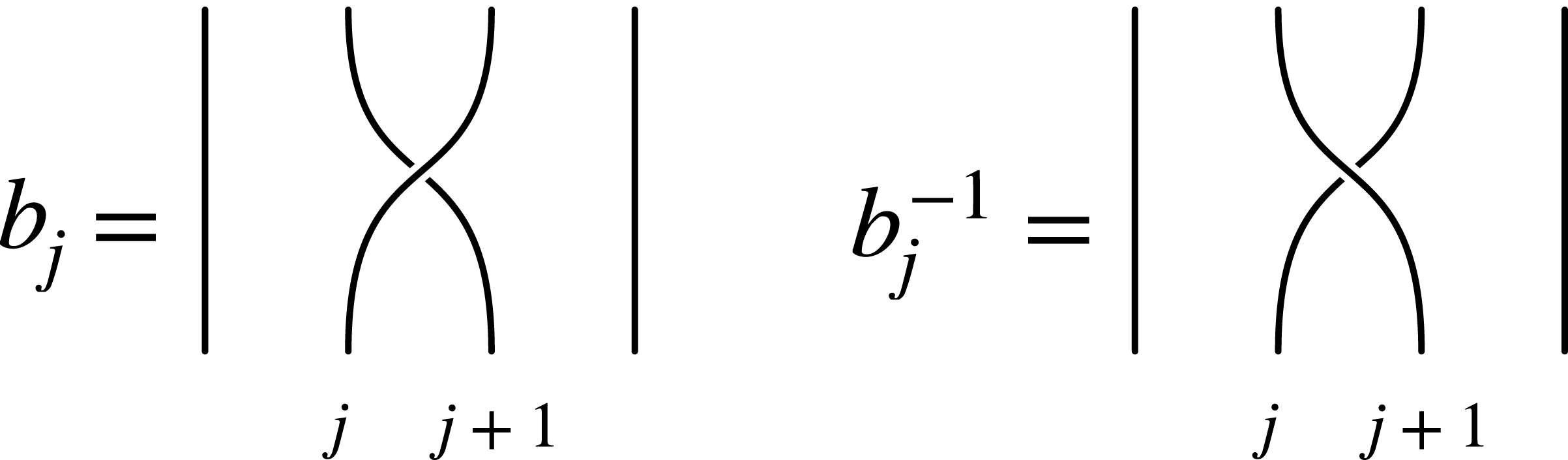}
    \caption{The generators of the braid group. The braid element $b_j$ acts on the $n$ strands by crossing the $i$--th strand over the $(i+1)$--th strand.}
    \label{fig:braid_group}
\end{figure}

Let $\varphi:\mathcal{B}_{2n}\rightarrow U(2^n)$ denote the spinor representation of the braid group, with the associated generators
\begin{equation}\label{Aeq:braids}
B_i:=\varphi(b_i)=\exp(-\frac{\pi}{2}\gamma_i \gamma_{i+1})=\frac{1}{\sqrt{2}}(\id-\gamma_i \gamma_{i+1})\,.
\end{equation}
These nearest-neighbor exchange operators generate the non-local operators,
\begin{eqnarray}
B_{i,j}&=&\exp(-\frac{\pi}{4}\gamma_i \gamma_j)\\
&=&B_{j-1}\hdots B_{i+1}B_iB^{\dagger}_{i+1}\hdots B^{\dagger}_{j-1}\,.
\end{eqnarray}
A braiding transformation acts, via conjugation, on the Majorana operators as,
\begin{equation}\label{eq:action_on_maj}
B_{i,j}\gamma_{\ell} B_{i,j}^{\dagger}=
\begin{cases}
 \gamma_{\ell} &\mbox{if} \,\,\,\ell\notin \{i,j\}\,, \\
 \gamma_j &\mbox{if}\,\,\, \ell=i \,,\\
- \gamma_i &\mbox{if}\,\,\, \ell=j \,.
 \end{cases}
\end{equation}
For a degree--$k$ Majorana operator $\gamma_S$, the braid transformations satisfy,
\begin{equation}\label{eq:action_on_maj_prod}
B_{i,j}\gamma_S B_{i,j}^{\dagger}=
\begin{cases}
 \gamma_S &\mbox{if} \,\,\, i,j\notin S \,\,\text{or}\,\,i,j\in S\,,\\
\gamma_{S}\gamma_i\gamma_j &\mbox{otherwise}\,.
 \end{cases}
\end{equation}
We will also consider the group generated by the $2n$ elementary braid operators $B_i$ and their inverses, defined in Eq. (\ref{Aeq:braids}), together with a single Majorana operator $\gamma_{j}$, with $j\in[2n]$. While the Braiding transformations obey Eq. (\ref{eq:action_on_maj})--(\ref{eq:action_on_maj_prod}), the single Majorana transformations satisfy $\gamma_{j}\gamma_S\gamma_{j}^{\dagger}=\pm\gamma_S$, with the sign depending on the parity of $|S|$ and whether $j\in S$.

\subsection{Uniformity and rigidity} 
\label{sec:uniformity_and_rigidity}

\begin{definition}[Uniformity]\label{def:uniformity}
The set of degree--$k$ Majorana measurements, denoted by $\mathcal{S}_k$, is \emph{uniformly symmetric} if any outcome $e_S\in\meo_k$ can be related to any other outcome $e_{S'}\in\Omega_k$ by a symmetry transformation $g\in G$, i.e., for any pair $S,S'\in\mathcal{S}_k$, there exists $g\in G$ such that $g(e_S)=e_{S'}$.
\end{definition}

\begin{proposition}[Braiding recipe]\label{prop:braiding_recipe}
    Consider an $n$ mode fermionic system.
    For any two Majorana operators $\gamma_S,\gamma_{S'}$, where $S, S'\in \mathcal{S}_k$, there exists a set of braidings that maps $\gamma_S\mapsto\gamma_{S'}$.
    In particular, the set of braidings $\{B_{I_1,J_1},\ldots,B_{I_{|I|},J_{|I|}}\}$, where $I:=\{I_1,\ldots,I_{|I|}\} = S\backslash S'$ and $J := \{J_1,\ldots,J_{|I|}\}=S'\backslash S$,
    acts such that
    \begin{equation}
        \gamma_{S'} = \pm \left(\prod_{i=1}^{|I|}B_{I_i,J_i}\right)\gamma_S \left(\prod_{j=1}^{|I|}B_{I_j,J_j}\right)^\dagger.
    \end{equation}
\end{proposition}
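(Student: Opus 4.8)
The plan is to prove Proposition~\ref{prop:braiding_recipe} by induction on $|I| = |S \backslash S'|$, using the single-transposition rule in Eq.~(\ref{eq:action_on_maj_prod}) as the inductive step. First I would observe that since $|S| = |S'| = k$, we automatically have $|S \backslash S'| = |S' \backslash S|$, so the sets $I = \{I_1,\dots,I_{|I|}\}$ and $J = \{J_1,\dots,J_{|I|}\}$ have the same cardinality and the pairing $(I_i, J_i)$ is well-defined (after fixing, say, ascending orderings of both). The base case $|I| = 0$ is trivial: $S = S'$ and the empty product of braidings acts as the identity.

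For the inductive step, suppose the claim holds whenever the symmetric difference has size $< |I|$. Apply the single braiding $B_{I_1, J_1}$ to $\gamma_S$. Since $I_1 \in S$ and $J_1 \notin S$ (as $J_1 \in S' \backslash S$), Eq.~(\ref{eq:action_on_maj_prod}) gives $B_{I_1,J_1}\gamma_S B_{I_1,J_1}^{\dagger} = \gamma_S \gamma_{I_1}\gamma_{J_1}$, up to a sign absorbed into the monomial's Hermiticity phase; using the anticommutation relations this is $\pm \gamma_{S''}$ where $S'' = (S \backslash \{I_1\}) \cup \{J_1\}$. The key bookkeeping point is that $S'' \backslash S' = I \backslash \{I_1\}$ and $S' \backslash S'' = J \backslash \{J_1\}$, so the symmetric difference has dropped by exactly one pair, and crucially the remaining pairing $(I_2, J_2), \dots, (I_{|I|}, J_{|I|})$ is precisely the one the inductive hypothesis demands for the pair $(S'', S')$. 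Applying the hypothesis yields braidings $B_{I_2,J_2},\dots,B_{I_{|I|},J_{|I|}}$ conjugating $\gamma_{S''}$ to $\pm\gamma_{S'}$, and composing gives the stated product formula with an overall sign.

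The main obstacle — really the only non-routine point — is verifying that the intermediate indices stay consistent: one must check that at each stage the index $I_i$ being braided genuinely lies in the current set and $J_i$ genuinely lies outside it, so that the second (nontrivial) case of Eq.~(\ref{eq:action_on_maj_prod}) always applies and we never accidentally hit the trivial case $i,j \in S$ or $i,j \notin S$. This follows because the $I_i$ are distinct elements of $S \backslash S'$ and the $J_i$ are distinct elements of $S' \backslash S$, and each braiding only swaps one element in for one element out: after $m$ steps the current set is $(S \backslash \{I_1,\dots,I_m\}) \cup \{J_1,\dots,J_m\}$, which still contains $I_{m+1}$ (not yet removed, and $I_{m+1} \notin \{J_1,\dots,J_m\}$ since those are disjoint from $S$) and still excludes $J_{m+1}$ (not yet added, and $J_{m+1} \notin S$ to begin with). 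I would also remark that one does not even need the braidings to commute, but in fact $B_{I_i,J_i}$ and $B_{I_j,J_j}$ do commute whenever the four indices are distinct, which they are here, so the order of the product is immaterial; this is worth noting since it is used implicitly when the braidings are later treated as a \emph{set}. The precise overall sign can be tracked through the $i^{\binom{k}{2}}$ phase conventions but is immaterial for the application to uniformity, so I would leave it as $\pm$.
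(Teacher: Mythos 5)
Your proof is correct and follows essentially the same route as the paper, which simply asserts that the result "follows directly" from Eq.~(\ref{eq:action_on_maj_prod}); your induction, the verification that each $I_{m+1}$ remains in and each $J_{m+1}$ remains out of the intermediate index set, and the remark that the braidings commute are exactly the details the paper leaves implicit. Nothing is missing or extraneous.
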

\begin{proof} 
This follows directly from the transformation described in Eq. (\ref{eq:action_on_maj_prod}).
\end{proof}
As an example, consider the two Majorana observables with indices $S = \{1,2,3,4\}$ and $S' = \{2,4,6,8\}$.
A set of braids that can map $\gamma_1\gamma_2\gamma_3\gamma_4\mapsto\gamma_2\gamma_4\gamma_6\gamma_8$, up to signs, is $\{B_{1,6},B_{3,8}\}$.
Then, applying these transformations to $\gamma_1\gamma_2\gamma_3\gamma_4$ give us
\begin{equation}
    B_{1,6}B_{3,8}\cdot \gamma_1\gamma_2\gamma_3\gamma_4\cdot B_{3,8}^ \dagger B_{1,6}^{\dagger} = -\gamma_2\gamma_4\gamma_6\gamma_8.
\end{equation}

\begin{theorem}[Uniformity of Majorana observables]\label{thm:uniformity}
    For an n mode fermionic system, the set of degree--$k$ Majorana measurements is uniformly symmetric under braid transformations.
\end{theorem}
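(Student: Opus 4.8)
The plan is to establish uniformity by exhibiting, for any two outcomes $e_S$ and $e_{S'}$ with $S,S'\in\mathcal{S}_k$, an explicit element of the symmetry group $G$ that maps one to the other. By Definition~\ref{def:uniformity}, it suffices to produce a group element $g\in G$ whose action on the bundle sends the fibre over $S$ to the fibre over $S'$ in the correct way, i.e.\ $g(e_S)=e_{S'}$. The natural candidate for $g$ is the composition of braid generators supplied by Proposition~\ref{prop:braiding_recipe}, possibly supplemented by a single Majorana operator $\gamma_j$ to fix an overall sign. So the first step is to invoke the ``braiding recipe'' of Proposition~\ref{prop:braiding_recipe}: writing $I=S\setminus S'$ and $J=S'\setminus S$ (note $|I|=|J|$ since $|S|=|S'|=k$), the product $B:=\prod_{i=1}^{|I|}B_{I_i,J_i}$ conjugates $\gamma_S$ to $\pm\gamma_{S'}$ via Eq.~(\ref{eq:action_on_maj_prod}).

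Second, I would translate this operator-level statement into the language of the bundle $(\meo_k,\pi,\mathcal{S}_k)$. The braid element $b\in\mathcal{B}_{2n}$ with $\varphi(b)=B$ induces a permutation of outcomes: because conjugation by $B$ sends the projectors $\P_S(\pm)=\tfrac12(\id\pm\gamma_S)$ to $\P_{S'}(\pm)=\tfrac12(\id\pm\gamma_{S'})$ (with a possible swap $\pm\leftrightarrow\mp$ coming from the sign in Proposition~\ref{prop:braiding_recipe}), it maps the fibre $\pi^{-1}(S)$ onto $\pi^{-1}(S')$ and commutes with $\pi$, as required of a symmetry element. The compatibility condition $\M_S(e_S)=U_g\,\M_{g^{-1}[S]}(g^{-1}[e_S])\,U_g^{-1}$ is then exactly the statement that $U_g:=B$ (or $B\gamma_j$) conjugates the deformed effects $\M_{S'}=\eta\P_{S'}+(1-\eta)\id/2$ to $\M_S$, which holds because the depolarised part $\id/2$ is conjugation-invariant and the sharp part transforms as the projector does. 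The third step handles the sign: if $B\gamma_S B^\dagger=-\gamma_{S'}$, I append a Majorana operator $\gamma_j$ with $j\in S'$, which anticommutes with $\gamma_{S'}$ if $k$ has the appropriate parity, flipping the sign back; more carefully, one chooses $j$ so that $\gamma_j\gamma_{S'}\gamma_j^\dagger=-\gamma_{S'}$, using $\gamma_j\gamma_{S'}=(-1)^{k-1}\gamma_{S'}\gamma_j$ together with $\gamma_j^2=\id$, and includes $\gamma_j$ in the generating set $G$ exactly as flagged in Appendix~\ref{Asec:braids}. Thus the full symmetry group is the one generated by the $B_i$, their inverses, and a single $\gamma_j$, and the composite element $g$ realises $g(e_S)=e_{S'}$.

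The main obstacle I anticipate is not the operator manipulation — that is essentially immediate from Eq.~(\ref{eq:action_on_maj_prod}) — but the bookkeeping of which outcome goes to which, i.e.\ verifying that the induced map on $\meo_k$ is a genuine bundle automorphism over $\mathcal{S}_k$ and that the sign ambiguity in Proposition~\ref{prop:braiding_recipe} is absorbed consistently for all pairs simultaneously (uniformity only requires pairwise transitivity, so this is manageable, but one must be careful that the correction $\gamma_j$ lies in $G$ and does not spoil the braid relations). A secondary point is the edge case $S\cap S'\neq\varnothing$, where $|I|=|J|<k$; here Proposition~\ref{prop:braiding_recipe} still applies verbatim since $I$ and $J$ are disjoint from each other and the braidings $B_{I_i,J_i}$ only touch indices in $I\cup J$, leaving the common part $S\cap S'$ untouched, so no extra argument is needed. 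Assembling these observations yields Theorem~\ref{thm:uniformity}: the set of degree--$k$ Majorana measurements is uniformly symmetric under braid transformations (augmented by a single Majorana monomial to correct signs).
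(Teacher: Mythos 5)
Your main line of argument matches the paper's: invoke Proposition~\ref{prop:braiding_recipe} to build a product of braids $B=\prod_i B_{I_i,J_i}$ with $I=S\setminus S'$, $J=S'\setminus S$, conjugating $\gamma_S$ to $\pm\gamma_{S'}$, observe that this carries $\M_S(e_S)$ to $\M_{S'}(\pm e_S)$ because the depolarised part is conjugation-invariant, and then correct the residual sign. The bundle bookkeeping and the $S\cap S'\neq\varnothing$ case are handled correctly and are indeed unproblematic.

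The gap is in your sign-correction step. The theorem asserts uniformity \emph{under braid transformations}, but you fix the sign by appending a single Majorana operator $\gamma_j$, which is not an element of the spinor representation of $\mathcal{B}_{2n}$. The paper deliberately distinguishes the two groups: uniformity is claimed for braids alone, while only the rigidity-without-fixed-parity theorem enlarges the group to include a single Majorana. As written, your argument therefore proves uniformity for the larger group, not the stated one. The fix is to stay inside the braid group: for $k>1$ pick $i\in S'$ and $j\notin S'$ and apply $B_{i,j}$ \emph{twice}; since $B_{i,j}^2=-\gamma_i\gamma_j$ and
\begin{equation*}
\gamma_i\gamma_j\,\gamma_{S'}\,\gamma_j\gamma_i=(-1)^{k}(-1)^{k-1}\gamma_{S'}=-\gamma_{S'}\,,
\end{equation*}
this flips the outcome sign for every $k$ (for $k=1$ the paper instead applies $B_{i,j}$ three times, sending $\gamma_i\mapsto-\gamma_j$). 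With that one substitution your proof coincides with the paper's. A secondary remark: your $\gamma_j$ device does still suffice for the paper's downstream purpose (the SDP reduction only needs \emph{some} group under which the assemblage is both uniformly and rigidly symmetric, and the extended group serves in the unrestricted-parity case), but it does not establish the theorem as stated.
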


\begin{proof}
    For $k=1$, any two outcomes $e_i$, $e_j$, where $i,j\in[2n]$, are related by the braid transformation $B_{i,j}$ (cf. Eq. (\ref{eq:action_on_maj})). For $i<j$, the braid $B_{i,j}$ applied to the measurement $\M_i(e_i)$ gives
    \begin{equation}
        \M_j(e_j) =
        \begin{cases}
            B_{i,j} \M_i(e_i) B_{i,j}^\dagger, & \text{ if } e_j = e_i\\
            B_{i,j}B_{i,j}B_{i,j} \M_i(e_i) B_{i,j}^\dagger B_{i,j}^\dagger B_{i,j}^\dagger , & \text{ if } e_j = -e_i.
        \end{cases}
    \end{equation}
    A similar transformation holds for $i>j$ by applying the braiding $B_{j,i}$. 

    For $k>1$, Prop.~\ref{prop:braiding_recipe} gives us a recipe to map any degree--$k$ Majorana operators to any other degree--$k$ operator, up to signs.
    Applying this transformation to the measurement $\M_S(e_S)$, results in
    \begin{equation}
        \left(\prod_{i=1}^{|I|}B_{I_i,J_i}\right)\M_S(e_S) \left(\prod_{j=1}^{|I|}B_{I_j,J_j}\right)^\dagger = \M_{S'}(\pm e_S),
    \end{equation}
    where the sign $\pm$ depends on the sequence of braidings.
    To obtain the opposite outcome sign, we can apply a braid $B_{i,j}$ twice, where $i\in S'$ and $j\notin S'$. Then,
    \begin{equation}
        B_{i,j}B_{i,j}\M_{S'}(\pm e_S)B_{i,j}^\dagger B_{i,j}^\dagger = \M_{S'}(\mp e_S).
    \end{equation}
    Consequently, the set degree--$k$ Majorana measurements are uniformly symmetric.
\end{proof}

We now show that the degree--$k$ measurement assemblages are rigidly symmetric, as defined below. First, we prove rigidity of the Majorana observables via transformations from the braid group, assuming fixed parity of the system. Then, we prove rigidity without restricting the parity, by considering braid and single Majorana transformations. Note that the parity is measured by the operator $Q:=\gamma_{[2n]}=i^n\gamma_1\gamma_2\cdots\gamma_{2n}$.

\begin{definition}[Rigidity]\label{def:rigidity}
Consider the symmetry group $G$, and a unitary representation, $U:G\to \text{U}(2N)$. Let $G[e_S]=\{g\in G :g(e_S)=e_S\}$ be the stabilizer group of outcome $e_S\in\meo_k$, and $U(G[e_S])=\{U_g :g\in G[e_S]\}$ the corresponding set of unitaries. The set of degree--$k$ Majorana  measurements is \emph{rigidly symmetric} if, for each outcome $e_S\in\meo_k$, the commutant of $U(G[e_S])$ is given by $\text{Span}(\id,\Pi_{e_S})$, where $\Pi_{e_S}$ is a projection operator.
\end{definition}

We note that degree--$k$ Majorana observables are uniformly symmetric (cf. Theorem \ref{thm:uniformity}) and therefore all outcomes can be treated equivalently. We therefore consider the stabiliser group $G[S]=\{g\in G :g(S)=S\}$ of a single observable $\gamma_S$. Rigidity holds if the commutant of $U(G[S]) = \{ U_g \,|\, U_g \gamma_S U_g^\dagger = \gamma_S, \,  U_g\in U(G)\}$ is given by $\text{Span}(\id,\gamma_S)$.

\begin{theorem}[Rigidity with fixed parity]\label{th:rigid_parity}
    For an $n$ mode fermionic system with fixed parity, the set of degree--$k$ observables $\{\gamma_S\,|\, S\in\mathcal{S}_k \}$ is rigidly symmetric under braid transformations.
\end{theorem}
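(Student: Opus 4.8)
The plan is to turn the rigidity condition into an explicit commutant computation inside the complex Clifford algebra. By uniformity (Theorem~\ref{thm:uniformity}) every outcome is equivalent, so I would fix $S=\{1,\dots,k\}$ and reduce the claim to showing that the commutant of $U(G[S])$, restricted to a parity eigenspace, is $\mathrm{Span}(\id,\gamma_S)$, where $G[S]=\{g\in\mathcal{B}_{2n}:U_g\gamma_S U_g^{\dagger}=\gamma_S\}$ is the stabiliser of $\gamma_S$. One inclusion is free: $\mathrm{Span}(\id,\gamma_S)$ lies in the commutant because $\gamma_S$ commutes with every $U_g$, $g\in G[S]$, by definition. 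For the reverse inclusion it suffices to exhibit a subgroup of $G[S]$ whose image already has a small commutant.

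The subgroup I would use is $G_S$, generated by the braidings $B_{i,j}$ with both indices in $S$ together with those with both indices in $S^{c}:=[2n]\setminus S$. By~\eqref{eq:action_on_maj_prod} (equivalently, from~\eqref{eq:commutativity}, which gives $\gamma_S\gamma_i\gamma_j=\gamma_i\gamma_j\gamma_S$ whenever $\{i,j\}\subseteq S$ or $\{i,j\}\subseteq S^{c}$, together with $B_{i,j}=\tfrac{1}{\sqrt2}(\id-\gamma_i\gamma_j)$ from~\eqref{Aeq:braids}), these braidings stabilise $\gamma_S$, so $G_S\subseteq G[S]$. The key elementary fact is $B_{i,j}^{2}=\exp(-\tfrac{\pi}{2}\gamma_i\gamma_j)=-\gamma_i\gamma_j$; hence the unital algebra generated by $U(G_S)$ contains every quadratic monomial $\gamma_i\gamma_j$ with $\{i,j\}\subseteq S$ or $\{i,j\}\subseteq S^{c}$, and therefore contains the even subalgebras $\mathrm{Cl}^{0}(S)$ and $\mathrm{Cl}^{0}(S^{c})$ generated by $\{\gamma_i:i\in S\}$ and $\{\gamma_j:j\in S^{c}\}$.

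It then remains to compute the commutant of $\mathrm{Cl}^{0}(S)$ and $\mathrm{Cl}^{0}(S^{c})$ together, working in the complex Clifford algebra on $\gamma_1,\dots,\gamma_{2n}$ with its monomial basis $\{\gamma_T\}_{T\subseteq[2n]}$. From $\gamma_T\gamma_i\gamma_j=(-1)^{|T\cap\{i,j\}|}\gamma_i\gamma_j\gamma_T$ and linear independence of the monomials, an operator $A=\sum_T a_T\gamma_T$ commutes with $\gamma_i\gamma_j$ iff $a_T=0$ whenever $|T\cap\{i,j\}|=1$. Imposing this over all pairs inside $S$ forces $T\cap S\in\{\varnothing,S\}$, and over all pairs inside $S^{c}$ forces $T\cap S^{c}\in\{\varnothing,S^{c}\}$, so $T\in\{\varnothing,S,S^{c},[2n]\}$. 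Hence on the full space the commutant is $\mathrm{Span}(\id,\gamma_S,\gamma_{S^{c}},\gamma_{[2n]})=\mathrm{Span}(\id,\gamma_S,Q,\gamma_S Q)$, since $\gamma_S\gamma_{S^{c}}$ is proportional to $Q=\gamma_{[2n]}$. Combined with the free inclusion and $[Q,U_g]=0$ for all braids, this is in fact an equality for $\langle U(G[S])\rangle'$ on the full space, so no further control on how much larger $G[S]$ is than $G_S$ is needed.

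The last step, and the only real subtlety, is the passage to fixed parity. On a parity eigenspace $\mathcal{H}_{\pm}$ the operator $Q$ acts as $\pm\id$ and $\gamma_S Q$ as $\pm\gamma_S$, so the commutant collapses to $\mathrm{Span}(\id,\gamma_S)$; since $\gamma_S^{2}=\id$ by the canonical anticommutation relations, $\Pi_{e_S}=\tfrac12(\id+e_S\gamma_S)$ is a projection and we get exactly $\mathrm{Span}(\id,\Pi_{e_S})$, which is rigidity in the sense of Definition~\ref{def:rigidity}. For this to be meaningful $\gamma_S$ must preserve $\mathcal{H}_{\pm}$, and~\eqref{eq:commutativity} gives $\gamma_S Q=(-1)^{|S|}Q\gamma_S$, so it holds precisely when $k$ is even — exactly the regime relevant to the degree--$2k$ observables of Theorem~\ref{thm:inc_rob}; for odd $k$ the fixed-parity sector is degenerate for $\gamma_S$ and one instead enlarges the symmetry group by a single Majorana transformation, which is the content of the following theorem. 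I expect the only place requiring genuine care is this parity bookkeeping, together with checking that $G_S$ is large enough (via $B_{i,j}^{2}=-\gamma_i\gamma_j$) to pin the commutant down to the four-dimensional span.
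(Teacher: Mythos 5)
Your proposal is correct and follows essentially the same route as the paper: both restrict attention to the sub-stabiliser of braidings supported entirely inside $S$ or entirely inside $S^{c}$, compute its commutant to be $\mathrm{Span}(\id,\gamma_S,Q,Q\gamma_S)$, and let the fixed-parity restriction collapse this to $\mathrm{Span}(\id,\gamma_S)$. Your write-up is marginally more explicit in two places the paper glosses over, namely reducing the commutant computation from arbitrary operators to monomials via linear independence of the $\gamma_T$ (using $B_{i,j}^{2}=-\gamma_i\gamma_j$ to get the quadratic monomials into the generated algebra), and noting that for odd $k$ the operator $\gamma_S$ anticommutes with $Q$ and so does not preserve a parity sector, which is precisely why the odd case is deferred to the subsequent theorem with single-Majorana transformations.
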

\begin{proof}
    Consider the observable $\gamma_S$ and its stabilizer group $U(G[S])$ under braid transformations.
    We know that $[\gamma_S,B_{i,j}] = 0$, if the pair $i,j\in S$ or if $i,j\notin S$.
    Consequently, a subset of the stabilizer group $U(G[S])$ is $\mathcal{P}_S = \{B_{i,j}\,| \, i,j\in S \text{ or } i,j\notin S  \}$.
    To prove rigidity it is enough to consider at first only the commutant of $\mathcal{P}_S$.
    Let $\gamma_Y$ be a Majorana observable, such that, $[\gamma_Y,B_{i,j}] = 0,\, \forall B_{i,j}\in \mathcal{P}_S$.
    This restricts $Y$ to equal $S$, its complement or $Y=[2n]$, i.e. $Y\in\{S,[2n]\backslash S,[2n]\}$.
    Thus, the commutant of $\mathcal{P}_S$ is given by $\text{Span}(\id,\gamma_S, Q\gamma_S,Q)$, where $Q$ is the parity operator.
    However, since the fermionic system is restricted to the fixed-parity sector, the parity operator $Q$ and the operator $\gamma_S$ are equivalent to $\id$ and $Q\gamma_S$, respectively.
    Hence, the commutant of $U(G[S])$ is $\text{Span}(\id,\gamma_S)$.
\end{proof}

\begin{theorem}[Rigidity without fixed parity]
    For an $n$ mode fermionic system, the set of degree--$k$ observables $\{\gamma_S\,|\, S\in\mathcal{S}_k \}$ is rigidly symmetric under braid and single Majorana transformations.
    \end{theorem}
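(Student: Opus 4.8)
The plan is to build directly on the proof of Theorem~\ref{th:rigid_parity}. There it is shown --- as a pure operator identity on the $2^{n}$-dimensional Fock space, a computation that does \emph{not} invoke the fixed-parity assumption --- that the braid subgroup $\mathcal{P}_S=\{B_{i,j}\,|\,i,j\in S\text{ or }i,j\notin S\}\subseteq U(G[S])$ has commutant
\[
\text{Span}\bigl(\id,\gamma_S,Q\gamma_S,Q\bigr),\qquad Q:=\gamma_{[2n]}.
\]
Only the final step of that argument --- identifying $Q\equiv\id$ in the fixed-parity sector --- used the parity restriction. Hence, with no restriction on parity, the commutant of $U(G[S])$ is contained in $\text{Span}(\id,\gamma_S,Q\gamma_S,Q)$, and it remains only to eliminate the two unwanted directions $Q$ and $Q\gamma_S$ by enlarging the stabiliser with one single Majorana transformation.

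First I would select an index $j$ for which conjugation by $\gamma_j$ lies in the stabiliser, i.e. $\gamma_j\gamma_S\gamma_j=\gamma_S$. By the anticommutation rule~(\ref{eq:commutativity}), $\gamma_j\gamma_S\gamma_j=(-1)^{|S|-|S\cap\{j\}|}\gamma_S$, so it suffices to take $j\notin S$ when $k=|S|$ is even (possible whenever $k<2n$; the degenerate case $k=2n$ gives $\gamma_S=\pm Q$ and is already rigid under braids alone, since then the commutant of all $B_{i,j}$ is $\text{Span}(\id,Q)$) and $j\in S$ when $k$ is odd (possible since $k\ge 1$). The map $g_j$ on the bundle induced by $\gamma_j$ sends each outcome $e_{S'}$ to $\pm e_{S'}$ over the \emph{same} $S'$ --- flipping the sign exactly when $\gamma_j\gamma_{S'}\gamma_j=-\gamma_{S'}$ --- so it preserves the bundle structure and satisfies $\M_{S'}(e_{S'})=\gamma_j\,\M_{S'}(g_j^{-1}(e_{S'}))\,\gamma_j$; by the choice of $j$ it fixes the outcome $e_S$, hence $g_j\in G[e_S]$ and $\gamma_j\in U(G[S])$.

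Next I would record how $\gamma_j$ interacts with the remaining commutant generators. Using~(\ref{eq:commutativity}), $\gamma_j\gamma_{[2n]}=(-1)^{2n-1}\gamma_{[2n]}\gamma_j=-\gamma_{[2n]}\gamma_j$, so $\{\gamma_j,Q\}=0$, hence $\gamma_j Q\gamma_j=-Q$; combined with $\gamma_j\gamma_S\gamma_j=\gamma_S$ this gives $\gamma_j(Q\gamma_S)\gamma_j=(\gamma_jQ\gamma_j)(\gamma_j\gamma_S\gamma_j)=-Q\gamma_S$, i.e. $\{\gamma_j,Q\gamma_S\}=0$, while trivially $[\gamma_j,\id]=[\gamma_j,\gamma_S]=0$. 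Since $\id,\gamma_S,Q\gamma_S,Q$ are linearly independent, an operator $A=a_0\id+a_1\gamma_S+a_2Q\gamma_S+a_3Q$ satisfies $\gamma_jA\gamma_j=A$ only if $a_2=a_3=0$. Therefore the commutant of $\langle\mathcal{P}_S,\gamma_j\rangle\subseteq U(G[S])$ is exactly $\text{Span}(\id,\gamma_S)$. Because the reverse inclusion $\text{Span}(\id,\gamma_S)\subseteq$ commutant of $U(G[S])$ is immediate ($\id$ trivially, and $\gamma_S$ because $U_g\gamma_SU_g^\dagger=\gamma_S$ means $[U_g,\gamma_S]=0$), the sandwich $\text{Span}(\id,\gamma_S)\subseteq\mathrm{comm}(U(G[S]))\subseteq\mathrm{comm}(\langle\mathcal{P}_S,\gamma_j\rangle)=\text{Span}(\id,\gamma_S)$ forces equality, so $\{\gamma_S\,|\,S\in\mathcal{S}_k\}$ is rigidly symmetric under braid and single Majorana transformations.

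The obstacle here is conceptual rather than computational: one must notice that single Majorana conjugations \emph{by themselves} are far too weak --- the commutant of the set of outcome-stabilising $\gamma_j$'s is exponentially large --- and that the correct division of labour is to let the braids collapse the commutant onto the four-dimensional algebra generated by $\gamma_S$ and the global parity $Q$, after which a \emph{single} well-chosen $\gamma_j$ (anticommuting with $Q$, commuting with $\gamma_S$) removes the residual $Q$ and $Q\gamma_S$. The one technical point to check with care is that the commutant computation reused from Theorem~\ref{th:rigid_parity} is genuinely parity-independent, so that it applies before the parity sector is fixed.
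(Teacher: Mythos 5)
Your proof is correct and follows essentially the same route as the paper: reuse the parity-free commutant computation $\mathrm{Span}(\id,\gamma_S,Q\gamma_S,Q)$ for the braid stabiliser $\mathcal{P}_S$, then adjoin single-Majorana conjugations that fix $\gamma_S$ (with $j\notin S$ for even $k$, $j\in S$ for odd $k$) and anticommute with $Q$ to eliminate the $Q$ and $Q\gamma_S$ directions. The only cosmetic differences are that you use a single well-chosen $\gamma_j$ where the paper adjoins the whole set $\mathcal{O}_S$, and you spell out the degenerate case and the reverse inclusion explicitly.
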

\begin{proof}
    Consider the observable $\gamma_S$ and its stabilizer group $U(G[S])$ under braid and single Majorana transformations.
    In the proof of Theorem~\ref{th:rigid_parity} we showed that the commutant of $\mathcal{P}_S$ is $\text{Span}(\id,\gamma_S, Q\gamma_S, Q)$, where $\mathcal{P}_S\subset U(G[S])$ and the stabilizer group contains only braid transformations.
    Additionally, let us consider the stabilizer group which includes single Majorana transformations. First, we assume $|S|$ is even and let $\mathcal{O}_S = \{\gamma_{j}\,|\, j\in [2n]\backslash S\}\subset U(G[S])$.
    The observable $Q\gamma_S = \gamma_{[2n]\backslash S}$ does not commute with any unitary in $\mathcal{O}_S$, and the same can be said by the parity operator $Q$.
    Therefore, any operator $\gamma_Y$ that commutes with all unitaries in $\mathcal{P}_S$ and $\mathcal{O}_S$ must belong to $\text{Span}(\id,\gamma_S)$.
    Hence, the commutant of $U(G[S])$ is $\text{Span}(\id,\gamma_S)$ and the measurement assemblage is rigidly symmetric. If, on the other hand, $|S|$ is odd, we consider $\mathcal{O}_S = \{\gamma_{j}\,|\, j\in S\}\subset U(G[S])$ and the same argument follows.
\end{proof}

\section{Incompatibility robustness and the SYK model}\label{Asec:syk}

A \emph{section} $\ve:\mathcal{S}_k\rightarrow\Omega_k$ of the bundle $(\Omega_k,\pi, \mathcal{S}_k)$, as illustrated in Fig. \ref{fig:bundle-section}, is a map that assigns to every $S\in\mathcal{S}_k$ a particular outcome $e_S\in\{\pm 1\}$. In particular, $\pi[\ve(S)]=S\in\mathcal{S}_k$. We denote $\ve(\mathcal{S}_k):=\{\ve(S):S\in\mathcal{S}_k\}$ as the associated outcome string, which we will often abbreviate to $\ve$. We define $\Xi_k$ as the set of all sections of the bundle.

At some noise threshold, the measurements $\M^{\eta}_S(\pm)=\frac{1}{2}(\id\pm \eta\gamma_S)$ for all $S\in\mathcal{S}_k$, become jointly measurable, and their POVM effects constitute the marginals of a parent measurement $\G$ with outcomes in $\Xi_k$, i.e.,
\begin{equation}\label{Aeq:jm_sections}
\M^{\eta}_S(e_S)=\sum_{\ve \in\Xi_k}\delta_{\ve[\pi(e_S)],e_S}\G(\ve)\,.
\end{equation}
Note that this condition is equivalent (for finite outcome observables) to the definition of joint measurability provided in Eq. (\ref{eq:postprocessing}), as shown in \cite{ali09}. The minimum amount of noise required to reach this threshold quantifies the incompatibility of the set of measurements and is known as the incompatibility robustness,
\begin{equation}\label{Aeq:SDP}
\eta_k^*=\max\{\eta>0\,|\,\M^{\eta}_S \,\,\text{are compatible}\,\,  \forall S\in\mathcal{S}_k\}\,.
\end{equation}
This quantity can be solved by semidefinite programming (SDP), whereby Eq. (\ref{Aeq:SDP}) is optimized over the variables $\eta$ and Hermitian operators $\G(\ve)$, such that Eq. (\ref{Aeq:jm_sections}) is satisfied and $\G(\ve)\geq 0$ for all $\ve \in \Xi_k$. In general, the number of variables (matrices) in the optimization equals $|\Xi_k|$, which scales exponentially in the system size. The number of variables can be reduced significantly due to the symmetries from $G$ carrying over to the space of sections \cite{nguyen20}. While this simplifies the problem considerably, a simpler expression can be derived by studying the equivalent dual formulation:

\begin{align}\label{Aeq:SDP}
\eta_k^*&=\min_{X_{e_S}} \quad1+\sum_{e_S\in\Omega_k}\tr{X_{e_S}\M_S(e_S)}\nonumber\\
&\phantom{==} \text{s.t.}\quad 1+\sum_{e_S\in\Omega_k}\tr{X_{e_S}\M_S(e_S)}\geq\frac{1}{d}\sum_{e_S\in\Omega_k}\tr{\M_S(e_S)}\tr{X_{e_S}}\,,\nonumber\\
&\phantom{==}\qquad\sum_{e_S\in\Omega_k}\delta_{\ve[\pi(e_S)],e_S}X_{e_S}\geq 0\quad \forall \ve\in\Xi_k\,,
\end{align}
where the variables $X_{e_S}$ are $d\times d$ Hermitian matrices, with $d=2^n$. Importantly, the symmetries of the measurement assemblage also apply to the dual variables, i.e. $X_{e_S}=U_gX_{g^{-1}(e_S)} U_g^{-1}$ for all $g\in G$ and $e_S\in\Omega_k$. Consequently, for uniformly and rigidly symmetric assemblages, $X_{e_S}$ commutes with the unitaries of the stabiliser group $U(G[e_S])$, and therefore have the form $X_{e_S}=a\id+b\M_S(e_S)$. Furthermore, the uniformity of the assemblage ensures $a$ and $b$ are independent of the outcomes. As described in \cite{nguyen20}, it follows from the constraints of Eq. (\ref{Aeq:SDP}) that the inequalities are saturated for $a=2\mu_k/(d|\mathcal{S}_k|^2)$ and $b=-2/(d|\mathcal{S}_k|)$,
where
\begin{equation}\label{eq:norm}
\mu_k=\max_{\ve\in\Xi_k}{\no{\sum_{S\in\mathcal{S}_k} \M_S(\ve[S])}}_{\infty}\,,
\end{equation}
is the maximum eigenvalue of $\sum_S \M_S(\ve[S])$ among all sections $\ve\in\Xi_k$.
We can rewrite Eq. (\ref{eq:norm}) as
\begin{equation}
\mu_k=\frac{1}{2}\left(|\mathcal{S}_k|+\max_{\ve\in\Xi_k}{\no{H_k(\ve)}}_{\infty}\right)\,,
\end{equation}
where $H_k(\ve)=\sum_{S\in\mathcal{S}_{k}}e_{S}\gamma_S.$ After substituting $X_{e_S}=a\id+b\M(e_S)$ into the expression of $\eta_k^*$ from Eq. (\ref{Aeq:SDP}), we arrive at
\begin{equation}\label{Aeq:robustness}
\eta_k^*=\frac{2}{|\mathcal{S}_k|}\left(\mu_k-\frac{|\mathcal{S}_k|^2}{|\Omega_k|}\right)= |\mathcal{S}_{k}|^{-1}\max_{\ve\in\Xi_k}{\no{H_k(\ve)}}_{\infty}\,,
\end{equation}
where $|\Omega_k|=2|\mathcal{S}_{k}|$, in agreement with Lemma \ref{lem:robustness} of the main text.

Interestingly, the operator $H_k(\ve)$ is related to the Sachdev-Ye-Kitaev (SYK) model of degree--$k$ \cite{sachdev93,kitaev15,feng19,hastings21,herasymenko22}, although it is more common to assume the coefficients $e_S$ are Gaussian random variables. The spectral behavior of the SYK model has been studied extensively, and recent results by Hastings and O'Donnell \cite{hastings21} can be used to bound the incompatibility robustness $\eta^*_{2k}$, for $k\leq 5$ (see Lemma \ref{thm:odonnell} and Theorem \ref{thm:inc_rob}). In Appendix \ref{A:IR} we find a tight upper bound for $\eta^*_2$ by evaluating the maximum eigenvalue of $H_2(\ve)$ using the theory of tournament matrices.

\section{Incompatibility robustness: Quadratic Majoranas}\label{A:IR}

The incompatibility robustness of the assemblage of degree--$2$ Majorana observables can be found (via Lemma \ref{lem:robustness}) by evaluating the maximum eigenvalue of the class of operators
\begin{equation}
 H_2=i\sum_{i<j}e_{ij}\gamma_{i}\gamma_{j}=\frac{i}{2}\sum_{i,j=1}^{2n}e_{ij}\gamma_{i}\gamma_{j}\,,
 \end{equation} 
where $e_{ij}\in\{\pm 1\}$, $e_{ij}=-e_{ji}$, and $e_{ii}=0$. Every Hamiltonian of this form corresponds to a unique real $2n\times 2n$ \emph{antisymmetric} matrix $E=-E^{T}$, with off-diagonal entries $\pm1$. Such matrices are known as \emph{tournament} matrices \cite{mccarthy96,Wesp02}, and are commonly defined in terms of a round-robin tournament in which every pair of competitors meet once.
\begin{definition}\label{def:tournament}
Let $\mathcal{E}_{2n}$ denote the set of $2n\times 2n$ tournament matrices. A matrix $E\in\mathcal{E}_{2n}$ is a tournament matrix if its entries satisfy
\begin{equation}\label{eq:tournament}
e_{ij}=
\begin{cases}
1 &\mbox{if player $i$ defeats player $j$} \\
-1 &\mbox{if player $j$ defeats player $i$} \\
 0 &\mbox{if $i=j$} \,.
 \end{cases}
\end{equation}
\end{definition}

Since $E\in\mathcal{E}_{2n}$ is a real antisymmetric matrix, it can be written as $E=O^TBO$ where $O\in O(2n)$ and $B$ is a block diagonal matrix with $2\times 2$ blocks
\begin{equation}
B_{j}=\left( \begin{array}{cc}
           0 & \lambda_j \\
           -\lambda_j & 0
         \end{array} \right)\,,
\end{equation}
where $\lambda_j\in\mathbb{R}$ and $\pm i\lambda_j$ are the eigenvalues of $E$. We can define a new (rotated) set of $2n$ Majorana operators $\widetilde \gamma_j=\sum_k o_{jk}\gamma_k$, which anti-commute and are Hermitian. After some simple algebra, the operator $H_2$ can be written as
\begin{eqnarray}
H_2&=&\frac{i}{2}\sum_{i,j=1}^{2n}e_{ij}\gamma_{i}\gamma_{j}=\frac{i}{2}\sum_{i,j,k,\ell=1}^{2n}o_{ki}b_{k\ell}o_{\ell j}\gamma_i\gamma_j\\
&=&\frac{i}{2}\sum_{k,\ell=1}^{2n}b_{k\ell}\widetilde\gamma_{k}\widetilde\gamma_{\ell}=\frac{i}{2}\sum_{j=1}^{n}\lambda_j(\widetilde \gamma_{2j-1}\widetilde \gamma_{2j}-\widetilde \gamma_{2j}\widetilde \gamma_{2j-1})\\
&=&i\sum_{j=1}^n\lambda_j\,\widetilde \gamma_{2j-1}\widetilde \gamma_{2j}\,.
\end{eqnarray}
As $H_2$ is now a sum of commuting projection operators (with eigenvalues $\pm 1$), the eigenvalues of $H_2$ are of the form $\sum_j \pm \lambda_j$. Rewriting in terms of the fermionic operators which satisfy $\widetilde a_j^{\dagger}\widetilde a_j=\tfrac{1}{2}(\id-i\widetilde \gamma_{2j-1}\widetilde \gamma_{2j})$, we have
\begin{equation}
H_2=\sum_{j=1}^n\lambda_j(\id-2\widetilde a_{j}^{\dagger}\widetilde a_{j})\,.
\end{equation}
From $\widetilde a_j^{\dagger}\widetilde a_j\ket{m_j}=m_j\ket{m_j}$, with $m_j=0,1$, it follows that the eigenvalues of $H_2$ are given by
 \begin{equation}
\lambda_{\bf m}=\sum_{j=1}^{n}\lambda_j(-1)^{m_j}\,,
\end{equation}
where ${\bf m}=(m_1,m_2,\ldots,m_n)$ with $m_j\in\{0,1\}$. The maximum eigenvalue is reached at ${\bf m}^*$ where $m^*_j=1$ if $\lambda_j<0$ and $m^*_j=0$ if $\lambda_j\geq 0$ such that
\begin{eqnarray}
{\no {H_2}}_{\infty}=\sum_{j=1}^{n}|\lambda_j|\,.
\end{eqnarray}
The next result places a bound on the maximum eigenvalue of $H_2$, using the theory of tournament matrices \cite{Wesp02,Adiga10}.

\begin{lemma}\label{A:lem:hboundpair}
Let $E\in\mathcal{E}_{2n}$, as defined in Eq. (\ref{eq:tournament}), with eigenvalues $\pm i\lambda_j$. Then,
\begin{equation}\label{eq:skew_energy}
\sum_{j=1}^{n}|\lambda_j|\leq n\sqrt{2n-1}\,.
\end{equation}
The bound is tight if and only if there exists $E\in\mathcal{E}_{2n}$ such that $E+\id$ is a $2n\times 2n$ (real) Hadamard matrix.
\end{lemma}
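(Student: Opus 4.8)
The plan is to derive the inequality from a single trace identity combined with Cauchy--Schwarz, and then to obtain the equality case by rewriting the saturation condition in terms of the matrix $M:=E+\id$. First I would use that $E$ is a real antisymmetric $2n\times 2n$ matrix, so its spectrum consists of the purely imaginary conjugate pairs $\pm i\lambda_j$, $j=1,\dots,n$; consequently the eigenvalues of $-E^2$, counted with multiplicity, are the $\lambda_j^2$ each appearing twice, and therefore $\tr{-E^2}=2\sum_{j=1}^n\lambda_j^2$. On the other hand $\tr{-E^2}=\tr{E^TE}=\sum_{i,j}e_{ij}^2$, and since a tournament matrix has zero diagonal and all $2n(2n-1)$ off-diagonal entries equal to $\pm1$, this sum equals $2n(2n-1)$. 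Equating the two expressions gives the key identity $\sum_{j=1}^n\lambda_j^2=n(2n-1)$.

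Next I would apply Cauchy--Schwarz to the vectors $(|\lambda_1|,\dots,|\lambda_n|)$ and $(1,\dots,1)$ in $\real^n$:
\[
\sum_{j=1}^n|\lambda_j|\;\le\;\sqrt{n}\,\Bigl(\sum_{j=1}^n\lambda_j^2\Bigr)^{1/2}\;=\;\sqrt{n\cdot n(2n-1)}\;=\;n\sqrt{2n-1}\,,
\]
which is exactly the bound $(\ref{eq:skew_energy})$. Equality holds for a given $E$ precisely when all the $|\lambda_j|$ are equal, i.e., $\lambda_j^2=2n-1$ for every $j$, equivalently $-E^2=(2n-1)\,\id$.

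Finally I would translate this saturation condition via $M:=E+\id$. Using $E^T=-E$ one computes $M^TM=(\id-E)(\id+E)=\id-E^2$, so $-E^2=(2n-1)\,\id$ holds if and only if $M^TM=2n\,\id$; and since every entry of $M$ lies in $\{\pm1\}$ (the off-diagonal entries are the $e_{ij}=\pm1$, the diagonal entries are $1$), the relation $M^TM=2n\,\id$ is precisely the statement that $M$ is a real Hadamard matrix of order $2n$. Conversely, if some $E\in\mathcal{E}_{2n}$ has $E+\id$ Hadamard, running this computation backwards forces $\lambda_j^2=2n-1$ for all $j$, so that this $E$ attains the value $n\sqrt{2n-1}$; as that value upper bounds $\sum_j|\lambda_j|$ for \emph{every} $E\in\mathcal{E}_{2n}$, the bound $(\ref{eq:skew_energy})$ is tight over $\mathcal{E}_{2n}$ if and only if such an $E$ exists, which is the claim. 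The argument is elementary; the only points needing care are the bookkeeping of eigenvalue multiplicities of $-E^2$ and the observation that for a square matrix $M^TM=2n\,\id$ is equivalent to $MM^T=2n\,\id$, so the Hadamard characterization is unambiguous. The genuinely difficult question --- for which $n$ such an $E$ exists --- is the (skew-)Hadamard existence problem and is not part of this lemma.
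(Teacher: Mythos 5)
Your proof is correct and follows essentially the same route as the paper's: the trace identity $\tr{-E^2}=2n(2n-1)$ combined with Cauchy--Schwarz for the inequality, and the saturation condition $-E^2=(2n-1)\id$ translated into $M^TM=2n\,\id$ for $M=E+\id$ to get the Hadamard characterization. The only cosmetic difference is that you apply Cauchy--Schwarz to the $n$ values $|\lambda_j|$ rather than to all $2n$ eigenvalues, which is equivalent.
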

\begin{proof}
We follow a proof used to upper bound the skew energy of digraphs and tournament matrices \cite{Adiga10}. Suppose $E\in\mathcal{E}_{2n}$ has eigenvalues $\theta_k$, $k=1,\ldots,2n$. By Cauchy-Schwarz,
\begin{eqnarray}\label{eq:CS}
\left(\sum_{k=1}^{2n}|\theta_k|\right)^2&\leq& 2n\sum_{k=1}^{2n}|\theta_k|^2=2n\sum_{k=1}^{2n}(-\theta_k^2)\\
&=&2n\tr{-S^2}\,.
\end{eqnarray}
For any $E\in\mathcal{E}_{2n}$, we have $E^2=-EE^T$. Furthermore, each row vector of $E$ has squared norm $2n-1$, hence $(E^2)_{kk}=-(2n-1)$. It follows that $\tr{-E^2}=2n(2n-1)$ and, therefore,
\begin{equation}\label{eq:skew_bound}
\sum_{k=1}^{2n}|\theta_k|\leq 2n\sqrt{2n-1}\,.
\end{equation}
Collecting the eigenvalues into pairs $\pm i\lambda_j$ and counting one from each pair, we arrive at Eq. (\ref{eq:skew_energy}).

The Cauchy-Schwarz inequality in Eq. (\ref{eq:CS}) is saturated if and only if $|\theta_k|^2=\alpha$ for all $k$. Therefore, Eq. (\ref{eq:skew_bound}) is saturated if and only $|\theta_k|=\sqrt{2n-1}$ for $k=1,\ldots,2n$. As $E$ is a normal matrix, there exists a unitary $U$ such that $U^{\dagger}EU=D=\text{diag}(\theta_1,\ldots,\theta_{2n})$. Taking the conjugate transpose we obtain $U^{\dagger}E^TU=D^*$, and therefore $EE^T=(2n-1)\id$. It follows that $H=E+\id$ satisfies $HH^{T}=2n\id$, and therefore $H$ is a $2n\times 2n$ Hadamard matrix.
\end{proof}

Thus, we have ${\no {H_2}}_{\infty}\leq n\sqrt{2n-1}$ for all $E\in\mathcal{E}_{2n}$. Recalling Eq. (\ref{eq:inc_rob}) with $|\mathcal{S}_2|=\binom{2n}{2}=n(2n-1)$, we arrive at the bound on the incompatibility robustness $\eta_2^*\leq\frac{1}{\sqrt{2n-1}}$ for any $n$, as given in part $(i)$ of Theorem \ref{thm:inc_rob}.

The simplest example occurs when $n=2$, where there exists a tournament matrix
\begin{equation}
E=\left( \begin{array}{cccc}
           0 & 1 & 1 & 1\\
           -1 &  0 & 1 &  -1\\
           -1 &  -1 & 0 &  1\\
           -1 &  1 & -1 &  0
         \end{array} \right)\in\mathcal{E}_4\,,
\end{equation}
such that $E+\id$ is (skew)-Hadamard. It follows from Theorem \ref{thm:inc_rob} that the degree--2 incompatibility robustness for the two mode system is given by $\eta_2^*=\frac{1}{\sqrt{3}}$.

\section{Connections to classical shadows}\label{A:CS}

The classical shadow protocol yields a classical representation of an unknown quantum state via a randomized measurement strategy sampling from a unitary ensemble $\mathcal{U}$  \cite{huang20}. After classical post-processing of the outcomes of a computational basis measurement, one obtains an operator, known as the classical shadow, which is not positive semidefinite but yields the desired state in expectation. This provides a single-shot estimator for multiple properties of the quantum system, including for simultaneous estimation of expectation values for non-commuting observables.

The original formulation of classical shadows, based on sampling from ensembles of random Clifford $n$--qubit unitaries or single-qubit Clifford unitaries, has been adapted to fermionic systems \cite{Zhao21,wan22,low22,ogorman22}. In one instance \cite{Zhao21}, a fermionic classical shadow is constructed from sampling $U_O\in \mathcal{U}_{\text{FGU}}$ from the ensemble of fermionic Gaussian unitaries which are also $n$--qubit Clifford unitaries (i.e., the braiding group). For an $n$ mode fermionic system, a unitary $U_O\in\mathcal{U}_{\text{FGU}}$ and outcome $z\in\{\pm 1\}^n$ of the computational basis measurement, the fermionic classical shadow is given by
\begin{equation}\label{eq:fermion_cs}
\hat\rho_z^O=\frac{1}{2^n}\left(\id+\sum_{j=1}^n\lambda_{2j}^{-1}\sum_{R\in\mathcal{D}_{2j}}\bra{z}\gamma_R\ket{z}\sum_{S\in\mathcal{S}_{2j}}\det(O_{R,S})\gamma_{S}\right)\,,
\end{equation}
where $\lambda_{2j}=\binom{n}{j}/\binom{2n}{2j}$, and $O\in \text{Alt}(2n)$, with $\text{Alt}(2n)$ the alternating group of $2n\times 2n$ permutation matrices of determinant one. The shadow provides a single shot estimator $\hat\gamma_S=\tr{\gamma_S\hat\rho_z^O}$ for the expectation value of any even degree Majorana observable. The variance of this estimator, which controls the sample complexity, is bounded by the shadow norm $\no{\gamma_S}_{\text{FGU}}=\sqrt{\binom{2n}{2k}/\binom{n}{k}}$.

In recent work, several connections were made between joint measurements and classical shadows \cite{mcnulty22}. For instance, a classical shadow can be used to define a joint measurement and to provide a sufficient condition for the compatibility of an arbitrary set of measurements. We now apply the fermionic classical shadow of Eq. (\ref{eq:fermion_cs}) to derive a joint measurability condition on the set of degree--$2k$ Majorana observables. 

The randomized measurement procedure of classical shadows can be described by a single POVM $\G(z,U_O)=\frac{1}{|\mathcal{U}_{\text{FGU}}|}U_O^{\dagger}\kb{z}{z}U_O$, where $U_O\in\mathcal{U}_{\text{FGU}}$ and $z\in\{\pm1\}^n$. For an arbitrary POVM $\M(e)$, let $q(e|\M,z,U_O)=\tr{\M(e)\hat\rho_{z}^O}$, which, in expectation, yields the outcome statistics of $\M(e)$. To derive a sufficient condition for joint measurability of a set of degree--$2k$ Majorana observables, we require that $q(\pm|S,z,U_O)$ is a classical post-processing function for all unsharp measurements $\M_S(\pm)=\frac{1}{2}(\id\pm\eta_{2k}\gamma_S)$, with $S\in\mathcal{S}_{2k}$. This imposes the condition $\tr{\M_S(\pm)\hat \rho_z^O}\geq 0$ for all $S\in\mathcal{S}_{2k}$, which simplifies to $1\pm\eta_{2k}\tr{\gamma_S\hat \rho_z^O}\geq 0$. From Eq. (\ref{eq:fermion_cs}) we have,
\begin{align*}
\tr{\gamma_S\hat \rho_z^O}&=\frac{1}{2^n}\left(\tr{\gamma_S}+\sum_{j=1}^n\lambda_{2j}^{-1}\sum_{R\in\mathcal{D}_{2j}}\bra{z}\gamma_R\ket{z}\sum_{S'\in\mathcal{S}_{2j}}\det(O_{R,S'})\tr{\gamma_S\gamma_{S'}}\right)\,\\
&=\sum_{j=1}^n\lambda_{2j}^{-1}\sum_{R\in\mathcal{D}_{2j}}\bra{z}\gamma_R\ket{z}\sum_{S'\in\mathcal{S}_{2j}}\det(O_{R,S'})\delta_{S,S'}\\
&=\lambda_{2k}^{-1}\sum_{R\in\mathcal{D}_{2k}}\bra{z}\gamma_R\ket{z}\det(O_{R,S})\,,
\end{align*}
where we have used $\tr{\gamma_S}=0$ and $\tr{\gamma_S\gamma_{S'}}=2^n\delta_{S,S'}$. Consider, for example, $k=1$, such that $1\pm\eta_2\tr{\gamma_S\hat \rho_z^O}\geq 0$ becomes
\begin{equation}
1\pm\eta_2\lambda_2^{-1}\sum_{j=1}^n\bra{z}\gamma_{2j-1}\gamma_{2j}\ket{z}\det(O_{\{2j-1,2j\},S})\geq 0\,,
\end{equation}
for all $S\in\mathcal{S}_{2}$. Since $O\in \text{Alt}(2n)$, for a fixed $j\in[n]$, $\exists!S\in\mathcal{S}_{2}$ such that $\det(O_{\{2j-1,2j\},S})\neq 0$, i.e., $\det(O_{\{2j-1,2j\},S'})=0$ for all $S'\neq S$, and has unit determinant otherwise. Furthermore, $\forall j'\neq j$, $\det(O_{\{2j'-1,2j'\},S})=0$. Hence, $1+\eta_2\lambda_2^{-1}\geq 0$ and therefore $\eta_2 \leq 1/(2n-1)$. This argument generalises to any $k\in[n]$, such that the set of noisy observables $\M_S(\pm)$ are jointly measurable if $1-\eta_{2k}\lambda_{2k}^{-1}\geq 0$, i.e., $\eta_{2k}\leq \binom{n}{k}\binom{2n}{2k}^{-1}$.

\section{Two-observable marginals}\label{A:2-obs-marginals}

Consider a $2q$--local Hamiltonian, consisting of even Majorana monomials of degree at most $2q$, written as
\begin{equation}\label{eq:ham}
H=\sum_{k=1}^{q}\sum_{S\in \mathcal{S}_{2k}}\alpha_{S}\gamma_{S}\,,
\end{equation}
with $\alpha_S\in\mathbb{R}$. Our joint measurement strategy provides an unbiased estimator of $\tr{H\rho}$ given by
\begin{equation}\label{A:estimator}
\hat H=\sum_{k=1}^{q}\sum_{S\in \mathcal{S}_{2k}}\eta_S^{-1}\alpha_{S}e_{S}\,,
\end{equation}
where each $e_{S}\in\{\pm 1\}$ is the outcome associated with the unsharp measurement $\M_{S}(e_S)=\frac{1}{2}(\id+e_S\eta_S\gamma_S)$ obtained by classical post-processing of the parent POVM $\G^O$ defined in Eq. (\ref{eq:jm_gen}), or equivalently Eq. (\ref{eqA:equiv_jm}). Note that, for each observable $\gamma_S$, we assume a unique post-processing determined by the index set labeled $R(S)$ such that $\eta_S:=\eta_{R(S),S}=\max\{\eta_{R,S}\,|\,R\in\mathcal{D}_{2k}\}$, as described in Eq. (\ref{eq:optimal_eta_S}). For simplicity, we also assume all Majorana observables can be jointly measured by a single implementation of the parent POVM $\G^O$, i.e., without a randomization over multiple fermionic Gaussian unitaries. In expectation, the estimator satisfies
\begin{equation*}
\mathbb{E}[\hat H]=\sum_{S}\eta_S^{-1}\alpha_{S}\mathbb{E}[e_{S}]=\sum_{S}\alpha_{S}\tr{\gamma_{S}\rho}=\tr{H\rho}\,,
\end{equation*}
where the expectation value $\mathbb{E}$ is over the outcome statistics of the POVM $\G^O$ on the state $\rho$. To calculate the variance,
\begin{eqnarray}\label{A:var}
\var[\hat H]&=&\sum_{S,S'}\frac{\alpha_{S}\alpha_{S'}}{\eta_S\eta_{S'}}\mathbb{E}[e_{S}e_{S'}]-(\tr{H\rho})^2\,,
\end{eqnarray}
we evaluate $\mathbb{E}[e_{S}e_{S'}]$ for all pairs of observables $\gamma_S$ and $\gamma_{S'}$ with $|S|=2k$ and $|S'|=2k'$. Thus, we require the two-observable marginal of the parent POVM $\G^O$ of Eq. (\ref{eqA:equiv_jm}), given by
\begin{equation}\label{Aeq_2_obs_marg}
\M_{S,S'}(e_{S},e_{S'})=\sum_{{\bf q},\,{\bf x}}D(e_S|S,R,{\bf q},{\bf x})D(e_{S'}|S',R',{\bf q},{\bf x})\G^O({\bf q},{\bf x})\,.
\end{equation}
Here, and for the remainder of the appendix, we will simplify the notation $R(S)$ and $R(S')$ to $R$ and $R'$, respectively, when no ambiguity arises.

\subsection{Single Majoranas}

Before moving to the general setting, we consider observables written as linear combinations of single Majorana operators, such that $H=\sum_{\mu=1}^{2n}\alpha_{j}\gamma_{j}$, where $\alpha_{j}\in\mathbb{R}$. The estimator takes the form
\begin{equation}
\hat H=\sum_{j=1}^{2n}\eta^{-1}\alpha_{j}e_{j}=\sqrt{2n}\sum_{j=1}^{2n}\alpha_{j}e_{j}\,,
\end{equation}
where $\eta=\frac{1}{\sqrt{2n}}$ is the optimal (uniform) noise achieved by the parent POVM of Eq. (\ref{eq:jm_single}). The noisy joint measurement of $\gamma_{i}$ and $\gamma_{j}$, for $i,j\in[2n]$, is given by 
\begin{align*}
\M_{i,j}(e_{i},e_{j})&=\sum_{{\bf x}} D(e_{i}|{\bf x})D(e_{j}|{\bf x})\G({\bf x})=\sum_{\substack{{\bf x}\\ x_{i}=e_{i}\\x_{j}=e_{j}}}\G({\bf x})\\
&=\frac{1}{4}\left(\id+\frac{1}{\sqrt{2n}}(e_{i}\gamma_{i}+e_{j}\gamma_{j})\right)\,.
\end{align*}
It follows that for $i\neq j$, $\mathbb{E}[e_{i}e_{j}]=\sum_{e_{i},e_{j}}\tr{\M_{i,j}(e_{i},e_{j})\rho}e_{i}e_{j}=0$. Since $\mathbb{E}[e_{i}^2]=1$, it follows from Eq. (\ref{A:var}) that,
\begin{equation}
\var[\hat H]=2n\sum_{j=1}^{2n}\alpha_{j}^2-(\tr{H\rho})^2\leq 2n\sum_{j=1}^{2n}\alpha_{j}^2\,.
\end{equation}

\subsection{Arbitrary even degree Majoranas}

We now construct the two-observable marginal $\M_{S,S'}(e_{S},e_{S'})$, with outcomes $e_S,e_{S'}\in\{\pm 1\}$, for $S\in\mathcal{S}_{2k}$ and $S'\in\mathcal{S}_{2k'}$. Let $\tau=\text{sgn}(\det(O_{R,S}))$ and $\tau'=\text{sgn}(\det(O_{R',S'}))$ and denote $S\triangle S'=(S\cup S')\setminus(S\cap S')$ as the symmetric difference. For $R=\cup_{i=1}^kR_i\in\mathcal{D}_{2k}$, with $R_i\in\mathcal{D}_2$, let $q_{R}=\prod_{i=1}^k q_{R_i}$ such that $q_Rq_{R'}=q_{R\triangle R'}$. Furthermore, let $x_S=\prod_{j\in S}x_j$ with $x_j\in\{\pm 1\}$ such that $x_Sx_{S'}=x_{S\triangle S'}$. From Eq. (\ref{Aeq_2_obs_marg}) we have,
\begin{eqnarray}\label{eq:obs_marg_gen}
\M_{S,S'}(e_{S},e_{S'})&=&\sum_{{\bf q},\,{\bf x}}D(e_S|S,R,{\bf q},{\bf x})D(e_{S'}|S',R',{\bf q},{\bf x})\G^O({\bf q},{\bf x})\\
&=&\frac{1}{2^{3n}}\sum_{\substack{{\bf q},\,{\bf x}\\e_S=\tau x_Sq_R\\e_{S'}=\tau' x_{S'}q_{R'}}}\left(\id+\sum_{k=1}^n \sum_{Y\in\mathcal{D}_{2k}}q_Y\sum_{T\in\mathcal{S}_{2k}}x_T\det(O_{Y,T})\gamma_{T}\right)\,.\\
\end{eqnarray}

Notice that terms which do not contain either $x_Sq_R$, $x_{S'}q_{R'}$ or 
$q_{R\triangle R'}x_{S\triangle S'}$ are unconstrained (taking values $\pm 1$) and therefore cancel in the summation over ${\bf x}$ and ${\bf q}$. Hence, we are left with
\begin{multline*}
\M_{S,S'}(e_S,e_S)=\frac{1}{2^{3n}}\sum_{\substack{{\bf q},\,{\bf x}\\\tau x_Sq_R=e_S\\\tau' x_{S'}q_{R'}=e_{S'}}}(\id+q_Rx_S\det(O_{R,S})\gamma_S+q_{R'}x_{S'}\det(O_{R',S'})\gamma_{S'}\\
+q_{R\triangle R'}x_{S\triangle S'}\det(O_{R\triangle R',S\triangle S'})\gamma_{S\triangle S'}\delta_{|R\triangle R'|,|S\triangle S'|})\,.
\end{multline*}
Summing over ${\bf x}=(x_1,\ldots,x_{2n})\in\{\pm 1\}^{2n}$ and ${\bf q}\in\{\pm 1\}^n$, with the constraints $\tau x_Sq_R=e_S$ and $\tau' x_{S'}q_{R'}=e_{S'}$, yields
\begin{align}
\M_{S,S'}(e_S,e_{S'}) 
&= \frac{1}{4}\Bigl(
    \id
    + q_R x_S \det(O_{R,S})\,\gamma_S
    + q_{R'} x_{S'} \det(O_{R',S'})\,\gamma_{S'}
    + q_{R\triangle R'} x_{S\triangle S'} \det(O_{R\triangle R',S\triangle S'})\,\gamma_{S\triangle S'} \notag \\
&\qquad\qquad
    \times \delta_{|R\triangle R'|,\,|S\triangle S'|}
\Bigr) \notag \\
&= \frac{1}{4}\Bigl(
    \id
    + \frac{e_S}{\tau}\,\det(O_{R,S})\,\gamma_S
    + \frac{e_{S'}}{\tau'}\,\det(O_{R',S'})\,\gamma_{S'}
    + \frac{e_S e_{S'}}{\tau\tau'}\,\det(O_{R\triangle R',S\triangle S'})\,\gamma_{S\triangle S'} \notag \\
&\qquad\qquad
    \times \delta_{|R\triangle R'|,\,|S\triangle S'|}
\Bigr) \notag \\
&= \frac{1}{4}\Bigl(
    \id
    + e_S\,|\det(O_{R,S})|\,\gamma_S
    + e_{S'}\,|\det(O_{R',S'})|\,\gamma_{S'} 
    + \frac{e_S e_{S'}}{\tau\tau'}\,\det(O_{R\triangle R',S\triangle S'})\,\gamma_{S\triangle S'} \notag \\
&\qquad\qquad
    \times \delta_{|R\triangle R'|,\,|S\triangle S'|}
\Bigr).
\end{align}

It follows that,
\begin{equation}
\mathbb{E}[e_{S}e_{S'}]=\sum_{e_{S},e_{S'}}\tr{\M_{S,S'}(e_{S},e_{S'})\rho}e_{S}e_{S'}=\frac{1}{\tau\tau'}\det(O_{R\triangle R',S\triangle S'})\tr{\gamma_{S\triangle S'}\rho}\delta_{|R\triangle R'|,|S\triangle S'|}\,.
\end{equation}
Finally, the variance of the estimator is given by
\begin{equation}
\var[\hat H]=\sum_{S}\alpha^2_{S}\eta_S^{-2}+\sum_{S\neq S'}\delta_{|R\triangle R'|,|S\triangle S'|}\alpha_{S}\alpha_{S'}\frac{\det(O_{R\triangle R',S\triangle S'})}{\det(O_{R,S})\det(O_{R',S'})}\tr{\gamma_{S\triangle S'}\rho}-(\tr{H\rho})^2\,,
\end{equation}
in agreement with Prop. \ref{cor:var} in Sec. \ref{sec:hamiltonians}.

\end{document}